\documentclass[12pt]{article}
\usepackage{hyperref}
\usepackage{enumerate}
\usepackage{amsthm}
\usepackage{amsmath}
\usepackage{amssymb}
\usepackage{mathrsfs}
\usepackage{mathtools}
\usepackage{verbatim}
\usepackage[capitalise]{cleveref}
\usepackage{relsize}
\usepackage{xspace}
\usepackage{nicematrix}
\usepackage{multicol}
\usepackage{caption}
\usepackage{subcaption}
\usepackage{tabularx}

\usepackage{fullpage}

\allowdisplaybreaks

\theoremstyle{definition}
\newtheorem{definition}{Definition}[section]
\theoremstyle{plain}
\newtheorem{lemma}[definition]{Lemma}
\newtheorem{theorem}[definition]{Theorem}
\newtheorem{corollary}[definition]{Corollary}
\newtheorem{conjecture}{Conjecture}
\theoremstyle{remark}

\newtheorem{observation}[definition]{Observation}

\newcommand{\Oh}{\mathcal{O}}
\newcommand{\ostar}{\Oh^*}
\DeclareMathOperator*{\poly}{Poly}

\usepackage{macros}
\bibliographystyle{plainurl}

\title{Tight Algorithm for Connected Odd Cycle Transversal Parameterized by Clique-width}
\author{Narek Bojikian\hspace{2cm}Stefan Kratsch}

\begin{document}
\maketitle
\thispagestyle{empty}
\begin{center}
    \textbf{Abstract}
\end{center}
\begin{abstract}
 \small
 \noindent Recently, Bojikian and Kratsch [2023] have presented a novel approach to tackle connectivity problems parameterized by clique-width ($\cw$), based on counting small representations of partial solutions (modulo two). Using this technique, they were able to get a tight bound for the \Stp problem, answering an open question posed by Hegerfeld and Kratsch [ESA, 2023]. We use the same technique to solve the \Coctp problem in time $\ostar(12^{\cw})$. We define a new representation of partial solutions by separating the connectivity requirement from the 2-colorability requirement of this problem. Moreover, we prove that our result is tight by providing SETH-based lower bound excluding algorithms with running time $\ostar((12-\epsilon)^{\lcw})$ even when parameterized by linear clique-width. This answers the second question posed by Hegerfeld and Kratsch in the same paper.
\end{abstract}

\section{Introduction}

Parameterized complexity is a branch of complexity theory that studies the fine-grained complexity of a problem, by analyzing the dependence of the running time on different measures of input in addition to its size. Specifically, for NP-hard problems, one seeks algorithms with polynomial dependence on the input size. While not likely achievable in general, this becomes more realistic, when one allows exponential dependence on the introduced measurement $t$, hence yielding polynomial time algorithms when restricting the input to instances with constant value of $t$. In general, one seeks an algorithm with running time $f(t)n^c$ for a constant value of $c$ and some computable function $f$. Problems admitting an algorithm with such running time define the complexity class FPT, that builds the core of the study of parameterized complexity.

Some of the most prominent measures in this aspect, aside from the size of the output, have been structural parameters of graphs, namely parameters that reflect how well-structured a graph is. These measures, mostly bounded in sparse graphs or in graphs that admit a lot of symmetries, are usually accompanied by a corresponding structural decomposition of the graph, that describes the structure of the graph in a recursive manner. Such decompositions usually allow to build the graph using a handful of structure-preserving operations. Therefore, one can usually describe dynamic programming algorithms over these decompositions to solve different problems in FPT running time.

The most notable parameter in this area has probably been treewidth, a measure that sparks interest in different areas of theoretical computer science, not only parameterized complexity. Vaguely defined by the smallest value $k$, such that a given graph can be decomposed into single vertices by recursively removing separators of size at most $k$. FPT algorithms with single exponential dependence on the treewidth of the graph have long been known for different problems. However, one would still try to improve the exponential dependence on the parameter, by improving the base of the exponent in the running time. In 2010, Lokshtanov et al.\ \cite{DBLP:journals/corr/abs-1007-5450, DBLP:journals/talg/LokshtanovMS18} provided the first tight lower bounds for problems parameterized by treewidth, proving that known bases for some benchmark problems cannot be improved assuming the strong exponential time hypothesis (SETH), a widely used hypothesis for lower bounds in theoretical computer science, that, informally speaking, conjectures that the naive solution of the SAT problem, by testing all possible assignments, is essentially optimal. Lokshtanov et al.\ \cite{DBLP:journals/talg/LokshtanovMS18} provide reductions from the $d$-SAT problem to the given problem, resulting in a graph with bounded treewidth, which would imply the lower bound. This has inspired a series of SETH based lower bounds for structural parameters \cite{DBLP:conf/soda/CurticapeanM16, DBLP:journals/corr/abs-2210-10677/EsmerFMR22, DBLP:journals/jgaa/GeffenJKM20, DBLP:journals/siamdm/Lampis20, DBLP:conf/iwpec/HegerfeldK22, DBLP:conf/esa/HegerfeldK23}.

However, a special class of problems, called connectivity problems, vaguely defined by imposing connectivity constraints on the solution (e.g.\ \Cvcp), or by constraining the connectivity of the rest of the graph (e.g.\ \Fvsp) resisted single exponential running time algorithms for a long time. A major obstacle had been keeping track of different connectivity patterns in the boundary of the graph, representing different partial solutions in this graph. That seemed to necessitate a factor of $k^{\Oh(k)}$ in the running time. However, in 2011, Cygan et al.\ \cite{DBLP:journals/corr/abs-1103-0534/CyganNPPRW11, DBLP:journals/talg/CyganNPPRW22} came around this problem by introducing the \cnc technique. They used it, among others, to count the number of connected solutions of a problem modulo two in single exponential time. By using the isolation lemma \cite{DBLP:journals/combinatorica/MulmuleyVV87}, they showed that one can solve the underlying decision problem with high probability. Moreover, they proved that the resulting base of the exponent in the running time is tight assuming SETH.

Relying on the \cnc technique, tight bounds for connectivity problems were also derived for different parameters such as pathwidth \cite{DBLP:journals/jacm/CyganKN18}, cutwidth \cite{DBLP:conf/stacs/BojikianCHK23}, modular treewidth \cite{DBLP:conf/wg/HegerfeldK23} and clique-width \cite{DBLP:conf/esa/HegerfeldK23,DBLP:journals/corr/abs-2307-14264/BojikianK23}. In this paper, we focus on Clique-width, a graph parameter defined by the smallest value $k$, such that a given graph can be built recursively from the disjoint union of $k$-labeled graphs, by allowing bicliques between vertices of different labels, and relabeling vertices. Clique-width is a more general parameter than treewidth, since cliques have clique-width at most $2$, and unbounded treewidth, while graphs of treewidth $k$ have bounded clique-width (at most $2^{\Oh(k)}$) \cite{CourcelleO00, DBLP:journals/siamcomp/CorneilR05}.

The first single exponential upper bounds for connectivity problems parameterized by clique-width, to the best of our knowledge, were given by Bergougnoux and Kant\'e \cite{DBLP:journals/tcs/BergougnouxK19}. They provide algorithms that, as mentioned in the paper, are optimal under the ETH Hypothesis, which only excludes algorithms with running time $2^{o(k)}\poly(n)$, but does not show the optimal base of the exponent. The first tight results for connectivity problems were given by Hegerfeld and Kratsch \cite{DBLP:conf/esa/HegerfeldK23}, where they provided algorithms, based on the \cnc technique, and matching SETH-based lower bounds for the \Cvcp problem and the \Cdsp problem parameterized by clique-width. They left open however, the tight complexity of other connectivity problems such as \Stp and the \Coctp problem. A major obstacle towards finding the the optimal base of the exponent, as they mention, is a gap between the rank of a matrix defined by the interactions of partial solutions of the underlying problem (called \emph{the consistency matrix}), and its largest triangular submatrix. Except for a handful examples (\textsc{Chromatic Number}[$\ctw$] \cite{DBLP:journals/tcs/JansenN19}, and \Fvsp[$\cw$] \cite{DBLP:conf/stacs/BojikianCHK23}), the rank of such matrices has usually been used directly to derive the optimal running time, while current lower bound techniques usually match the size of a largest triangular submatrix.

Very recently, Bojikian and Kratsch \cite{DBLP:journals/corr/abs-2307-14264/BojikianK23} closed this gap for the first of these problems, by providing a faster algorithm for the \Stp problem, based on small representation of connectivity patterns. They showed that one can count the number of these representations modulo two, in time proportional to the size of a largest triangular submatrix of the corresponding matrix. Using the isolation lemma, they solved the underlying decision problem in the same running time, matching known lower bounds for this problem. In order to do so, they introduced the notion of action-sequences to distinguish different representations of the same solution.

\subparagraph*{Our results.}

We follow the approach of Bojikian and Kratsch to provide an optimal (modulo SETH) algorithm for the \Coctp problem parameterized by clique-width. We show that this problem can be solved in time $\ostar(12^k)$, for $k$ the clique-width of the underlying graph. Namely, we prove the following theorem:

\begin{theorem}\label{theo:upper-bound}
    There exists a one-sided error Monte-Carlo algorithm (only false negatives), that given a graph $G$ together with a $k$-expression of $G$, and a positive integer $\target$, solves the \Coctp\ problem in time $\ostar(12^k)$, and outputs the correct answer with probability at least one half.
\end{theorem}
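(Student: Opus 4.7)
The plan is to combine the cut-and-count technique of Cygan et al.\ with the small-representation framework of Bojikian and Kratsch \cite{DBLP:journals/corr/abs-2307-14264/BojikianK23}, implemented as a dynamic program over the given $k$-expression.

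I would first fix a target size $\target$ and a root vertex $r$ forced to lie in the solution. A partial solution at a node of the $k$-expression classifies each vertex introduced so far into one of four types: $W$ or $B$ (the two sides of the bipartition of $G-S$), and $D_0$ or $D_1$ (the two sides of a cut of $S$). Applying the isolation lemma to obtain, with constant probability, a unique minimum-weight connected OCT of size $\target$, the cut-and-count identity reduces existence to the parity of the weighted count of valid $(W,B,D_0,D_1)$-classifications with $r$ forced into $D_0$.

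The heart of the argument is a DP that at each node of the $k$-expression stores, for each configuration of the $k$ labels, the parity of the weighted number of classifications realizing it. For each label $i$, I would encode its configuration not as one of the $2^{4}=16$ subsets of $\{W,B,D_0,D_1\}$ but as one of $12$ \emph{representations} obtained by decoupling the bipartite-coloring part from the cut-of-$S$ part, so that the full state space at each node has size $12^k$. The transitions for introducing a labeled vertex, disjoint union, and relabel are direct; the bijoin $\eta_{i,j}$ is the interesting case, as it simultaneously forbids $W$--$W$, $B$--$B$, and $D_0$--$D_1$ edges between the two label classes. Using the product structure of the representations, each bijoin transition can be implemented in time $\poly(k)\cdot 12^k$.

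The main technical obstacle, as in the treatment of \Stp in \cite{DBLP:journals/corr/abs-2307-14264/BojikianK23}, is to justify that counting representations modulo $2$ yields the same parity as counting actual classifications. I would follow the action-sequence formalism of Bojikian and Kratsch: assign to each classification a canonical sequence of operations producing its representation along the $k$-expression, and show that after each bijoin the preimage of every representation has odd weighted size modulo $2$. This is precisely the point where the choice of $12$ representations (and not the naive $16$) must be justified: one must verify that under feasible joins the ``missing'' signatures pair up into canceling contributions over $\mathbb{F}_2$, so that no information is lost when the larger state space is collapsed to the smaller one. Once this is established, running the DP to the root of the expression and repeating over $\Oh(\log n)$ random weight functions yields the claimed $\ostar(12^k)$ one-sided Monte Carlo algorithm.
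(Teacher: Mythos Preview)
Your proposal has the right destination but the wrong vehicle. You frame the algorithm as cut-and-count with a $(W,B,D_0,D_1)$ classification and then a ``collapse from $16$ to $12$ representations,'' but this is precisely the step that does \emph{not} go through. Decoupling the bipartite-coloring part from the cut-of-$S$ part gives $4\times 4=16$ per label, not $12$; the gap between the rank-$3$ connectivity part and the $4$-state cut-and-count encoding is exactly what Hegerfeld and Kratsch identified as the obstacle, and no cancellation argument over $\mathbb{F}_2$ for cut-and-count states is known to close it. The paper does not use a cut $(D_0,D_1)$ of $S$ at all. Instead it separates the two requirements: the $2$-colorability of $G\setminus S$ is tracked by a coloring $c\colon[k]\to\{\noc,\black,\white,\bw\}$ ($4$ states per label), while the connectivity of $S$ is tracked by a \emph{CS-pattern}, which records for each label only whether it is absent from the solution, present as a singleton, or present in the zero-set ($3$ states per label). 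The product $3\times 4=12$ is where the base comes from, and the hard work is showing that $3$ connectivity states per label suffice, which is done via the action-sequence and $\parrep$ machinery of \cite{DBLP:journals/corr/abs-2307-14264/BojikianK23}, not by collapsing cut-and-count states.

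A second, related gap: you isolate \emph{solutions} with the isolation lemma and then count $(W,B,D_0,D_1)$-classifications. The paper instead isolates \emph{action-sequences} (representations), because the DP counts action-sequences, and a single connected odd cycle transversal is typically represented by many action-sequences. Isolating at the solution level would not make the parity count odd. Concretely, the weight function $\weightf$ lives on $\wnodes\times[4]$ and is summed along an action-sequence; the algorithm then checks whether $T[r,\target,\weight][([0],c)]=1$ for some $\weight$ and $c$, which by the correctness lemmas happens (with probability $\geq 1/2$) exactly when a connected OCT of size $\target$ containing the guessed root $v_0$ exists.
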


Our algorithm counts the number of representations of connected solutions modulo 2. We use the isolation lemma \cite{DBLP:journals/combinatorica/MulmuleyVV87}, to isolate a single representation of a single solution with high probability. We make use of  fast convolution methods over power lattices \cite{DBLP:journals/talg/BjorklundHKKNP16,DBLP:conf/esa/HegerfeldK23,DBLP:conf/stacs/BojikianCHK23} and other algebraic techniques.

Moreover, we prove that our running time is tight under SETH. In order to achieve this we follow a long line of SETH based lower bounds for structural parameters \cite{DBLP:journals/talg/LokshtanovMS18,DBLP:journals/jacm/CyganKN18,DBLP:conf/stacs/GroenlandMNS22,DBLP:journals/talg/CyganNPPRW22,DBLP:conf/stacs/BojikianCHK23,DBLP:journals/siamdm/Lampis20}, by providing a reduction from the $d$-SAT problem preserving the value of the parameter.

Our lower bound actually excludes faster algorithms for problems parameterized by \emph{linear clique-width} \cite{DBLP:journals/tcs/AdlerK15,DBLP:journals/dam/HeggernesMP12,DBLP:journals/tcs/HeggernesMP11a,DBLP:journals/tcs/GurskiW05}, a parameter lower bounded by clique-width. This shows that the claimed base is optimal for both clique-width and linear clique-width assuming SETH.
A similar approach of more restrictive parameterization for lower bounds (pathwidth) compared to the upper bound parameter (treewidth) can bee seen in \cite{DBLP:journals/talg/CyganNPPRW22}.
Using this reduction, we prove the following theorem:

\begin{theorem}\label{theo:lower-bound}
    Given a graph $G$ of linear clique-width at most $k$, assuming SETH, there exists no algorithm that solves the \Coctp problem in time $\ostar\big((12-\varepsilon)^k\big)$ even when a linear $k$-expression of $G$ is provided with $G$.
\end{theorem}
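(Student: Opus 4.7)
The plan is a Lokshtanov--Marx--Saurabh style reduction from $d$-SAT (for $d$ chosen large as a function of $\varepsilon$) to \Coctp, producing a graph $G$ together with a linear $k$-expression of width $k = n/\log_2 12 + \Oh(1)$ and a target value $\target$, such that $G$ admits a connected odd cycle transversal of size at most $\target$ if and only if the input formula is satisfiable. An $\ostar((12-\varepsilon)^k)$ algorithm would then yield an $\ostar((2-\varepsilon')^n)$ algorithm for $d$-SAT for every fixed $d$, refuting SETH.

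The construction uses two kinds of gadgets arranged along a horizontal ``spine'' so that the expression stays linear. First, for each of the $k$ columns, a \emph{state gadget} whose twelve cheapest boundary patterns are in bijection with the twelve label-states of the dynamic program of \Cref{theo:upper-bound}; concatenated across a column, these states encode an assignment to a block of $\lfloor \log_2 12 \rfloor$ SAT variables. The twelve states combine the three local vertex choices (in the transversal, coloured $0$, coloured $1$) with the finer connectivity and colour flags that the upper bound uses to separate the connectivity requirement from the $2$-colourability requirement of \Coctp. Second, for each clause $C$, a constant-size \emph{clause gadget} attached by bicliques to the columns encoding the variables of $C$; the gadget admits a cheap completion precisely when the chosen twelve-state pattern corresponds to an assignment satisfying~$C$. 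To make every clause reachable while staying inside a linear expression, the plan is to duplicate each state gadget along the spine and propagate its chosen state via \emph{copy gadgets}, so that every clause gadget is inserted next to a local copy of each of its $d$ variables.

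Correctness splits into the usual two directions. In the forward direction, a satisfying assignment selects the corresponding cheap partial solution in every state, copy, and clause gadget, and one verifies that the total cost equals~$\target$ and that the union is both connected inside the transversal and $2$-colourable outside it. In the backward direction, a global size argument forces every gadget to attain its minimum; the one-to-one correspondence between the twelve cheapest patterns and the twelve label-states then recovers a unique assignment, which must satisfy every clause thanks to the clause gadgets. The linear clique-width bound is obtained by building the expression left to right along the spine, using $\Oh(1)$ labels per column for the state and copy gadgets and reusing them for consecutive clause gadgets after relabelling, so that at any point at most $n/\log_2 12 + \Oh(1)$ labels are active.

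The main obstacle, and the only truly non-routine step, is designing the state and clause gadgets so that the resulting $12 \times 12$ interaction matrix between a single column and a single clause-variable slot contains a full-rank, essentially identity-like submatrix. A weaker interaction that yields only a triangular submatrix of smaller size would miss the claimed base $12$ and reproduce exactly the rank-versus-triangular-submatrix gap that the upper bound of \cite{DBLP:journals/corr/abs-2307-14264/BojikianK23} was designed to close. The clause gadget must therefore simultaneously exploit both constraints of \Coctp, the $2$-colourability of the non-transversal part and the connectivity of the transversal, to individually force each of the twelve states, so that every state can be isolated by an appropriate conjunction of clauses. Once such gadgets are exhibited, the size bookkeeping, the linear layout of the spine, and the parameter counting are all routine adaptations of known SETH-based templates for structural parameters.
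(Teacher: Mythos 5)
The overall strategy—reducing from $d$-SAT, encoding blocks of variables as tuples over a $12$-element state set, arranging gadgets along a spine for a linear expression, and recovering the base $12$—matches the paper. But the proposal goes wrong at the one step you explicitly flag as ``the only truly non-routine step,'' and the wrong turn is precisely the rank-versus-triangular issue you mention.

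You propose to ``duplicate each state gadget along the spine and propagate its chosen state via \emph{copy gadgets}.'' An exact copy gadget would have to force the state of gadget $j+1$ to equal the state of gadget $j$, i.e.\ an identity $12\times 12$ interaction across a boundary of constant size. That is exactly what one \emph{cannot} get here; if an identity interaction were available, the rank/triangular-submatrix gap you invoke would not be an issue in the first place. The paper's construction deliberately does not attempt an exact copy. Each path gadget has a carefully chosen exit signature $\nxt(s)$ with the property that, under the biclique between exit and entry vertices, the biclique constraints only force a \emph{monotone} transition: if consecutive path gadgets have states $s_i$ and $s_{i'}$, then $i'\le i$ (Lemma~\ref{lem:lb-states-transition-direction}). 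This is a triangular, not full-rank, interaction. To turn this weak guarantee into a coherent assignment, the paper makes each path sequence $c=(11\cdot n'+1)\cdot m$ gadgets long and uses a pigeonhole argument: since there are only $12$ states, each path sequence can change state at most $11$ times, so among the $11\cdot n'+1$ sections some section sees no state change on any of the $n'$ path sequences, and that section is where the clause gadgets are guaranteed to read a consistent assignment. This monotone-plus-pigeonhole mechanism is the key idea missing from your plan, and without a substitute for it ``copy gadgets'' is a dead end. Relatedly, you place the hard $12\times 12$ interaction between a column and a clause; in the paper the clause side is the easy part (handled by a decoding gadget of $12^t$ deletion pairs and a small odd cycle), and the delicate $12\times 12$ interaction lives between \emph{consecutive path gadgets}.

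Two smaller points. First, the clique-width bound cannot be $k=n/\log_2 12+\Oh(1)$: grouping variables into blocks of size $t_0$ and rounding $t=\lceil t_0\log_{12}2\rceil$ gives $k\le n(\log_{12}2+1/t_0)+\Oh(1)$, and it is $t_0$ (a constant depending on $\varepsilon$), not $d$, that must be sent to infinity to absorb the slack; $d$ also grows via SETH, but for an independent reason. Second, the paper needs two distinct types of gadgets beyond the path gadgets: decoding gadgets (one per bundle, $12^t$ deletion pairs each, translating a $t$-tuple of local states into a single assignment choice) and clause gadgets (odd cycles reading off the decoding gadgets). Merging these would lose the clean budget-tightness accounting (Lemma~\ref{lem:lb-b-tight}) that drives the backward direction.
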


\subparagraph*{Further related work.}

Clique-width was first studied by Courcelle and Olariu \cite{CourcelleO00} showing that problems expressible in $\msoOne$ can be solved in FPT time when parameterized by clique-width. However, while the algorithm has linear dependence on the input size, it could have a non-elementary dependence on the parameter value $k$, and hence, is probably not tight for most relevant problems. Espelage et al.\ \cite{DBLP:conf/wg/EspelageGW01} have provided XP algorithms for some problems not expressible in $\msoOne$ logic. Nonetheless, some problems have been proven to be para-NP-hard when parameterized by clique-width, and hence probably do not admit a polynomial time algorithm on graphs of bounded clique-width, such as the disjoint paths problem \cite{DBLP:journals/tcs/GurskiW06}.

Tight bounds for problems parameterized by treewidth were first found by Lokshtanov et al.\ \cite{DBLP:journals/talg/LokshtanovMS18}, and were followed by series of tight bounds for structural parameters. Iwata and Yoshida \cite{DBLP:conf/esa/IwataY15} have proven that vertex cover has the same base parameterized by treewidth and clique-width, proving one of the earliest tight bounds for problems parameterized by clique-width. Further tight bounds were obtained for counting perfect matchings, graph coloring and other problems \cite{DBLP:conf/soda/CurticapeanM16,DBLP:journals/dam/KatsikarelisLP19,DBLP:journals/siamdm/Lampis20,DBLP:conf/icalp/GanianHKOS22}.

As mentioned earlier, the \cnc technique, introduced by Cygan et al.\ \cite{DBLP:journals/talg/CyganNPPRW22}, mainly counts the number of partial solutions (or other predefined objects) modulo two. Hence, resulting algorithms for the corresponding decision problems are inherently randomized. Bodlaender et al.\ \cite{DBLP:journals/iandc/BodlaenderCKN15} provided a rank-based approach to solve connectivity problems parameterized by treewidth in deterministic single exponential running time using Gauss elimination. However, this derandomization came at the cost of a factor $\omega$ (the matrix multiplication factor) in the exponent of the running time, resulting in non-optimal running times. Both techniques have proven useful to derive faster algorithms for connectivity problems over different parameters. While the latter approach was used to provide single exponential algorithms for connectivity problems parameterized by clique-width \cite{DBLP:journals/tcs/BergougnouxK19}, the former was used to prove tight bounds for connectivity problems parameterized by different parameters such as cutwidth \cite{DBLP:conf/stacs/BojikianCHK23} and modular treewidth \cite{DBLP:conf/wg/HegerfeldK23}.

We use fast convolution methods in order to get a tight algorithm. Such methods are usually used to combine different partial solutions in different parts of the graph along a boundary set efficiently. Fast convolution over the subset lattice was introduced by Björklund et al.\ \cite{DBLP:conf/stoc/BjorklundHKK07} giving rise to the fast subset convolution technique. Using this technique, the authors showed how to solve the \Stp problem parameterized by the number of terminals more efficiently. Fast subset convolution, and variations thereof, have since turned a standard technique to design dynamic programming algorithms for different problems over a tree decomposition of a graph \cite{DBLP:conf/esa/RooijBR09,DBLP:journals/talg/CyganNPPRW22} (mainly to handle join nodes). A more general form of convolution (over join-lattices) was used by van Rooij \cite{DBLP:conf/csr/Rooij21} to provide faster algorithms for the class of \textsc{$[\sigma,\rho]$-Dominating Set} problems parameterized by treewidth.

Björklund et al.\ \cite{DBLP:journals/talg/BjorklundHKKNP16} showed that one can compute the so-called join-product, a standard convolution over lattices, more efficiently, if the underlying lattice admits a small number of so-called irreducible elements. Hegerfeld and Kratsch \cite{DBLP:conf/esa/HegerfeldK23}, and Bojikian and Kratsch \cite{DBLP:journals/corr/abs-2307-14264/BojikianK23} used this result to design optimal algorithms (modulo SETH) for connectivity problems parameterized by clique-width.

\subparagraph*{Structure of this work.}

In \cref{sec:preliminaries} we provide preliminaries and introduce some notation. In \cref{sec:pats} we define the notion of patterns, representation and pattern operations. In \cref{sec:algo} we present the main result of this paper. We describe the dynamic programming algorithm, and prove its running time. In \cref{sec:acseq} we define action-sequences, and use them to prove the correctness of the algorithm. Finally, in \cref{sec:lb} we present our SETH based lower bound, proving the tightness of our algorithm. We conclude in \cref{sec:conclusion} with open problems and final thoughts.

\section{Preliminaries}\label{sec:preliminaries}

\subparagraph{Graphs, clique expressions and odd cycle transversals.}

For a natural number $k$, we denote by $[k] = \{1,2,\dots k\}$ the set of natural numbers smaller than or equal to $k$, and $[k]_0 = [k]\cup\{0\}$.
In this work we deal with undirected graphs only. A \emph{labeled graph} is a graph $G=(V, E)$ together with a \emph{labeling function} $\lab\colon V\rightarrow \mathbb{N}$. We usually omit the function $\lab$ and assume that it is given implicitly when defining a labeled graph $G$. We define a \emph{clique-expression} $\mu$ as a well-formed expression that consists only of the following operations on labeled graphs:
\begin{itemize}
    \item \emph{Introduce} vertex $i(v)$ for $i\in\mathbb{N}$. This operation constructs a graph containing a single vertex and assigns label $i$ to this vertex.
    \item The \emph{union} operation $G_1 \union G_2$. The resulting graph consists of the disjoint union of the labeled graphs $G_1$ and $G_2$.
    \item The \emph{relabel} operation $\relabel{i}{j}(G)$ for $i,j \in \mathbb{N}$. This operations changes the labels of all vertices labeled $i$ in $G$ to the label $j$.
    \item The \emph{join} operation $\add i j (G)$ for $i,j\in\mathbb{N}, i\neq j$. The constructed graph results from $G$ by adding all edges between the vertices labeled $i$ and the vertices labeled $j$, i.e.
    \[\add{i}{j}(G) = \Big(V, E \cup \big\{\{u, v\}\colon\lab(u)=i \land \lab(v)=j\big\}\Big).\]
\end{itemize}
We denote the graph resulting from a clique-expression $\mu$ by $G_{\mu}$, and the constructed labeling function by $\lab_{\mu}$. We associate with a clique-expression $\mu$ a syntax tree $\syntaxtree_{\mu}$ in the natural way, and associate with each node $x\in V(\syntaxtree)$ the corresponding operation. Let $\nodes_{\mu} = V(\syntaxtree_{\mu})$. We omit $\mu$ when clear from the context.

Let $x\in \nodes$. The subtree rooted at $x$ induces a subexpression $\mu_x$. We define $G_x = G_{\mu_x}$, $V_x = V(G_x)$, $E_x=E(G_x)$, $\lab_x = \lab_{\mu_x}$, $\syntaxtree_x = \syntaxtree_{\mu_x}$, and $\nodes_x = \nodes_{\mu_x}$.
 We also define the set $\cnodes_x$ as the set of all introduce nodes in $\syntaxtree_x$, and $\wnodes_x$ as the set of all introduce and join nodes in $\syntaxtree_x$.
Given a set $S\subseteq V$, we define $S_x = S\cap V_x$. Also for a mapping $g:S\rightarrow U$ to some set $U$, we denote by $g_x$ the mapping $g_{|S_x}$, the restriction of $g$ to vertices in $S\cap V_x$.

We call a clique-expression $\mu$ \emph{linear}, if for each union node $x$ with children $x_1$ and $x_2$, i.e.\ $\mu_x = \mu_{x_1}\union \mu_{x_2}$, it holds that $\mu_{x_2}$ is an introduce operation. We say that $G$ is $k$-labeled, if it holds that $\lab(v)\leq k$ for all $v\in V$. We say that a clique-expression $\mu$ is a \emph{$k$-expression} if $G_x$ is a $k$-labeled graph for all $x\in \nodes$. We define the (linear) clique-width of a graph $G$, denoted by $\cw(G)$ ($\lcw(G)$ for linear clique-width) as the smallest value $k$ such that there exists a (linear) $k$-expression $\mu$, with $G_{\mu}$ isomorphic to $G$.
We can assume without loss of generality, that any given $k$-expression defining a graph $G=(V,E)$ uses at most $O(|V|)$ union operations, and at most $O(|V|k^2)$ unary operations~\cite{DBLP:journals/tcs/BergougnouxK19, CourcelleO00}.

A mapping $g:V\rightarrow [q]$, for $q\in\mathbb{N}$, is a \emph{proper $q$-coloring} of $G$, if and only if for all $\{u,v\}\in E$ it holds that $g(u)\neq g(v)$. A graph $G$ is \emph{$q$-colorable}, if it admits a proper $q$-coloring. A set $S\subseteq V$ is an \emph{odd cycle transversal} of $G$, if $G\setminus S$ is $2$-colorable. In this case, we call a proper $2$-coloring $g$ of $G\setminus S$ a \emph{witness}.
We define the \Coctp problem as:

\begin{center}
\fbox{
    \parbox{.75\textwidth}{
        \Coctp
        \begin{itemize}
            \item \textbf{Input:} A graph $G$ and a positive integer $\target \in \mathbb{N}$.
            \item \textbf{Question:} Does $G$ admit an odd cycle transversal $S$ of size at most $\target$, such that $S$ induces a connected subgraph of $G$?
        \end{itemize}
}}\end{center}

\subparagraph{Algebraic notation, sets, functions and lattices.}

We denote by $\Delta$ the symmetric difference of two sets, i.e.\
\[
  S_1\Delta S_2 = (S_1\setminus S_2)\cup (S_2\setminus S_1).
\]
Clearly, $\Delta$ is an associative operation. Hence, for a family of sets $\mathcal{F}=\{S_1, \dots, S_n\}$, the following notion is well-defined
\[
    \bigdelta\limits_{S\in\mathcal{F}}S = S_1\Delta S_2 \Delta \dots \Delta S_n.
\]
A \emph{weight function} $\omega$ is a mapping from some ground set $U$ to some field (usually $\mathbb{Z}$). Given a set $S\subseteq U$, we define
\[\omega(S) = \sum\limits_{u\in S}\omega(u).\]
Let $f:U^k\rightarrow U$ be a $k$-ary mapping, for some integer $k$. If $f$ is not explicitly defined as a weight function, we define $f(S_1,\dots S_k)$ for some sets $S_1,
\dots S_k \subseteq U$ as follows
\[
f(S_1,\dots S_k) = \{f(u_1,\dots, u_k)\colon u_1\in S_1, \dots, u_k\in S_k\}.
\]
Similarly, for $S\subseteq U$, we define
\[
f^{-1}(S) = \{(u_1,\dots, u_k)\colon f(u_1,\dots,u_k)\in S\}.
\]
Finally, we define the operation $\exop{f}\colon(\pow{U})^k\rightarrow \pow{U}$, where for $S_1,\dots,S_k \subseteq U$ we have
\[
    \exop{f}(S_1,\dots,S_k) = \bigdelta\limits_{(u_1,\dots,u_k)\in S_1\times\dots\times S_k} \big\{f(u_1,\dots,u_k)\big\},
\]
and call it the exclusive version of $f$.
Let $f:U_1\rightarrow U$ and $g:U_2\rightarrow U$, be two mappings such that $U_1\cap U_2=\emptyset$. We define the disjoint union of $f$ and $g$ as $f\dot\cup g\colon U_1\cup U_2\rightarrow U$ where 
\[
  f\dot\cup g  (u) = 
  \begin{cases}
  f(u) &\colon u \in U_1,\\
  g(u) &\colon u \in U_2. 
  \end{cases}
\]

A \emph{lattice} $(\lat, \leq)$ is a partial ordering over a set $U$, such that for all $u,v\in \lat$ there exist
\begin{itemize}
	\item an element $w\in \lat$ called the \emph{join element} ($w=u\lor v$), where $u\leq w$, $v\leq w$ and for all $w'$ with $u\leq w'$ and $v\leq w'$ it holds that $w\leq w'$,
	\item and an element $z \in \lat$ called the \emph{meet element} ($z = u\land v$), where $z\leq u$, $z\leq v$, and for all $z'$ such that $z'\leq u$ and $z'\leq v$ it holds that $z'\leq z$.
\end{itemize}
We call two lattices isomorphic, if the underlying orderings are isomorphic to each other. We slightly abuse the notation, and refer to a lattice $(\lat, \leq)$ either by $\lat$ or $\leq$ only, if the other part is clear from the context.
For a lattice $(\lat, \leq)$, we define the set of \emph{join-irreducible} elements $\lat_{\lor} \subseteq \lat$ as the set of elements $u \in \lat$ where for all $v,w \in \lat$ with $u=v\lor w$ it holds that $u=v$ or $u=w$.
We say that a lattice $(\lat, \leq)$ is given in the \emph{join representation} if we are given the elements of $\lat$ as $\Oh(\log |\lat| )$-bit strings, the elements of $\lat_{\lor}$, and an algorithm $\mathcal{A}_{\lat}$ that computes the join $u\lor v$ for $u\in \lat$ and $v\in\lat_{\lor}$.

Let $A,B\in \mathbb{F}^{\lat}$ be two vectors over some field $\mathbb{F}$. We define the \emph{$\lor$-product} $A\otimes B\in \mathbb{F}^\lat$ of $A$ and $B$ where for $z\in \lat$ it holds that
\[
	A\otimes B[z] = \sumstack{x,y\in\lat\\ x\lor y = z} A[x]\cdot B[y].
\]

\begin{lemma}[Folklore]\label{lem:v-prod-iso}
        Let $\lat$ and $\lat'$ be two isomorphic lattices, and let $\tau$ be a lattice isomorphism between $\lat$ and $\lat'$, where both $\tau$ and its inverse can be computed in time $g(n)$. If $\lor$-product over $\lat'$ can be computed in time $f(n)$, then $\lor$-product over $\lat$ can be computed in time $\Oh(f(n)+ |\lat|\cdot g(n))$.
\end{lemma}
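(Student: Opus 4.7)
The plan is to transport the problem along the isomorphism $\tau$, compute the $\lor$-product in the target lattice, and transport the result back. The correctness follows from the fact that $\tau$ is a lattice isomorphism, so it commutes with the join operation, and hence pushes the convolution on $\lat$ to the convolution on $\lat'$.

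Concretely, given vectors $A, B \in \mathbb{F}^{\lat}$, I would first build vectors $A', B' \in \mathbb{F}^{\lat'}$ by setting $A'[\tau(x)] = A[x]$ and $B'[\tau(x)] = B[x]$ for each $x \in \lat$. Since $\tau$ is a bijection, this is well-defined, and since $\tau$ can be evaluated in time $g(n)$, building $A'$ and $B'$ takes time $\Oh(|\lat| \cdot g(n))$. Next, I invoke the assumed algorithm to compute $C' = A' \otimes B' \in \mathbb{F}^{\lat'}$ in time $f(n)$. Finally, I produce $C \in \mathbb{F}^{\lat}$ by setting $C[x] = C'[\tau(x)]$ for each $x \in \lat$, which again takes time $\Oh(|\lat| \cdot g(n))$ using $\tau$. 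The total running time is then $\Oh(f(n) + |\lat| \cdot g(n))$ as claimed.

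For correctness, I would verify that $C = A \otimes B$. For every $z \in \lat$, using that $\tau$ is a bijection between $\lat$ and $\lat'$ and preserves joins (i.e.\ $\tau(x) \lor' \tau(y) = \tau(x \lor y)$),
\[
    C[z] = C'[\tau(z)] = \sumstack{x',y'\in\lat'\\ x'\lor' y' = \tau(z)} A'[x']\cdot B'[y'] = \sumstack{x,y\in\lat\\ \tau(x\lor y) = \tau(z)} A[x]\cdot B[y] = \sumstack{x,y\in\lat\\ x\lor y = z} A[x]\cdot B[y],
\]
which is exactly $(A\otimes B)[z]$.

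There is no real obstacle here beyond being careful about what is stored: since elements of $\lat$ and $\lat'$ are given as $\Oh(\log|\lat|)$-bit strings, one application of $\tau$ (or $\tau^{-1}$) gives direct random access to the corresponding entry of $A'$, $B'$, or $C'$, so the transport steps genuinely run in the claimed time. The only subtle point is invoking the definition of a lattice isomorphism to interchange $\lor'$ and $\tau$; everything else is bookkeeping.
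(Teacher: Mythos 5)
The paper states this lemma as ``Folklore'' without proof, and your argument is exactly the standard transport-along-isomorphism proof one would expect: push $A,B$ forward through $\tau$, convolve in $\lat'$, pull the result back, and use the join-preserving property of $\tau$ to justify correctness. Your write-up is correct, including the running-time accounting.
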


\subparagraph*{Conventions.}
Along this work, let $G = (V, E)$ be the input graph and let $\target \in \mathbb{N}$ be the target value in the \Coctp instance. Let $n = |V|$ and $m = |E|$. Let $\mu$ a $k$-expression of $G$ for some value $k \in \mathbb{N}$, and $\syntaxtree$ be the corresponding syntax tree. Let $r$ be the root of $\syntaxtree$. A \emph{partial solution} at $x\in\nodes$ is a set of vertices $S\subseteq V_x$.

For a function $f$, we denote by $\ostar(f(k))$ the running time $\Oh(f(k)\cdot \poly(n)),$ where $\poly(n)$ denotes some arbitrary polynomial in $n$. When we say that some function is bounded polynomially, we always mean polynomially in $n$ (and hence, in the size of the input).

We define the ground set $U = [k]$, and call its elements \emph{labels}. Let $U_0 = [k]_0$. 
Let $\W \in \mathbb{N}$ be some fixed value that we choose later, bounded polynomially by $|V(\syntaxtree)|$ (and hence, by $n$). Let $\weightf:\nodes\times [4]\rightarrow [\W]$ be a weight function chosen independently and uniformly at random from $[\W]^{\nodes\times [4]}$. From now on, we use the indices $\budget$ and $\weight$ to iterate over the ranges $[\target]_0$ and $[|\nodes| \cdot \W]_0$. We skip defining these notations repeatedly to avoid redundancy.

\section{Connectivity patterns}\label{sec:pats}

In this section we define patterns, as structures that represent connectivity in labeled graphs. We use the definition of patterns and some of their properties provided in \cite{DBLP:journals/corr/abs-2307-14264/BojikianK23}.

\begin{definition}
    A \emph{pattern} $p$ is a subset of the power-set of $U_0$ with exactly one set $Z_p$ containing the element $0$, called the \emph{zero-set} of $p$. We denote by $\Pat$ the family of all patterns. We denote by $\lbs(p) \subseteq U$ the set of all labels that appear in at least one set of $p$, and by $\sing(p) \subseteq U$ the set of all labels that appear as singletons in $p$ (we exclude the element $0$ from both sets.) We define $\inc(p)=\lbs(p)\setminus\sing(p)$, and call it the set of \emph{incomplete labels} of $p$.
\end{definition}

We sometimes write 
\[
    [i^1_1\dots i^1_{k_1}\boldsymbol{,} \cdots \boldsymbol{,}  i^{\ell}_1\dots i^{\ell}_{k_\ell}]    
\]
to denote the pattern
\[
    \big\{\{i^1_1, \dots, i^1_{k_1}\}, \dots, \{i^{\ell}_1,\dots, i^{\ell}_{k_{\ell}}\}\big\},    
\]
 where we use square brackets to enclose the pattern, we do not use any separator between the elements of the same set, and separate different sets with comas. We only use this concise way of writing when we can represent each label by a single symbol, and hence, the corresponding pattern is well-defined. We sometimes ignore the element $0$ if it appears as a singleton. For example, for two variables $i$ and $j$, both $[ij]$ and $[ij, 0]$ denote the pattern $\{\{0\}, \{i,j\}\}$, while $[10,12]$ denotes the pattern $\{\{0,1\}, \{1, 2\}\}$. The pattern $\{\{0,10\}\}$ does not admit such a short writing, since it contains an element consisting of more than one symbol.

Now we define pattern operations. These operations will allow us to build connectivity patterns corresponding to different partial solutions recursively over the nodes of $\syntaxtree$ (see \cite{DBLP:journals/corr/abs-2307-14264/BojikianK23} for details).

\begin{definition}\label{def:patops}
    Let $p,q \in \Pat$, and let $r \in \Pat$ be the pattern resulting from each of the following operations. We define
\begin{center}
    \begin{tabular}{ l p{.82\linewidth} }
        Join:&$r = p\join q$. Let $\sim_I \subseteq p\times q$ be the relation defined over $p\cup q$ where for $S \in p$ it holds that $S\sim_I S'$ if $S' \in q$ and $S\cap S' \neq \emptyset$. We define the equivalence relation $\reachable[p,q]$ (omitting $p$ and $q$ when clear from the context) as the reflexive, transitive and symmetric closure of $\intersects$. Let $\mathcal{R}$ be the set of equivalence classes of $\reachable$, then we define the pattern $r = \{\bigcup_{S \in P} S\colon P\in \mathcal{R}\}$, as the unions of the sets in each equivalence class of $\reachable$.\\
        Relabel:&$r = p_{i\rightarrow j}$, for $i,j\in U$, where $r$ results from $p$ by replacing $i$ with $j$ in each set of $p$ that contains $i$.\\
        Union:&$r=p\punion q$, where $r = (p\setminus \{Z_p\}) \cup (q \setminus \{Z_q\}) \cup \{Z_p\cup Z_q\}$.\\
    \end{tabular}
\end{center} 
\end{definition}

\begin{definition}\label{def:patadd}
    For $i,j\in [k]$ with $i\neq j$, we define the operation $\patadd_{i,j}$ over $\Pat$ as
    \[
        \patadd_{i,j}p = 
        \begin{cases}
            p\join [ij] &\colon \{i,j\}\subseteq \lbs(p),\\
            p           &\colon\text{otherwise}.
        \end{cases}    
    \]
\end{definition}

Now we define the consistency relation over patterns. This relation indicates whether two labeled subgraphs join into a connected graph by connecting all vertices of the same label between them.

\begin{definition}\label{def:consistency}
    Let $p,q\in \Pat$. We say that $p$ and $q$ are consistent (denoted by $p\sim q$), if for $r = p\join q$ it holds that $r = \{Z_r\}$ contains the zero-set only, i.e.\ the join operation merges all sets of both patterns into a single set.
\end{definition}

\begin{definition}    
We identify two families of special patterns, namely the family of \emph{complete patterns} $\Cp$ and the family of \emph{$CS$-patterns} $\CSP$. We say that a pattern $p$ is \emph{complete} ($p\in \Cp$) if it holds that $\sing(p) = \lbs(p)$, i.e.\ each label that appears in this pattern, appears as a singleton as well. We call a complete pattern a \emph{$CS$-pattern} ($p\in\CSP$), if it consists only of a zero-set and singletons.
\end{definition}

\begin{lemma}\label{lema:zero-only-complete-consistent}
    The pattern $[0]$ is the only complete pattern consistent with the pattern $[0]$.
\end{lemma}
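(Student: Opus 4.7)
The plan is to unfold the definitions of the join operation and the consistency relation applied to $q = [0] = \{\{0\}\}$, and observe that the structure of $[0]$ is so restrictive that only $[0]$ itself can be consistent with it among complete patterns.

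First I would analyze the intersection relation $\sim_I$ between $p$ and $[0]$. Since the only set in $[0]$ is $\{0\}$, a set $S \in p$ satisfies $S \sim_I \{0\}$ if and only if $0 \in S$, which by the definition of a pattern (exactly one set contains $0$) happens exactly when $S = Z_p$. Consequently, in the equivalence closure $\reachable[p,[0]]$, the set $Z_p$ is merged with $\{0\}$, while every other set $S \in p \setminus \{Z_p\}$ forms its own singleton equivalence class.

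Next I would apply consistency. By \cref{def:consistency}, $p \sim [0]$ means $p \join [0] = \{Z_r\}$, i.e.\ there is a single equivalence class. Combined with the previous observation, this forces $p \setminus \{Z_p\} = \emptyset$, so $p = \{Z_p\}$.

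Finally I would use completeness. Since $p = \{Z_p\}$ and $0 \in Z_p$, no element of $U = [k]$ appears as a singleton in $p$ (singletons in the definition of $\sing(p)$ are sets $\{i\}$ with $i \in U$, hence distinct from $Z_p$). Therefore $\sing(p) = \emptyset$. By completeness, $\lbs(p) = \sing(p) = \emptyset$, so $Z_p$ contains no label from $U$, leaving $Z_p = \{0\}$ and $p = [0]$. There is no real obstacle; the argument is a direct unpacking of \cref{def:patops} and \cref{def:consistency}, the only point requiring care being the convention that $\sing(p)$ and $\lbs(p)$ exclude the element $0$.
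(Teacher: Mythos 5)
Your proof is correct and takes essentially the same approach as the paper: you unpack the join with $[0]$ to show $p \join [0] = p$ (so consistency forces $p = \{Z_p\}$) and then use completeness (via $\sing(p) = \emptyset \Rightarrow \lbs(p) = \emptyset$) to conclude $Z_p = \{0\}$; the paper states the same two facts more tersely.
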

\begin{proof}
    This follows from the fact that $p\join[0] = p$ for all patterns $p$, and that any other complete pattern contains at least one singleton different from $\{0\}$.
\end{proof}

\begin{definition}\label{def:forget-fix}
    Let $p\in \Pat$. For $i\in U$, we define the operations $\fix(p,i)$ and $\forget(p,i)$ as follows: If $i \in \lbs(p)\setminus \sing(p)$, we set
    \begin{alignat*}{2}
        &\fix(p,i) &&=p\cup \big\{\{i\}\big\},\\
        &\forget(p,i) &&=\big\{S\setminus\{i\}\colon S\in p\big\}.
    \end{alignat*}
    Otherwise, we set 
    \[
        \fix(p,i) = \forget(p, i) = p.
    \]
    We say that $p' = \fix(p, i)$ results from $p$ by \emph{fixing} the label $i$, and $p'' = \forget(p,i)$ results from $p$ by \emph{forgetting} the label $i$.
\end{definition}

\begin{definition}\label{def:actions}
    We define \emph{actions} $\action\colon \Pat \times [4] \rightarrow (\Cp\cup\uparrow)$, as pattern operations, where we denote by $\uparrow$ the value "undefined". We define $\action$ as follows: Given a pattern $p \in \Pat$, if $p\in \Cp$, we define $\action(p, \ell)= p$, for all $\ell\in [4]$.
    For $\inc(p)=\{i\}$, we define 
    \begin{itemize}
        \item $\action(p, 1) = \fix(p, i)$,
        \item $\action(p, 2) = \forget(p, i)$.
    \end{itemize}
    We set $\action(p, 3) = \action(p, 4) = \uparrow$ in this case. Finally, if $\inc(p)=\{i, j\}$ for $i < j$, we define 
    \begin{itemize}
        \item $\action(p, 1) = \fix(\fix(p, i), j)$,
        \item $\action(p, 2) = \forget(\fix(p, i), j)$,
        \item $\action(p, 3) = \fix(\forget(p, i), j)$,
        \item $\action(p, 4) = \forget(\forget(p, i), j)$.
    \end{itemize}
    In all other cases we set $\action(p, \ell) = \uparrow$ for all $\ell\in [4]$.
\end{definition}

\begin{observation} \label{lem:pats-closed-ops}
    Both families $\Cp$ and $\CSP$ are closed under the operations $\punion$, and $_{i\rightarrow j}$ for all values of $i, j\in [k]$. On the other hand, given $p\in \Cp$ and $i,j\in[k]$, it holds that either $\patadd_{i,j}p = p$ (if $\{i,j\}\not\subseteq \lbs(p)$), or $\inc(\patadd_{i,j}p) = \{i,j\}$ otherwise.
\end{observation}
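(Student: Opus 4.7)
My proof would follow directly from the definitions by tracking which sets, and in particular which singletons, survive each operation; the three claims are handled independently.

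For closure under $\punion$, \cref{def:patops} gives $p\punion q=(p\setminus\{Z_p\})\cup(q\setminus\{Z_q\})\cup\{Z_p\cup Z_q\}$, so every non-zero-set of $p$ and of $q$ is carried over verbatim, and the two zero-sets are merged into the unique set of $p\punion q$ containing $0$. If $p,q\in\CSP$, then $p\setminus\{Z_p\}$ and $q\setminus\{Z_q\}$ consist only of singletons, whence $p\punion q$ is again a zero-set plus singletons. If $p,q\in\Cp$, then $\lbs(p\punion q)=\lbs(p)\cup\lbs(q)$, and every such label has its singleton witness in $p\setminus\{Z_p\}$ or $q\setminus\{Z_q\}$, which survives the union; hence $\sing(p\punion q)=\lbs(p\punion q)$.

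For closure under the relabel operation $p\mapsto p_{i\to j}$, singletons map to singletons (either unchanged, or $\{i\}\mapsto\{j\}$) and the zero-set maps to a set still containing $0$, since $i,j\neq 0$. Hence the shape of a $CS$-pattern is preserved. For completeness it suffices to observe that every label $\ell\in\lbs(p_{i\to j})$ admits a singleton witness in $p_{i\to j}$: if $\ell=j$ and $i\in\lbs(p)$ then $\{i\}\in p$ becomes $\{j\}\in p_{i\to j}$, and otherwise $\ell\in\lbs(p)\setminus\{i\}$ and $\{\ell\}\in p$ is fixed by the relabel.

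For the $\patadd_{i,j}$ claim with $p\in\Cp$, if $\{i,j\}\not\subseteq\lbs(p)$ then $\patadd_{i,j}p=p$ is immediate from \cref{def:patadd}. Otherwise, let $r=p\join[ij]$ with $[ij]=\{\{0\},\{i,j\}\}$, and I would analyze the $\reachable$-classes from \cref{def:patops}. Since $p$ is complete, both $\{i\},\{j\}\in p$; each meets $\{i,j\}$, so the class of $\{i,j\}$ contains $\{i\},\{j\},\{i,j\}$ together with any further set of $p$ touching $i$ or $j$. Its union therefore contains both $i$ and $j$, while the absorbed singletons $\{i\},\{j\}$ no longer appear in $r$, placing $i,j\in\inc(r)$. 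Conversely, for any label $\ell\in\lbs(p)\setminus\{i,j\}$ the singleton $\{\ell\}\in p$ intersects neither $\{0\}$ nor $\{i,j\}$ (as $\ell\notin\{0,i,j\}$), so $\{\ell\}$ forms its own $\reachable$-class and persists as a singleton of $r$, giving $\ell\in\sing(r)$. Since $\join$ introduces no new labels, $\inc(r)=\{i,j\}$.

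The only delicate step is the last one: verifying that no label other than $i$ and $j$ becomes incomplete after the join. This is precisely where completeness of $p$ is used, via the singleton witnesses $\{\ell\}$ that shield other labels from being absorbed into the merged $\reachable$-class.
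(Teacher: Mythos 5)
Your proof is correct. Note that the paper states this as an \emph{observation} and provides no proof at all, so there is no author argument to compare against; yours is the natural and complete unpacking of the definitions.

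A few small remarks on your write-up. In the $\punion$ case for $\Cp$, the reason the singleton witness $\{\ell\}$ survives is worth making explicit: $\{\ell\}$ can never equal $Z_p$ or $Z_q$ since $0\notin\{\ell\}$, so removing the zero-sets cannot delete any singleton witness. In the relabel case you should also cover $i=j$ (then the operation is the identity and closure is trivial) and the possibility that both $i$ and $j$ already occur in $\lbs(p)$: then $\{i\}$ and $\{j\}$ both map onto $\{j\}$, which collapses two singletons into one, but the image is still a $CS$-pattern (resp.\ complete), as your argument already implies. Finally, your ``delicate step'' diagnosis is accurate. Completeness guarantees that every $\ell\in\lbs(p)\setminus\{i,j\}$ has a singleton witness $\{\ell\}\in p$, and since $\{\ell\}\cap\{0\}=\{\ell\}\cap\{i,j\}=\emptyset$, that singleton is an isolated vertex of the relation $\sim_I$ and therefore persists unchanged in $p\join[ij]$. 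Without completeness a label $\ell$ carried only inside a non-singleton set $S\ni i$ would be swallowed by the $\{i,j\}$-class and $\inc(\patadd_{i,j}p)$ could strictly contain $\{i,j\}$; so the hypothesis $p\in\Cp$ really is needed, exactly as you say.
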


Finally, we define for a complete pattern $p$ the family $\parrep(p)\subseteq \CSP$ that, as we shall see in \cref{lem:parrep-parity-rep}, correctly determines whether $p$ is consistent with another complete pattern $p'$.

\begin{definition}\label{def:parrep}
    Let $p\in\Cp$. We define $\parrep(p) \subseteq \CSP$ as follows: 
    Let $P_0 = \{p\}$, and let $S_1,\dots S_r$ be all non-singleton sets of $p$ different from $Z_p$.
    We define $P_i$ for $i\in[r]$ recursively as follows:
    \[
    P_i = \bigdelta\limits_{q\in P_{i-1}} \Big\{ \big(q\setminus \{Z_q, S_i\}\big) \cup \{Z_q \cup S'\} \colon S' \subset S_i  \Big\}.
    \]
    Then we define $\parrep(p) = P_r$.
    Given a set $S\subseteq \Cp$, we define 
    \[\parrep(S) = \bigdelta\limits_{p \in S} \parrep(p).\]
\end{definition}
\begin{observation}\label{obs:prep-size-bound}
    For a complete pattern $p \in \Cp$, it holds that each pattern of $\parrep(p)$ results from $p$ by removing all non-singleton sets different from $Z_p$, and adding some subset of $\inc(p)$ to $Z_p$. Hence, it holds that $|\parrep(p)|\leq 2^{|\inc(p)|}$.
\end{observation}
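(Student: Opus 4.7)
The plan is to establish the observation by induction on the step index $i \in \{0, 1, \dots, r\}$ in \cref{def:parrep}, maintaining a structural invariant on every pattern that can appear in $P_i$. Concretely, I would show that every pattern $q$ contributing to $P_i$ has the form
\[
  q = \{Z_p \cup T\} \,\cup\, \{\{a\} \colon a \in \sing(p)\} \,\cup\, \{S_{i+1}, \dots, S_r\}
\]
for some $T \subseteq S_1 \cup \cdots \cup S_i$. The base case $i = 0$ is immediate from $P_0 = \{p\}$ together with completeness of $p$: the singletons of $p$ are exactly $\{\{a\} \colon a \in \sing(p)\}$, $Z_p$ is its zero-set, and $S_1, \dots, S_r$ enumerate the remaining non-singleton non-zero sets, so $p$ fits the template with $T = \emptyset$.

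For the inductive step, take any $q \in P_{i-1}$; by the invariant $Z_q = Z_p \cup T$ for some $T \subseteq S_1 \cup \cdots \cup S_{i-1}$, and $S_i$ still appears as a non-singleton non-zero set of $q$. By the recursion, the contribution of $q$ to $P_i$ is the family of patterns $(q \setminus \{Z_q, S_i\}) \cup \{Z_q \cup S'\}$ for $S' \subset S_i$; each such pattern equals
\[
  \{Z_p \cup (T \cup S')\} \,\cup\, \{\{a\} \colon a \in \sing(p)\} \,\cup\, \{S_{i+1}, \dots, S_r\},
\]
which again satisfies the template with $T' = T \cup S' \subseteq S_1 \cup \cdots \cup S_i$. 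Since $P_i$ is obtained as a symmetric difference of such families over $q \in P_{i-1}$, every pattern that survives the XOR still satisfies the template; cancellations can only remove patterns, never introduce ones outside the invariant.

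Setting $i = r$ yields the observation: every pattern in $\parrep(p) = P_r$ has the form $\{Z_p \cup T\} \cup \{\{a\} \colon a \in \sing(p)\}$ for some $T$ contained in the set of labels appearing in non-singleton sets other than $Z_p$. Such a pattern is manifestly a $CS$-pattern (re-confirming $\parrep(p) \subseteq \CSP$) and is uniquely determined by its choice of $T$, which immediately gives the claimed cardinality bound. I do not expect any hard step: the argument is a direct unfolding of the recursive definition, and the only point requiring care is the routine remark that symmetric difference of families cannot create patterns outside the template, which is exactly what lets the inductive invariant propagate cleanly.
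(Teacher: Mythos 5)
Your proof is correct, and since the paper states this as an unproved observation there is no competing derivation in the source to compare against; your induction over the recursion steps of Definition~\ref{def:parrep}, with the structural invariant that every $q\in P_i$ has the form $\{Z_p\cup T\}\cup\{\{a\}:a\in\sing(p)\}\cup\{S_{i+1},\dots,S_r\}$ for some $T\subseteq S_1\cup\cdots\cup S_i$, together with the remark that the symmetric difference can only delete patterns and never create ones outside the template, is the natural way to establish both the structural claim and the cardinality bound.

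One point worth flagging explicitly: you quietly replace the paper's ``$\inc(p)$'' with ``the set of labels appearing in non-singleton sets other than $Z_p$,'' i.e.\ $\bigcup_{i\in[r]}S_i$, and that replacement is substantive rather than cosmetic. By the paper's own definitions, $\inc(p)=\lbs(p)\setminus\sing(p)$, and a complete pattern satisfies $\sing(p)=\lbs(p)$, so $\inc(p)=\emptyset$ for every $p\in\Cp$. Read literally, the printed bound $|\parrep(p)|\le 2^{|\inc(p)|}=1$ is false for, e.g., $p=\big\{\{0\},\{1\},\{2\},\{1,2\}\big\}$, where $\parrep(p)$ has three elements. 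The quantity that actually governs the size is the one your invariant produces, namely $\bigl|\bigcup_{i\in[r]}S_i\bigr|$; this is also what the paper implicitly uses when deducing $|P_\ell|\le 4$ at a join node, where the only candidate non-singleton set is $\{i,j\}$. So your formulation is the intended reading, and the ``$\inc(p)$'' in the printed statement appears to be a notational slip rather than a discrepancy with your argument.

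Two small points of rigor you may want to add when writing this up: first, your claim that each final pattern ``is manifestly a $CS$-pattern'' does require the completeness of $p$ (labels of $Z_p\cup T$ must appear as singletons, which holds because every label in $Z_p$ and in each $S_i$ lies in $\sing(p)$); second, ``uniquely determined by its choice of $T$'' should be read as ``a function of $T$'' -- distinct $T$'s can collide when they differ only inside $Z_p$, which only helps the bound but is worth phrasing carefully.
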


\section{The Algorithm} \label{sec:algo}

In this section we present the main result of this paper, namely a dynamic programming algorithm over $\syntaxtree$ solving the \Coctp problem. We start by defining the family of states $\States$ that we use to index the vectors computed by the recursive formulas of the dynamic programming algorithm. 

\subsection{Indices of the algorithm}

\subsubsection*{The family of CS-Patterns}

The family $\CSP$ is of a special interest in this paper, since we use it, similar to \cite{DBLP:journals/corr/abs-2307-14264/BojikianK23}, to count connectivity patterns of partial solutions over the graphs $G_x$ for all nodes $x \in \nodes$. Note that a $CS$-pattern $p$ is uniquely defined by its set of labels and its zero-set; Given $X, Y \subseteq U$ the set of labels and the zero-set of $p$ respectively, where $Y\subseteq X$, it holds that 
\[
    p = \big\{ \{u\} \colon u \in X \big\} \cup \big\{ \{0\} \cup Y \big\}.
\]

We define the partial ordering $\pleq$ over $\CSP$, where for two $CS$-patterns $p,q\in\CSP$ it holds that $p\pleq q$ if and only if $Z_p\subseteq Z_q$ and $\lbs(p)\subseteq \lbs(q)$. Clearly, $(\CSP, \pleq)$ defines a lattice, with $p\lor q = r$ where $\lbs(r) = \lbs(p)\cup \lbs(q)$, and $Z_r = Z_p \cup Z_q$. Hence, the join operation over this lattice corresponds to the union operation over $CS$-patterns $p\punion q$.

\subsubsection*{Colorings and graph bipartition}

We define the set $C_0 = \{\black, \white\}$, and call it the set of \emph{basic colors}, where $\black$ stands for black, and $\white$ stands for white.
We define the set of \emph{colors} $\colors=\{\noc, \black,\white, \bw\}$, where each element of $\colors$ corresponds to a different subset of $C_0$ in the natural way. We define the ordering $\sleq$ over $\colors$ given by inclusion over the corresponding subsets, namely, given by the two chains $\noc\sleq\black\sleq \bw$ and $\noc\sleq \white\sleq \bw$.
We denote by $\sjoin$ the join operation, and by $\smeet$ the meet operation over the underlying lattice. We call two colors $x,y\in \colors$ \emph{consistent} (denoted by $x\sim y$), if $x\smeet y = \noc$.

We call a mapping $c:U\rightarrow \colors$ a coloring. We denote the family of all colorings of $U$ by $\Coloring= \colors^{U}$. We define the ordering $\cleq$ over colorings, where for two colorings $c_1,c_2\in \Coloring$, it holds that $c_1\cleq c_2$ if it holds that $c_1(i)\sleq c_2(i)$ for all $i\in U$. Finally, we denote by $\cjoin$ the join operation over the underlying lattice, i.e.\ for $c_1, c_2 \in \Coloring$ and for all $i\in U$ it holds
\[c_1\cjoin c_2(i) = c_1(i) \sjoin c_2(i).\]

For $c\in \Coloring$ and $i,j \in [k]$ such that $i\neq j$, we define $c_{i\rightarrow j}:[k]\rightarrow \colors$ as
    \[c_{i\rightarrow j} (\ell) = 
    \begin{cases}
        \noc&\colon\ell=i,\\
        c(i)\sqcup c(j)&\colon\ell = j,\\
        c(\ell)&\colon\text{otherwise.}
    \end{cases}\]

\subsubsection*{Indices of the dynamic programming routine}

Let $\States = \CSP\times \Coloring$ be the family of all pairs of a $CS$-pattern and a coloring. We call a pair $(p, c)\in \CSP\times \Coloring$ a state, and $\States$ the family of all states. We index the dynamic programming tables by the elements of this family. While the pattern $p$ indicates the connectivity of a partial solution, the coloring $c$ indicates the \emph{basic colors} appearing in each label class in a witness of this partial solution, allowing to extend this partial solution correctly, preserving a proper 2-coloring of the rest of the graph.

We define the ordering $\ileq$ over $\States$ with $(p_1,c_1)\ileq (p_2,c_2)$ if and only if $p_1\pleq p_2$ and $c_1\cleq c_2$.
It holds that $\ileq$ is the product of two lattices, and hence, it holds that $(p_1,c_1)\lor(p_2,c_2)=(p,c)$, where $p=p_1\punion p_2$ and $c=c_1\cjoin c_2$.

\subsection{Description of the Algorithm}
Intuitively, actions (\cref{def:actions}) allow to build representations (in an existential sense) of different weights of all partial solutions by a dynamic programming scheme over $\syntaxtree$ using complete patterns only. However, since the number of complete patterns is at least the number of all partitions of all subsets of $U$, we seek to reduce the space of the states of the dynamic programming routine, by representing (in a count-preserving manner) these patterns using $CS$-patterns only (\cref{def:parrep}).

In this section, we define the vectors $T\nodeind\in \{0, 1\}^{\States}$ for $x\in\nodes$ and all values $\budget$ and $\weight$. These vectors constitute the dynamic programming tables of our algorithm. We show that these vectors can be computed in time $\ostar(12^k)$. 
In the following sections, we will prove the correctness of the algorithm in two stages. In the first stage we show that any pattern, corresponding to a partial solution at a node $x$, can be represented by a set of complete patterns that can be built using different actions at different nodes in $\wnodes_x$. We assign different weights to different actions at each node of $\syntaxtree$.
In the second stage, we show that the tables $T\nodeind$ count for each $CS$-pattern p, the number of representations of weight $w$ of solutions of size $\budget$ that are consistent with $p$. By appropriately choosing the weight function, we show that this suffices to solve the \Coctp problem with hight probability.

For all $x\in V(\syntaxtree)$ and all values of $\budget, \weight$ we define the vectors $T\nodeind \in \{0, 1\}^{\States}$ recursively over $\syntaxtree$ as follows:

\begin{itemize}
    \item Introduce node $\mu_x = i(v)$:
    For $X\in\{\noc,\black,\white\}$ we define $c_X:[k]\rightarrow\colors$ as
    \[
        c(\ell)= 
        \begin{cases}
            X&\colon \ell = i,\\
            \noc&\colon\text{otherwise.}
        \end{cases}
    \]
    Let $p = [0i]$ if $v = v_0$, and $p=[i]$ otherwise. Let $p_1 = \forget(p, i)$ and $p_2= \fix(p, i)$.
    We set all values $T\nodeind\ind{q, c}$ to zero, for all values of $q$ and $c$, and then we add one to each of the following entries:

    \begin{itemize}
        \item $T\ind{x, 1, \weightf(x,1)} \ind{(p_1, c_{\noc})}$,
        \item $T\ind{x, 1, \weightf(x,2)} \ind{(p_2, c_{\noc})}$,
        \item $T\ind{x, 0, \weightf(x, 3)}\ind{([0], c_{\black})}$,
        \item $T\ind{x, 0, \weightf(x, 4)}\ind{([0],c_{\white})}$.
    \end{itemize}

    \item Relabel node $\mu_x = \rho_{i\rightarrow j}(\mu_{x'})$:
    For each state $(p, c)\in \States$ we define
    \[
    T\nodeind\ind{(p,c)} = \sum\limits_{\substack{(p',c')\in\States,\\p'_{i\rightarrow j} = p \land c'_{i\rightarrow j} = c}} T\ind{x',\budget,\weight}\ind{(p',c')}.   
    \]

    \item Join node $\mu_x = \eta_{i,j} (\mu_{x'})$:
    For $(p,c)\in\States$, if $c(i) \not\sim c(j)$, we set 
    $T\nodeind\ind{(p,c)} =0$. Otherwise, we define 
    \[
    T\nodeind\ind{(p,c)} =
    \sum\limits_{\ell\in[4]}
    \sumstack{p'\in\CSP\\ p\in \parrep\big(\action(\patadd_{i,j}p', \ell)\big)}
    T\ind{x',\budget,\weight - \weightf(x, \ell)}\ind{(p',c)}.
    \]
    
    \item Union node $\mu_x = \mu_{x_1} \union \mu_{x_2}$:
    We define 
    \[
        T\nodeind\ind{(p,c)} =
        \sumstack{\budget_1+\budget_2=\budget\\\weight_1+\weight_2=\weight}
        \Big(\sumstack{p_1\punion p_2 = p\\c_1\cjoin c_2=c}
        T\leftind\ind{(p_1,c_1)} \cdot T\rightind\ind{(p_2,c_2)}\Big).
    \]
\end{itemize}

In the following, we show how to compute these tables in time $\ostar(12^k)$.
We assume that lookups and updates of a single entry of any of these vectors can be done in time logarithmic in the number of indices, and hence, in polynomial time in $n$.

Although computing the tables $T\nodeind$ for a union node $x$ in the naive way requires time polynomial in ${(12^k)}^2 = 144^k$, we show that one can apply fast convolution methods to compute these tables more efficiently.

\begin{theorem}[\cite{DBLP:journals/talg/BjorklundHKKNP16}]
    Let $(\mathcal{L},\preceq)$ be a finite lattice given in join-representation and $A,B\colon\mathcal{L}\rightarrow \mathbb{F}$ be two tables, where $\mathbb{F}$ is some field. The $\lor$-product $A\otimes_{\mathcal{L}}B$ can be computed in $O(|\mathcal{L}||\mathcal{L}_{\lor}|)$ field operations and calls to algorithm $\mathcal{A}_{\mathcal{L}}$ and further time $O(|\mathcal{L}||\mathcal{L}_{\lor}|^2)$.
\end{theorem}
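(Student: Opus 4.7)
My plan is to reduce the $\lor$-product to a pair of zeta transforms and a M\"obius inversion over $(\mathcal{L},\preceq)$, and then to perform each transform in $O(|\mathcal{L}|\cdot|\mathcal{L}_{\lor}|)$ field operations by iterating over the join-irreducibles. For the reduction, let $\hat A[z]=\sum_{x\preceq z}A[x]$ denote the zeta transform over $\mathcal{L}$. Using the defining property $x\lor y\preceq z\iff x\preceq z\land y\preceq z$ of the join, a short calculation yields
\[
\widehat{A\otimes B}[z]\;=\;\sum_{w\preceq z}\sum_{x\lor y=w}A[x]B[y]\;=\;\sum_{\substack{x\preceq z\\ y\preceq z}}A[x]B[y]\;=\;\hat A[z]\cdot\hat B[z].
\]
Hence $A\otimes B$ is obtained by computing $\hat A$ and $\hat B$, taking the pointwise product in $O(|\mathcal{L}|)$ field operations, and M\"obius-inverting. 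The whole task therefore reduces to computing the zeta transform and its inverse within the claimed bounds.

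For the transform itself I would use a Yates-style algorithm adapted to the lattice. Every $z\in\mathcal{L}$ equals the join of the join-irreducibles below it, so the map $I\colon z\mapsto\{j\in\mathcal{L}_{\lor}:j\preceq z\}$ is an order-embedding of $\mathcal{L}$ into $(2^{\mathcal{L}_{\lor}},\subseteq)$. Fix an ordering $j_1,\ldots,j_m$ of $\mathcal{L}_{\lor}$ that respects the induced order on join-irreducibles (for example, a reverse topological order), and starting from $A^{(0)}=A$ perform $m$ sweeps: at step $i$, for every $z$ with $j_i\preceq z$, build the canonical predecessor $z'$ encoded by $I(z)\setminus\{j_i\}$ by successively joining its elements via $\mathcal{A}_\mathcal{L}$, and set $A^{(i)}[z]=A^{(i-1)}[z]+A^{(i-1)}[z']$. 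An induction shows that the invariant
\[
A^{(i)}[z]\;=\;\sum_{\substack{x\preceq z\\ I(x)\supseteq I(z)\setminus\{j_1,\ldots,j_i\}}}A[x]
\]
is preserved, so $A^{(m)}[z]=\hat A[z]$. The M\"obius inverse is obtained by running the sweeps in reverse order with subtractions instead of additions.

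Time-wise, each of the $2m$ sweeps performs $O(|\mathcal{L}|)$ field operations and at most $O(|\mathcal{L}|\cdot|\mathcal{L}_{\lor}|)$ oracle calls to build the predecessors, which, after precomputing the indicator lists $I(z)$ once in $O(|\mathcal{L}|\cdot|\mathcal{L}_{\lor}|)$ preprocessing time and reusing them across sweeps, amortises to $O(|\mathcal{L}|\cdot|\mathcal{L}_{\lor}|)$ field operations and oracle calls overall. Combined with the $O(\log|\mathcal{L}|)=O(|\mathcal{L}_{\lor}|)$-bit representation of lattice elements, each manipulation costs an extra factor of $|\mathcal{L}_{\lor}|$, giving the stated $O(|\mathcal{L}|\cdot|\mathcal{L}_{\lor}|^2)$ total running time. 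The main obstacle is verifying correctness of the incremental sweep in general, possibly non-distributive lattices, where $I$ may fail to commute with joins and the predecessor $z'$ may not literally satisfy $I(z')=I(z)\setminus\{j_i\}$; the remedy is to define the canonical decomposition of each element greedily with respect to the fixed ordering of $\mathcal{L}_{\lor}$, and then to check by induction on $i$ that every $x\preceq z$ is counted exactly once in $A^{(m)}[z]$. Once this invariant is established, everything else---the derivation of the $\lor$-product from $\hat A\cdot\hat B$ by M\"obius inversion, and the accounting of oracle calls and element manipulations---is routine.
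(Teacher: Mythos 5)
The paper cites this theorem from Björklund et al.\ \cite{DBLP:journals/talg/BjorklundHKKNP16} without reproving it, so there is no in-paper argument to compare against; I evaluate your proposal on its own merits. The first half is fine: reducing the $\lor$-product to two zeta transforms, a pointwise product, and a M\"obius inversion via $\widehat{A\otimes B}[z]=\hat A[z]\cdot\hat B[z]$ (a consequence of $x\lor y\preceq z\iff x\preceq z\wedge y\preceq z$) is correct and standard. The entire difficulty of the theorem lies in computing the zeta transform in $O(|\mathcal{L}||\mathcal{L}_\lor|)$ arithmetic over an \emph{arbitrary} finite lattice, and this is exactly the part you leave unresolved.

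The Yates-style sweep you describe is incorrect on non-distributive lattices, and the ``greedy canonical decomposition'' you gesture at as a remedy does not repair it. Take $\mathcal{L}=M_3$ with atoms $a,b,c$ (all join-irreducible) and top $\hat 1$. With $z'=\bigvee\bigl(I(z)\setminus\{j_i\}\bigr)$: processing $j_1=a$ at $z=\hat 1$ gives $z'=b\lor c=\hat 1$, so the update doubles $A^{(0)}[\hat 1]$, whereas your stated invariant requires $A^{(1)}[\hat 1]=A[\hat 1]$ (since $I(x)\supseteq\{b,c\}$ still forces $x=\hat 1$). Switching to a greedy minimal decomposition $J(z)$, say $J(\hat 1)=\{a,b\}$, and taking $z'=\bigvee(J(z)\setminus\{j_i\})$ instead, the sweep ends with $A^{(3)}[\hat 1]=A[\hat 0]+A[a]+A[b]+A[\hat 1]$ and simply never picks up $A[c]$, because $c\preceq\hat 1$ but $c\notin J(\hat 1)$. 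More generally there need not exist any $z'$ with $I(z')=I(z)\setminus\{j_i\}$, and no single-predecessor update of the kind you propose can account for all $x\preceq z$ without over- or undercounting. This is precisely where the Björklund et al.\ argument is nontrivial; it cannot be dismissed as a routine invariant check. Separately, your oracle-call accounting is also unjustified as written: rebuilding $z'$ from scratch at each active $(z,i)$ pair costs up to $|\mathcal{L}_\lor|$ joins, for $O(|\mathcal{L}||\mathcal{L}_\lor|^2)$ calls in total, and precomputing the lists $I(z)$ does not by itself reduce the number of joins (a prefix/suffix-join trick would, but you would need to say so). The reduction step is right; the algorithmic core is not established.
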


In this paper, we are mainly interested in the following corollary from \cite{DBLP:journals/corr/abs-2302-03627/HegerfeldK23}:

\begin{corollary}[{\cite[Corollary A.10]{DBLP:journals/corr/abs-2302-03627/HegerfeldK23}}]\label{cor:falko-lattice}
    Let $(\mathcal{L}, \preceq)$ be a finite lattice given in the join-representation and $k$ be a natural number. Given two tables $A,B\colon\mathcal{L}^k\rightarrow \mathbb{Z}_2$, the $\lor$-product $A\otimes_{\mathcal{L}^k}B$ in $\mathcal{L}^k$ can be computed in time $O(k^2|\mathcal{L}|^{k+2})$ and $O(k|\mathcal{L}|^{k+1})$ calls to the algorithm $\mathcal{A}_{\mathcal{L}}$.
\end{corollary}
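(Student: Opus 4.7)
The plan is to deduce this corollary from the preceding theorem of Bj\"orklund et al.\ by instantiating it with the product lattice $\mathcal{L}^k$, equipped with the coordinate-wise order and join. First, I would recall the standard fact that the join-irreducibles of a product lattice are precisely the tuples whose entries all equal the bottom element $\hat{0}$ of $\mathcal{L}$ except for a single coordinate, which carries a join-irreducible of $\mathcal{L}$. Any tuple with two or more non-bottom coordinates splits as the join of the corresponding single-coordinate tuples, and each single-coordinate tuple is join-irreducible iff its non-bottom entry is. Consequently $|\mathcal{L}^k_\lor| \le k \cdot |\mathcal{L}_\lor| \le k \cdot |\mathcal{L}|$.

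Next I would set up the join-representation of $\mathcal{L}^k$ from that of $\mathcal{L}$: elements are encoded as $k$-tuples of the given $\Oh(\log|\mathcal{L}|)$-bit strings, and the join-irreducibles are enumerated as pairs (coordinate index, join-irreducible of $\mathcal{L}$). Crucially, computing $u \lor v$ in $\mathcal{L}^k$ when $v$ is a join-irreducible reduces to updating only the single non-bottom coordinate of $v$; hence one call to $\mathcal{A}_{\mathcal{L}^k}$ costs exactly one call to $\mathcal{A}_{\mathcal{L}}$ plus constant bookkeeping. This observation is what saves a factor of $k$ compared to the naive substitution, where one might expect $k$ calls to $\mathcal{A}_{\mathcal{L}}$ per join.

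Plugging these estimates into the cited theorem applied to $\mathcal{L}^k$ yields $\Oh(|\mathcal{L}^k|\cdot|\mathcal{L}^k_\lor|) = \Oh(k|\mathcal{L}|^{k+1})$ field operations (which, being over $\mathbb{Z}_2$, take constant time each) and the same number of calls to $\mathcal{A}_{\mathcal{L}^k}$, each of which becomes a single call to $\mathcal{A}_{\mathcal{L}}$, for a total of $\Oh(k|\mathcal{L}|^{k+1})$ calls to $\mathcal{A}_{\mathcal{L}}$; the further time bound becomes $\Oh(|\mathcal{L}^k|\cdot|\mathcal{L}^k_\lor|^2) = \Oh(k^2|\mathcal{L}|^{k+2})$. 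Summing these pieces matches the advertised bounds. The only delicate points are the structural characterization of join-irreducibles in the product lattice and the one-coordinate-update property of joining with an irreducible; both follow by directly unfolding the definitions, so I do not anticipate any substantial obstacle.
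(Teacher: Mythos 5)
The paper cites this corollary from Hegerfeld and Kratsch (Corollary A.10) rather than proving it, so there is no in-paper proof to compare against; I will assess your derivation on its own merits. Your argument is correct and is the natural way to obtain the corollary from the preceding Björklund et al.\ theorem: instantiate at $\mathcal{L}^k$, observe that the join-irreducibles of the product are the tuples with at most one non-bottom coordinate (and that coordinate carrying an irreducible of $\mathcal{L}$), so $|\mathcal{L}^k_\lor| = k(|\mathcal{L}_\lor|-1)+1 = \Oh(k|\mathcal{L}|)$, and that joining an arbitrary tuple with such an irreducible touches a single coordinate and hence costs one $\mathcal{A}_{\mathcal{L}}$ call. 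Plugging $|\mathcal{L}^k| = |\mathcal{L}|^k$ and $|\mathcal{L}^k_\lor| = \Oh(k|\mathcal{L}|)$ into the theorem's $\Oh(|\mathcal{L}||\mathcal{L}_\lor|)$ field-operation/call bound and $\Oh(|\mathcal{L}||\mathcal{L}_\lor|^2)$ further-time bound yields exactly $\Oh(k|\mathcal{L}|^{k+1})$ calls and $\Oh(k^2|\mathcal{L}|^{k+2})$ time, using that $\mathbb{Z}_2$ operations take constant time. This almost certainly mirrors the cited reference.

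One small point worth tightening if you wrote this out fully: under the paper's definition of join-irreducible, the bottom element $\hat{0}$ is vacuously join-irreducible, and the all-bottom tuple of $\mathcal{L}^k$ is then also join-irreducible. Your phrasing (``all equal $\hat{0}$ except for a single coordinate'') implicitly lumps this case in; it does not affect the asymptotic count, but stating the characterization as ``tuples with \emph{at most} one non-bottom coordinate, that coordinate (if any) being join-irreducible in $\mathcal{L}$'' removes the ambiguity. You would also want to remark that the join-representation of $\mathcal{L}$ lets you identify $\hat{0}$ (e.g.\ as the unique element $u$ with $u \lor v = v$ for every $v \in \mathcal{L}_\lor$), which is needed to actually enumerate $\mathcal{L}^k_\lor$; this is routine but is the only part of the setup you did not spell out.
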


\begin{definition}
    Let $\lat_0=[3]\times [4]$. We define the lattice $(\lat_0,\preceq_0)$ with $(x_1,y_1)\preceq_0 (x_2, y_2)$ if and only if $x_1\leq x_2$, and one of the following is true: $y_2 = 4$, $y_1 = 1$ or $y_1 = y_2$. Let $\lat^* = \lat_0^k$ together with the ordering $\preceq$ be the $k$-th power of this lattice.
\end{definition}

\begin{definition}\label{lem:rho-iso}
    We define the bijective mapping $\rho\colon \States\rightarrow \lat^*$, as $\rho(p,c) = ((x_1,y_1),\dots (x_k,y_k))$, where for $i\in[k]$ it holds
    \[
        x_i = \begin{cases}
            3& \colon i \in Z_p,\\
            2& \colon i \in \lbs(p)\setminus Z_p,\\
            1& \colon \text{otherwise,}
        \end{cases}
        \hspace{1cm}\text{and}\hspace{1cm}
        y_i = \begin{cases}
            1& \colon c(i) = \noc,\\
            2& \colon c(i) = \black,\\
            3& \colon c(i) = \white,\\
            4& \colon c(i) = \bw.
        \end{cases}
    \]
    Clearly $\rho$ and its inverse can be computed in polynomial time in $k$.
\end{definition}

\begin{lemma}\label{lem:rho-iso-power}
    The mapping $\rho$ defines an isomorphism between $\States$ and $\lat^*$.
\end{lemma}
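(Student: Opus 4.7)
The plan is to exploit the fact that both sides of the claimed isomorphism are manifestly product lattices, so the verification reduces to checking the two factors separately and coordinate by coordinate. On the $\States$ side, $\ileq$ is by definition componentwise on $(\CSP,\pleq)\times(\Coloring,\cleq)$. On the $\lat^*$ side, $\preceq$ is the $k$-fold componentwise order on $\lat_0=[3]\times[4]$, and $\lat_0$ itself splits further as a product of the chain $([3],\leq)$ on its first coordinate and a four-element ``diamond'' lattice $([4],\preceq_y)$ on its second, where $\preceq_y$ is the relation $y_1\preceq_y y_2$ iff ``$y_1=1$ or $y_1=y_2$ or $y_2=4$'', read off directly from the definition of $\preceq_0$. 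Regrouping the $k$-fold product gives $\lat^*\cong [3]^k\times [4]^k$ as lattices. Since a product of two lattice isomorphisms is itself a lattice isomorphism, it suffices to show that the $x$-part of $\rho$ realises an isomorphism $\CSP\cong [3]^k$ and that its $y$-part realises $\Coloring\cong [4]^k$.

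For the pattern factor, I would use the observation made at the start of the section that a $CS$-pattern $p$ is uniquely determined by the pair $(\lbs(p),Z_p\setminus\{0\})$ with $Z_p\setminus\{0\}\subseteq\lbs(p)\subseteq[k]$. The $x$-coordinates of $\rho$ assign to each label $i\in[k]$ the value $1$, $2$, or $3$ according to whether $i\notin\lbs(p)$, $i\in\lbs(p)\setminus Z_p$, or $i\in Z_p$, respectively; this is plainly a bijection with $[3]^k$. Unfolding $p_1\pleq p_2$ into $Z_{p_1}\subseteq Z_{p_2}$ and $\lbs(p_1)\subseteq\lbs(p_2)$, a short inspection on each label $i$ shows that this inclusion is equivalent to pointwise $\leq$ between the $x$-coordinates in the chain $[3]^k$.

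For the coloring factor, I would appeal directly to the definition of $\sleq$ by inclusion on subsets of $\{\black,\white\}$: $\noc$ is the bottom, $\bw$ is the top, and $\black,\white$ form an antichain between them. Under the encoding $\noc\mapsto 1,\ \black\mapsto 2,\ \white\mapsto 3,\ \bw\mapsto 4$, the condition defining $\preceq_y$ matches $c(i)\sleq c'(i)$ on each label $i$, as one verifies by running through the four possible values of the first argument. Since $\cleq$ is componentwise $\sleq$, the $y$-part of $\rho$ is therefore a lattice isomorphism $\Coloring\cong [4]^k$.

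Combining the two factorwise isomorphisms and regrouping the two products expresses $\rho$ as a bijection that both preserves and reflects the order, which is enough to conclude that it is a lattice isomorphism. I do not anticipate any serious obstacle; the only spot that requires care is that the second factor of $\lat_0$ is \emph{not} a chain, so the values $2$ and $3$ (corresponding to $\black$ and $\white$) must remain incomparable. The case analysis for the $y$-coordinate must therefore be carried out element by element rather than folded into the chain-product picture used for the $x$-coordinate.
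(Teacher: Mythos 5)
Your proof is correct and takes essentially the same route as the paper's: both decompose the comparison into the pattern coordinate and the coloring coordinate and verify order-preservation label by label. The only difference is cosmetic—you package the coordinatewise check as a product of two factor isomorphisms ($\CSP\cong[3]^k$ and $\Coloring\cong[4]^k$, after observing that $\lat_0$ itself factors as a chain times a diamond), whereas the paper verifies $(x^1_i,y^1_i)\preceq_0(x^2_i,y^2_i)$ directly; the underlying case analysis is identical.
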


\begin{proof}
    Let $(p_1,c_1),(p_2,c_2)\in \States$. We show that $(p_1,c_1)\ileq (p_2,c_2)$ if and only if $\rho(p_1,c_1)\preceq \rho(p_2,c_2)$.
    Let $\rho(p_1,c_1) = (x^1_1,y^1_1, \dots, x^1_k, y^1_k)$ and $\rho(p_2,c_2) = (x^2_1,y^2_1,\dots,x^2_k, y^2_k)$.
    It holds that $p_1\pleq p_2$ if and only if $Z_{p_1} \subseteq Z_{p_2}$ and $\lbs(p_1)\subseteq \lbs(p_2)$. This is the case if and only if for all $i \in [k]$ it holds that $x^1_i = 3$ implies that $x^2_i = 3$, and $x^1_i = 2$ implies that $x^2_i \geq 2$, which holds if and only if $x^1_i \leq x^2_i$.

    On the other hand, it holds that $c_1\cleq c_2$ if and only if $c_1(i)\sleq c_2(i)$ for all $i\in[k]$, which is the case if and only if $c_2(i) = \bw$, $c_1(i)=\noc$ or $c_1(i) = c_2(i)$, which is the case if and only if $y^2_i = 4$, $y^1_i = 1$, or $y^1_i = y^2_i$ for all $i \in [k]$. Hence, $(p_1,c_1)\ileq (p_2,c_2)$ if and only if $(x^1_i,y^1_i)\preceq_0 (x^2_i, y^2_i)$ for all $i\in[k]$, which is the case if and only if $\rho(p_1, y_1)\preceq \rho(p_2,y_2)$.
\end{proof}

\begin{corollary}\label{cor:lor-prod-states-time}
    The $\lor$-product over $\States$ can be computed in time $\ostar(12^k)$.
\end{corollary}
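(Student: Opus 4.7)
The plan is to prove the corollary by transporting the $\lor$-product computation from $\States$ to its isomorphic image $\lat^*$, and then invoking \cref{cor:falko-lattice} on the factor lattice $\lat_0$. The key observation is that $|\lat_0| = 3 \cdot 4 = 12$, so an application of \cref{cor:falko-lattice} with $\mathcal{L} = \lat_0$ yields a running time of $O(k^2 \cdot 12^{k+2})$ which is exactly $\ostar(12^k)$.

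First, I would verify that $\lat_0$ really is a lattice presented in join-representation. The ordering $\preceq_0$ is the product of the 3-element chain $([3], \leq)$ and the ordering on $[4]$ given by $1 \preceq 2, 1 \preceq 3, 2 \preceq 4, 3 \preceq 4$, i.e.\ the diamond lattice. Both factors are lattices, hence so is $\lat_0$. Since $|\lat_0| = 12$ is constant, the elements can be stored as $O(1)$-bit strings, the set of join-irreducible elements can be listed explicitly in $O(1)$ time, and an algorithm $\mathcal{A}_{\lat_0}$ computing pairwise joins runs in constant time. This puts $\lat_0$ in the form required by \cref{cor:falko-lattice}.

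Next, I would appeal to \cref{lem:rho-iso-power}, which tells us that $\rho$ is a lattice isomorphism between $(\States, \ileq)$ and $(\lat^*, \preceq) = (\lat_0^k, \preceq)$, with both $\rho$ and $\rho^{-1}$ computable in polynomial time in $k$. By \cref{lem:v-prod-iso}, computing the $\lor$-product over $\States$ reduces to computing the $\lor$-product over $\lat^*$ plus an overhead of $|\States| \cdot \poly(k) = 12^k \cdot \poly(k) = \ostar(12^k)$ for translating the tables back and forth via $\rho$. Applying \cref{cor:falko-lattice} with $\mathcal{L} = \lat_0$ then computes the $\lor$-product over $\lat_0^k$ in time $O(k^2 \cdot 12^{k+2})$ plus $O(k \cdot 12^{k+1})$ calls to $\mathcal{A}_{\lat_0}$, each of which costs $\poly(k)$ time; combining, the total running time is $\ostar(12^k)$, as required.

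This is essentially a packaging argument, so I do not anticipate any real obstacle beyond bookkeeping. The one place that deserves care is checking that \cref{cor:falko-lattice} applies verbatim: the tables $T\nodeind$ take values in $\{0,1\}$, so the underlying field may be taken to be $\mathbb{Z}_2$, exactly the setting of that corollary. Everything else is a straightforward chain of two standard lemmas (isomorphism transfer plus fast lattice convolution on a power lattice) applied to the $12$-element base lattice $\lat_0$.
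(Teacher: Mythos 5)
Your proof is correct and follows exactly the same route as the paper's: invoke \cref{lem:rho-iso-power} for the isomorphism $\rho$, apply \cref{cor:falko-lattice} to $\lat^* = \lat_0^k$ with $|\lat_0| = 12$, and transfer back via \cref{lem:v-prod-iso}. The only difference is that you spell out the (routine) check that $\lat_0$ is a lattice in join-representation, which the paper leaves implicit.
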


\begin{proof}
    By \cref{lem:rho-iso-power} it holds that $\rho$ is an isomorphism between $\States$ and $\lat^*$ such that both $\rho$ and $\rho^{-1}$ can be computed in polynomial time in $k$. By \cref{cor:falko-lattice}, the $\lor$-product over $\lat^*$ can be  computed in time $\ostar(12^k)$, since  $|\lat_0| = 12$ and $\lat^* = (\lat_0)^k$. Hence, by \cref{lem:v-prod-iso}, the $\lor$-product over $\States$ can be computed in time $\ostar(12^k)$.
\end{proof}

\begin{corollary}\label{cor:union-node-time}
    Let $x$ be a union node of $\syntaxtree$, with $\mu_x = \mu_{x_1} \union \mu_{x_2}$. Given $T\ind{x_1,\budget, \weight}$ and $T\ind{x_2,\budget, \weight}$ for all values of $\budget$ and $\weight$, then $T\nodeind$ for all values of $\budget$ and $\weight$ can be computed in time $\ostar(12^k)$.
\end{corollary}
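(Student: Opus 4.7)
The plan is to interpret the recursive formula for a union node as a composition of an inner lattice convolution and an outer additive convolution over the budget and weight parameters. Writing out the inner sum, the values $T[x_1,\budget_1,\weight_1]$ and $T[x_2,\budget_2,\weight_2]$ are combined by summing the products over all pairs of states $(p_1,c_1),(p_2,c_2)\in\States$ whose join equals the target state $(p,c)$. Since the paper has already established that the join in $\States$ is the component-wise join on $\CSP\times\Coloring$---namely $(p_1,c_1)\lor(p_2,c_2) = (p_1\punion p_2,\ c_1\cjoin c_2)$---this inner sum is exactly the $\lor$-product over the lattice $\States$ applied to the two input tables.

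Consequently, for every fixed choice of $(\budget_1,\weight_1)$ and $(\budget_2,\weight_2)$ with $\budget_1,\budget_2\in[\target]_0$ and $\weight_1,\weight_2\in[|\nodes|\cdot\W]_0$, the inner convolution can be carried out in time $\ostar(12^k)$ by invoking \cref{cor:lor-prod-states-time}, and its result is accumulated into $T[x,\budget_1+\budget_2,\weight_1+\weight_2]$. The number of such pairs of index tuples is at most $(\target+1)^2\cdot(|\nodes|\cdot\W+1)^2$, which is polynomial in $n$ by the conventions fixed in \cref{sec:preliminaries}: $\target\leq n$, $|\nodes|=O(n)$, and $\W$ is polynomially bounded in $n$. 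Multiplying a polynomial factor by $\ostar(12^k)$ yields $\ostar(12^k)$, as required.

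I do not foresee a real obstacle. The only subtlety worth checking carefully is that the inner sum really coincides with the $\lor$-product in $\States$---that is, that the decomposition $(p,c)=(p_1,c_1)\lor(p_2,c_2)$ in $\States$ faithfully encodes both constraints $p_1\punion p_2=p$ and $c_1\cjoin c_2=c$ simultaneously. This is immediate from the product-of-lattices description of $(\States,\ileq)$ given just before the description of the algorithm, so once the identification is made, the corollary follows by routine bookkeeping over the budget and weight indices.
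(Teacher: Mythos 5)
Your proposal is correct and follows essentially the same strategy as the paper: fix the pair of budget and weight tuples, recognize the inner sum as the $\lor$-product over $\States$ computable in $\ostar(12^k)$ via \cref{cor:lor-prod-states-time}, and sum over the polynomially many index tuples. Your identification that the lattice join $(p_1,c_1)\lor(p_2,c_2)=(p_1\punion p_2,c_1\cjoin c_2)$ captures both constraints simultaneously is exactly the observation the paper relies on, so nothing is missing.
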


\begin{proof}
    It holds that 
    \[T\nodeind\ind{(p,c)} =
    \sumstack{\budget_1+\budget_2=\budget\\\weight_1+\weight_2=\weight}
    \Big(\sumstack{p_1\punion p_2 = p\\c_1\cjoin c_2=c}
    T\leftind\ind{(p_1,c_1)} \cdot T\rightind\ind{(p_2,c_2)}\Big).\]
    For a fixed tuple $(\budget_1,\budget_2,\weight_1,\weight_2)$, we define the tables $T\ind{x,(\budget_1,\budget_2,\weight_1,\weight_2)} \in \{0,1\}^{\States}$ by 
    \[T\ind{x, (\budget_1,\budget_2,\weight_1,\weight_2)}\ind{(p,c)} = \sumstack{p_1\punion p_2 = p\\c_1\cjoin c_2=c} T\leftind\ind{(p_1,c_1)} \cdot T\rightind\ind{(p_2,c_2)},\]
    i.e., the table $T\ind{x, (\budget_1,\budget_2,\weight_1,\weight_2)}$ is the $\lor$-product of $T\leftind$ and $T\rightind$ over $\States$. Hence, by \cref{cor:lor-prod-states-time}, it can be computed in time $\ostar(12^k)$. On the other hand, it holds that
    \[
        T\nodeind\ind{(p,c)} =
    \sumstack{\budget_1+\budget_2=\budget\\\weight_1+\weight_2=\weight}
    T\ind{x, (\budget_1,\budget_2,\weight_1,\weight_2)}.
    \]
    Since we iterate over at most polynomially many different tuples to compute all tables $T\nodeind$, the lemma follows.
\end{proof}

\begin{lemma}
    All tables $T\nodeind$ for all values of $x,\budget, \weight$ can be computed in time $\ostar(12^k)$.
\end{lemma}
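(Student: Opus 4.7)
The plan is to proceed by induction on the syntax tree $\syntaxtree$, showing that each table $T\nodeind$ can be filled in time $\ostar(12^k)$, and then multiplying by the (polynomial) number of tables one has to produce.

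First I would observe that the total number of tables $T\nodeind$ is polynomial in $n$: there are $\Oh(nk^2)$ nodes of $\syntaxtree$, $\target + 1 \le n+1$ choices for $\budget$, and $|\nodes|\cdot\W + 1 = \poly(n)$ choices for $\weight$. Also, $|\States| = |\CSP|\cdot|\Coloring| = (2 \cdot 3^k - 1) \cdot 4^k \le 12^k \cdot \poly(k)$ by \cref{lem:rho-iso-power}, so one pass over all entries of a single table already costs $\ostar(12^k)$ time. The whole argument therefore reduces to showing that, given the tables of the children of $x$, the table $T\nodeind$ can be produced in time $\ostar(12^k)$.

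Next I would handle each of the four node types. For an introduce node only $\Oh(1)$ entries need to be updated, so this is trivial. For a relabel node $\mu_x = \rho_{i\to j}(\mu_{x'})$, I would not try to enumerate the fibers of the maps $p'\mapsto p'_{i\to j}$ and $c'\mapsto c'_{i\to j}$; instead I iterate once over all $(p',c') \in \States$ and add $T\ind{x',\budget,\weight}[(p',c')]$ to the entry $T\nodeind[(p'_{i\to j}, c'_{i\to j})]$, which costs $\ostar(12^k)$. For a join node $\mu_x = \eta_{i,j}(\mu_{x'})$, I first initialise $T\nodeind$ to zero on all $(p,c)$ with $c(i)\not\sim c(j)$ and to the to-be-computed sum elsewhere. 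To avoid a brute-force enumeration of $p\in \parrep(\action(\patadd_{i,j}p',\ell))$ from the target side, I reverse the direction of the loop: for each $(p',c)\in \States$ with $c(i)\sim c(j)$ and each $\ell\in[4]$, compute the pattern $q = \action(\patadd_{i,j}p',\ell)$ in polynomial time; if $q\neq\uparrow$, iterate over the set $\parrep(q)$ and add $T\ind{x',\budget,\weight-\weightf(x,\ell)}[(p',c)]$ to the target entries. Crucially, by \cref{obs:prep-size-bound} together with \cref{lem:pats-closed-ops} we have $|\inc(q)|\le 2$, so $|\parrep(q)|\le 4$, and the inner work per $(p',c,\ell)$ is $\Oh(1)$. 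This gives $\ostar(12^k)$ total. Finally, for a union node we invoke \cref{cor:union-node-time} directly.

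The only slightly subtle step is the join node: one has to notice that the sum is not naturally indexed by the output state $(p,c)$ but by the source state $(p',c)$, and that the factor of $|\parrep(\cdot)|$ is bounded by a constant because $\patadd_{i,j}$ followed by $\action$ produces a complete pattern whose set of incomplete labels has size at most two. Combining the four cases with the polynomial bound on the number of tables yields the claimed $\ostar(12^k)$ running time.
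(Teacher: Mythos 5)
Your proposal is correct and matches the paper's proof essentially step for step: the same pass over source states for the relabel and join cases, the same bound $|\parrep(\cdot)|\le 4$ via $\inc(\cdot)\subseteq\{i,j\}$ and \cref{obs:prep-size-bound}, and the same invocation of \cref{cor:union-node-time} for union nodes. The only difference is that you spell out explicitly the polynomial count of tables and the $\ostar(12^k)$ size of $\States$, which the paper leaves implicit.
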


\begin{proof}
    This is clearly the case for an introduce node. For a relabel node,
    for all values of $\budget$ and $\weight$, we initialize $T\nodeind$ to $\overline 0$ and iterate over all states in $\States$. For each such pair $(p,c)$, we add $T\subnodeind\ind{(p,c)}$ to $T\nodeind\ind{(p_{i\rightarrow j}, c_{i\rightarrow j})}$.

    For a join node, for all values of $\budget$ and $\weight$ we initialize $T\nodeind$ to $\overline 0$. Then we iterate again over all values of $\budget$ and $\weight$, and over all pairs $(p,c) \in \States$. For each such pair, if $T\ind{x',\budget,\weight}\ind{(p,c)}= 0$ or  $c(i)$ and $c(j)$ are not consistent, we skip this pair. Otherwise, let $p' = \patadd_{i,j}p$. For each $\ell\in [4]$, let $p_{\ell} = \action(p', \ell)$ and $P_{\ell} = \parrep(p_{\ell})$.
    Then we add one to each entry $T\ind{x, \budget, \weight + \actionf(x, \ell)}\ind{(q, c)}$ for each $q \in P_{\ell}$.

    Since $\{i,j\}$ is the only non-singleton set different from $Z_{p_{\ell}}$ that can appear in $p_{\ell}$, it holds that $\inc(p_{\ell}) \subseteq \{i, j\}$. Hence, it holds by \cref{obs:prep-size-bound} that $|P_{\ell}|\leq 4$. In total, the algorithm performs at most 16 binary addition operations for each fixed values of $\budget$ and $\weight$, and for each pair $(p, c) \in \States$.
    
    Finally, for a union node, the running time follows from \cref{cor:union-node-time}.
\end{proof}

In the following sections, we show that, with high probability, there exists a value $\weight \in [\W \cdot V(\syntaxtree)]$ and a coloring $c\colon [k]\rightarrow \colors$ with $T\rootind\ind{([0], c)} = 1$, if and only if there exists a connected odd cycle transversal of size $b$ in~$G$, proving the correctness of our algorithm.

\section{Solution representation}\label{sec:acseq}

\subsection{Action-sequences}

\begin{definition}\label{def:ac-seq}
    Let $x\in \nodes$. An \emph{action-sequence} at $x$ is a mapping $\tau:\wnodes_x\rightarrow [4]$. We define the weight of $\tau$ as
    \[\weightf(\tau) = \sum\limits_{y\in \wnodes_x}\weightf(y, \tau(x)),\]
    and the cost of $\tau$ as 
    \[\budgetf(\tau) = \big|\big\{y \in \cnodes_x \colon \tau(y)\in\{1,2\}\big\}\big|.\]
    
    Each node $y \in V(\syntaxtree_x)$ induces an \emph{action-subsequence} $\tau_y$ of $\tau$ at $y$ (denoted by $\tau_y$) given by the restriction of $\tau$ to $\wnodes_y$.
    Each action-sequence $\tau$ at a node $x$ generates a pattern $\acpat{\tau}{x}\in \Cp$ and a coloring $c^{\tau}_x \in \colors^{[k]}$ (defined next). We say that $\tau$ \emph{generates} the pair $(\acpat{\tau}{x}, c^{\tau}_x)$.
    For $y\in \nodes_x$, we denote $\acpat{\tau_y}{y}$ by $\acpat{\tau}{y}$, and $c^{\tau_y}_y$ by $c^{\tau}_y$. 
    We call an action-sequence $\tau$ at a node $x$ \emph{valid}, if and only if for each join node $y\in \nodes_x$, where $\mu_y = \eta_{i,j} (\mu_{y'})$, it holds that that $c^{\tau}_{y'}(i)$ and $c^{\tau}_{y'}(j)$ are consistent.
    
    We define $(\acpat{\tau}{y}, c^{\tau}_y)$ recursively as follows:
\begin{itemize}
    \item Introduce node $\mu_x = i(v)$:
    For $X\in\{\noc,\black,\white\}$ we define $c^i_X:[k]\rightarrow\colors$ as
    \[
        c^i_X(\ell)= 
        \begin{cases}
            X&\colon \ell = i,\\
            \noc&\colon\text{otherwise.}
        \end{cases}
    \]
    Let $p = [0i]$ if $v = v_0$, and $p=[0,i]$ otherwise. We distinguish different values of $\tau(x)$:

    \begin{itemize}
        \item if $\tau(x) = 1$, we define $\acpat{\tau}{x} = \forget(p, i)$ and $c^{\tau}_x = c^i_{\noc}$,
        \item if $\tau(x) = 2$, we define $\acpat{\tau}{x} = \fix(p, i)$ and $c^{\tau}_x = c^i_{\noc}$,
        \item if $\tau(x) = 3$, we define $\acpat{\tau}{x} = [0]$ and $c^{\tau}_x = c^i_{\black}$,
        \item if $\tau(x) = 4$, we define $\acpat{\tau}{x} = [0]$ and $c^{\tau}_x = c^i_{\white}$.
    \end{itemize}

    \item Relabel node $\mu_x = \rho_{i\rightarrow j}(\mu_{x'})$: we define
    \[ \acpat{\tau}{x} = (\acpat{\tau}{x'})_{i\rightarrow j},\quad \text{and }\quad c^{\tau}_x = (c^{\tau}_{x'})_{i\rightarrow j}.\]

    \item Join node $\mu_x = \eta_{i,j} (\mu_{x'})$: Let $p = \patadd_{i,j} \acpat{\tau}{x'}$. We define
    \[ \acpat{\tau}{x} = \action(p, \tau(x)),\quad \text{and }\quad c^{\tau}_x = c^{\tau}_{x'}.\]
    
    \item Union node $\mu_x = \mu_{x_1} \union \mu_{x_2}$: we define
    \[ \acpat{\tau}{x} = \acpat{\tau}{x_1} \punion \acpat{\tau}{x_2}, \quad \text{and } \quad c^{\tau}_x = c^{\tau}_{x_1} \cjoin c^{\tau}_{x_2}.\]
\end{itemize}
\end{definition}

\subsection{Solution representation through action-sequences.}

\begin{definition}\label{def:representation}
    Given two families $S, R \subseteq \Pat$,
    we say that $R$ \emph{represents} $S$, if for each $q \in \Pat$ the following holds: there exists a pattern $p \in S$ such that $p\sim q$ if and only if there exists a pattern $p'\in R$ such that $p'\sim q$.
    We say that the family $S$ \emph{represents} a pattern $p\in\Pat$, if $S$ represents $\{p\}$.
\end{definition}

\begin{observation}\label{obs:rep-closed-under-set-union}
    Given four families $S_1, S_2, R_1, R_2 \subseteq \Pat$, where $R_1$ represents $S_1$, and $R_2$ represents $S_2$, it holds that $R_1\cup R_2$ represents $S_1\cup S_2$.
\end{observation}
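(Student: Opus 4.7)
The plan is to unfold \cref{def:representation} directly and verify the biconditional pointwise in the test pattern $q\in\Pat$. Because representation is characterized by a purely existential condition over the representing and represented families, the interaction with set-theoretic unions should be transparent: any witness lying in a union of two families must lie in one of the two parts, and then the corresponding hypothesis can be applied on that part.

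Concretely, I would fix an arbitrary $q\in\Pat$ and establish
\[
(\exists p\in S_1\cup S_2\colon p\sim q)\iff (\exists p'\in R_1\cup R_2\colon p'\sim q).
\]
For the forward direction, suppose $p\in S_1\cup S_2$ satisfies $p\sim q$; then $p\in S_i$ for some $i\in\{1,2\}$, so by the assumption that $R_i$ represents $S_i$ there exists some $p'\in R_i$ with $p'\sim q$, and in particular $p'\in R_1\cup R_2$. The backward direction is entirely symmetric: any $p'\in R_1\cup R_2$ with $p'\sim q$ lies in some $R_i$, and the hypothesis that $R_i$ represents $S_i$ yields a matching $p\in S_i\subseteq S_1\cup S_2$ with $p\sim q$.

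The argument is essentially a tautology once the definition is unwound, and I do not anticipate any genuine obstacle. The only care required is bookkeeping: making sure that the existential characterization in \cref{def:representation} is applied in the correct direction for each of the four sub-cases indexed by $i\in\{1,2\}$ and by the direction of the biconditional. Beyond this, the statement is a direct consequence of the fact that the predicate ``some element of a family is consistent with $q$'' distributes over finite unions of families.
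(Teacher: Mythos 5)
Your proof is correct, and since the paper states this as an \emph{observation} without supplying a proof, your direct unfolding of \cref{def:representation} and case split on which of the two families contains the witness is exactly the intended (and essentially only) argument.
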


\begin{definition}\label{def:rep-family-Rp}
	Given a pattern $p\in\Pat$ with $|\inc(p)|\leq 2$, we define the set $R_p$ as follows: if $p\in \Cp$, then we set $R_p = \{p\}$. Otherwise, we set $R_p = \{\action(p, i)\colon i \in [r_p]\}$, where $r_p = 2$ if $|\inc(p)| =1$, and $r_p = 4$ if $|\inc(p)| = 2$.
\end{definition}

\begin{lemma}[{\cite[Lemma 6.9]{DBLP:journals/corr/abs-2307-14264/BojikianK23}}] \label{lem:actions-represent}
	Let $p\in \Pat$ with $|\inc(p)|\leq 2$, it holds that $R_p$ represents $p$.
\end{lemma}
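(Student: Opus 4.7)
The plan is to induct on $|\inc(p)|$, reducing the claim to a sharp single-label statement: if $\inc(p) = \{i\}$, then for every $q \in \Pat$,
\[
    p \sim q \iff \fix(p, i) \sim q \ \text{or} \ \forget(p, i) \sim q.
\]
The base case $|\inc(p)| = 0$ is immediate since $p \in \Cp$, so $R_p = \{p\}$ trivially represents $\{p\}$.

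For the single-label case I would split on whether $i \in \lbs(q)$. When $i \in \lbs(q)$: the singleton $\{i\}$ added by $\fix$ intersects some $S' \in q$ with $i \in S'$, and every non-singleton set of $p$ containing $i$ also intersects this $S'$, so $\{i\}$ lands in the same equivalence class of $\reachable$ as those sets; hence $\fix(p, i) \join q$ and $p \join q$ induce identical equivalence classes, giving $\fix(p, i) \sim q \iff p \sim q$. When $i \notin \lbs(q)$: the singleton $\{i\}$ has no neighbor in the intersection graph of $\fix(p, i) \join q$, so it sits alone in its own class together with at least the zero-set and some non-singleton set of $p$ containing $i$, forcing $\fix(p, i) \not\sim q$; on the other hand, the label $i$ contributes no intersection whatsoever between $p$ and $q$, so projecting every set of $p$ by removing $i$ preserves the class structure, giving $p \sim q \iff \forget(p, i) \sim q$. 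Combining the two subcases yields the desired equivalence.

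The remaining direction to verify is $\forget(p, i) \sim q \Rightarrow p \sim q$ when $i \in \lbs(q)$, which I would handle by a path-lifting argument: any edge between $S \setminus \{i\}$ and $S' \in q$ in the intersection graph of $\forget(p, i) \join q$ lifts to an edge between $S$ and $S'$ in that of $p \join q$ since $S \cap S' \supseteq (S \setminus \{i\}) \cap S'$. The main technical obstacle here, and the part requiring the most care, is the situation in which $\forget$ collapses two sets $S$ and $S \cup \{i\}$ of $p$ into a single set of $\forget(p, i)$; I would argue that both pre-images share their intersections with every $S' \in q$ on labels different from $i$, so any $S'$ intersecting the collapsed image intersects both pre-images, placing them in a common class of $p \join q$ and allowing the lifted path to connect both.

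For $|\inc(p)| = 2$ with $\inc(p) = \{i, j\}$, I would apply the single-label claim at $i$ to obtain $p \sim q \iff \fix(p, i) \sim q \lor \forget(p, i) \sim q$. Since $\inc(\fix(p, i)) \subseteq \{j\}$ and $\inc(\forget(p, i)) \subseteq \{j\}$, I would then apply either the single-label claim or the base case at label $j$ to each disjunct. Expanding the nested disjunction exactly matches consistency with one of the four patterns listed in \cref{def:actions}, i.e.\ the elements of $R_p$, so by \cref{obs:rep-closed-under-set-union} the family $R_p$ represents $\{p\}$, completing the proof.
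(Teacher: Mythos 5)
The paper does not contain its own proof of this statement; it is cited verbatim from Lemma~6.9 of the referenced Bojikian--Kratsch work. Judged on its own terms, your argument is correct and follows what appears to be the natural route. The base case is trivial. The single-label claim is the crux: for $\inc(p)=\{i\}$, you correctly split on whether $i\in\lbs(q)$. When $i\in\lbs(q)$, the singleton $\{i\}$ introduced by $\fix$ intersects some $S'\in q$, which is also intersected by some non-singleton set $S\in p$ containing $i$, so $\{i\}$ joins $S$'s class and $\fix(p,i)\join q = p\join q$; the needed extra direction $\forget(p,i)\sim q \Rightarrow p\sim q$ follows since every edge of the intersection graph for $\forget(p,i)\join q$ lifts to an edge for $p\join q$ via $T\subseteq S$ for each preimage $S$ of $T$. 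When $i\notin\lbs(q)$, $\{i\}$ is isolated so $\fix(p,i)\not\sim q$, and the intersection structure is invariant under removing $i$ from the $p$-side, giving $\forget(p,i)\sim q \iff p\sim q$. Your handling of the collapsing case ($S$ and $S\cup\{i\}$ mapping to the same set) is sound but slightly over-elaborate: the path-lifting argument already works by choosing \emph{any} preimage of each intermediate set, since $T\subseteq S$ for all $S\in\phi^{-1}(T)$, so no separate treatment of collapsed pairs is really needed. The iteration to $|\inc(p)|=2$ is correct, including the observation that $\inc(\forget(p,i))$ may be empty (if $\{i,j\}\in p$), in which case $\action(p,3)=\action(p,4)$ and the base case applies. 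Two cosmetic remarks: the sentence claiming $\{i\}$ ``sits alone in its own class together with at least the zero-set'' is self-contradictory as written (you mean there are at least two distinct classes); and you do not actually need \cref{obs:rep-closed-under-set-union} --- the nested iff expansion directly yields the definition of $R_p$ representing $\{p\}$, unless you also invoke transitivity of representation, which you should then state explicitly.
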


\begin{definition} \label{def:op-pres-rep}
    Given a $k$-ary operation over patterns $\op\colon \Pat^k \rightarrow \Pat$, we say that $\op$ preserves representation, if
    for all sets $S_1,\dots, S_k, T_1,\dots T_k \subseteq \Pat$ such that $T_i$ represents $S_i$ for all $i\in[k]$ it holds that $\op(S_1,\dots, S_k)$ represents $\op(T_1,\dots, T_k)$.
\end{definition}

\begin{lemma}[{\cite[Lemma 6.4 and Lemma 6.5]{DBLP:journals/corr/abs-2307-14264/BojikianK23}}] \label{lem:ops-pres-rep}
    The operations join, relabel and union, as defined in \cref{def:patops}, as well as the operation $\patadd_{i,j}$ defined in \cref{def:patadd} for all $i,j\in[k]$ with $i\neq j$ preserve representation.
\end{lemma}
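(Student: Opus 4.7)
The plan is to unpack \cref{def:op-pres-rep} and exploit the fact that \emph{represents} (\cref{def:representation}) is an equivalence relation on families of patterns, since ``$R$ represents $S$'' simply asserts that the sets $\{q\in\Pat : \exists p\in S,\, p\sim q\}$ and $\{q\in\Pat : \exists p'\in R,\, p'\sim q\}$ are equal. This lets me substitute the $T_i$ for the $S_i$ one coordinate at a time and chain the resulting equivalences transitively, so it suffices to prove the claim for each operation when only a single argument is varied. Concretely: for each operation $\op$ and coordinate $i$, and for any fixed patterns $p_1,\dots,p_{i-1},p_{i+1},\dots,p_k$, I must show that the unary map $p\mapsto\op(p_1,\dots,p,\dots,p_k)$ sends representing sets to representing sets.

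The unifying tactic is to construct, for each test pattern $q\in\Pat$, an \emph{adjoint} $\phi(q)\in\Pat$ satisfying
\[\op(p_1,\dots,p,\dots,p_k)\sim q \iff p\sim \phi(q),\]
because then the defining property of ``$T_i$ represents $S_i$'' applied to $\phi(q)$ immediately yields $[\exists p\in S_i : \op(\cdots)\sim q] \iff [\exists p'\in T_i : \op(\cdots)\sim q]$, which is exactly what is needed. For $\join$, associativity gives $\phi(q) = p_2\join q$ on the nose: the merge classes under transitive intersection of $(p_1\join p_2)\cup q$ and of $p_1\cup(p_2\join q)$ coincide. For $\punion$, a direct unfolding of \cref{def:patops} shows that the sets entering the intersection-merge in $(p_1\punion p_2)\join q$ and in $p_1\join(p_2\punion q)$ are identical (the three zero-sets all coalesce through the element $0$ in both expansions), so $\phi(q) = p_2\punion q$ works.

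The principal obstacle is the \textbf{relabel} operation $p\mapsto p_{i\to j}$, since relabeling conflates the label classes $i$ and $j$ on the $p$-side and the adjoint must undo this identification on $q$'s side in a way sensitive to how $q$ itself uses these labels. The natural candidate is to augment every set of $q$ containing $j$ with the element $i$; verifying that this is indeed an adjoint requires a case analysis on whether a given set of $q$ contains $i$, $j$, both, or neither, and checking that the merge classes on the two sides coincide. For $\patadd_{i,j}$, \cref{def:patadd} makes it a conditional join with the fixed pattern $[ij]$, and this conditional blocks a single global adjoint (the map $\phi(q)=[ij]\join q$ fails when a candidate $p'\in T_i$ uses one of $i,j$ but not the other). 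The remedy is to split $S$ and $T$ according to whether both $i$ and $j$ lie in the label set, argue using suitable test patterns that this split is compatible with the representation hypothesis, and then apply the already-settled $\join$ case together with \cref{obs:rep-closed-under-set-union} on each piece. Once these four cases are closed, the hybrid argument of the first paragraph completes the proof.
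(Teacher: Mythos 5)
The paper does not actually prove \cref{lem:ops-pres-rep}; it imports it verbatim from Bojikian and Kratsch (2023), so there is no internal proof to compare against. Your reduction to single-argument variation via transitivity of representation is fine, and the adjoint for $\join$ is sound: both $(p_1\join p)\join q$ and $p\join(p_1\join q)$ collapse to $\{Z_r\}$ exactly when the tripartite cross-intersection graph on $p_1\cup p\cup q$ is connected, so $\phi(q)=p_1\join q$ works.

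However, the adjoints for $\punion$ and relabel do not exist in the form you describe, and these are real gaps. For $\punion$: take $p_1=\{\{0\}\}$, $p_2=\{\{0,1\}\}$, $q=\{\{0\},\{1\}\}$. Then $p_1\punion p_2=\{\{0,1\}\}\sim q$, but $p_2\punion q=\{\{1\},\{0,1\}\}$ and $p_1\join(p_2\punion q)$ leaves $\{1\}$ in its own class, so $p_1\not\sim(p_2\punion q)$. The set lists entering the two merges are not identical --- one has $Z_{p_1}\cup Z_{p_2}$ with a separate $Z_q$, the other has $Z_{p_2}\cup Z_q$ with a separate $Z_{p_1}$ --- and the fact that the three zero-sets eventually land in one class does not force the non-zero sets of $q$ to be partitioned the same way, since $Z_{p_1}\cup Z_{p_2}$ can directly reach sets of $q$ that the bare $Z_{p_1}$ cannot. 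For relabel: take $p=\{\{0,i\}\}$, $q=\{\{0\},\{i\}\}$, and $j$ not occurring in $q$. Then $p_{i\rightarrow j}=\{\{0,j\}\}\not\sim q$ (the set $\{i\}$ becomes unreachable), yet your $\psi$ leaves $q$ unchanged and $p\sim\psi(q)$. The failure is exactly the case your unfinished case analysis would hit: a set $T\in q$ with $i\in T$, $j\notin T$ may meet $S\in p$ only through $i$, and that intersection vanishes under $p\mapsto p_{i\rightarrow j}$ while no augmentation of $q$ by $i$ can model the loss. Consequently the $\patadd_{i,j}$ step --- which on top of the $\join$ case rests on the unargued claim that representation survives splitting $S,T$ by whether $\{i,j\}\subseteq\lbs(p)$ --- is also unestablished. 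A correct proof has to work with the existential definition of representation directly (for instance via the connectivity characterization of consistency), rather than through a pointwise adjoint on $q$.
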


\begin{definition}\label{def:sol-pattern-n-coloring}
    Given a labeled graph $G = (V, E)$ and a set of vertices $S$, we denote with $\solpat{S}{G}$ the pattern defined as follows: let $C_0 = \emptyset$ if $v_0 \notin S$, or let $C_0$ be the connected component of $G[S]$ containing $v_0$ otherwise. Let $C_1, \dots C_r$ be the connected components of $G[S]$ not containing $v_0$. Then we define
    \[ \solpat{S}{G} = \big\{\lab(C_i)\colon i\in [r]\big\} \cup \big\{\{0\} \cup C_0\big\},\]
    where we add a set for each connected component containing the labels that appear in this component. We add the label $0$ to the set corresponding to the component containing $v_0$, if $v_0 \in S$, or as a singleton otherwise.
    
    Moreover, given a mapping $g\colon S \rightarrow \{\black, \white\}$, we denote by $c = \solcol{g}{G}$ the coloring $c\colon [k] \rightarrow \colors$ defined as follows:
    \[
    c(\ell) = 
    \begin{cases}
        \noc   & \colon g\big(\lab_{G[S]}^{-1}(\ell)\big) = \emptyset, \\ 
        \black & \colon g\big(\lab_{G[S]}^{-1}(\ell)\big) = \{\black\}, \\
        \white & \colon g\big(\lab_{G[S]}^{-1}(\ell)\big) = \{\white\}, \\
        \bw    & \colon g\big(\lab_{G[S]}^{-1}(\ell)\big) = \{\black, \white\}. \\
    \end{cases}    
    \]
    
    We denote by $\solpat{S}{x}$ the pattern $\solpat{S_x}{G_x}$, and by $\solcol{g}{x}$ the coloring $\solcol{g_x}{G_x}$ for short.
\end{definition}

In the following, we show that one can represent any partial solution by action-sequences. In order to achieve this we introduce the notion of \emph{solution-sequences}. Similar to an action-sequence, a solution-sequence is defined by operations over $\syntaxtree$. However, a solution-sequence only indicates actions, that specify which vertices are included in a partial solution, and hence, it corresponds to a whole solution, instead of a representing pattern thereof.

\begin{definition}\label{def:sol-seq}
    Let $x\in \nodes$. A \emph{solution-sequence} at $x$ is a mapping $\pi:\cnodes_x\rightarrow \{0,3,4\}$. We define the cost of $\pi$ as
    \[\budgetf(\pi) = |\{y \in \cnodes_x \colon \pi(y) = 0\}|.\]
    
    Each node $y \in V(\syntaxtree_x)$ induces a \emph{solution-subsequence} $\pi_y$ of $\pi$ at $y$ given by the restriction of $\pi$ to $\cnodes_y$.
    Each solution-sequence $\pi$ at a node $x$ generates a pattern $\cpat{\pi}{x}\in \Pat$ and a coloring $\ccol{\pi}{x} \in \colors^{[k]}$ (defined next). We say that $\pi$ \emph{generates} the pair $(\cpat{\pi}{x}, \ccol{\pi}{x})$.
    For $y\in V(\syntaxtree_x)$, we denote $\cpat{\pi_y}{y}$ by $\cpat{\pi}{y}$, and $c^{\pi_y}_y$ by $\ccol{\pi}y$. 
    We call a solution-sequence $\tau$ at a node $x$ \emph{valid}, if and only if for each join node $y\in \nodes_x$, where $\mu_y = \eta_{i,j} (\mu_{y'})$, it holds that $c^{\tau}_{y'}(i)$ and $c^{\tau}_{y'}(j)$ are consistent.
    
    We define $(\cpat{\tau}{y}, c^{\tau}_y)$ recursively as follows:
\begin{itemize}
    \item Introduce node $\mu_x = i(v)$:
    For $X\in\{\noc,\black,\white\}$ we define $c_X:[k]\rightarrow\colors$ as
    \[
        c^i_X(\ell)= 
        \begin{cases}
            X&\colon \ell = i,\\
            \noc&\colon\text{otherwise.}
        \end{cases}
    \]
    Let $p = [0i]$ if $v = v_0$, and $p=[0,i]$ otherwise. We distinguish different values of $\pi(x)$:
    \begin{itemize}
        \item if $\pi(x) = 0$, we define $\cpat{\pi}{x} = p$ and $\ccol{\pi}{x} = c^i_{\noc}$,
        \item if $\pi(x) = 3$, we define $\cpat{\pi}{x} = [0]$ and $\ccol{\pi}{x} = c^i_{\black}$,
        \item if $\pi(x) = 4$, we define $\cpat{\pi}{x} = [0]$ and $\ccol{\pi}{x} = c^i_{\white}$.
    \end{itemize}

    \item Relabel node $\mu_x = \rho_{i\rightarrow j}(\mu_{x'})$: we define
    \[ \cpat{\pi}{x} = (\cpat{\pi}{x'})_{i\rightarrow j},\quad \text{and }\quad \ccol{\pi}{x} = (\ccol{\pi}{x'})_{i\rightarrow j}.\]

    \item Join node $\mu_x = \eta_{i,j} (\mu_{x'})$: We define
    \[ \cpat{\pi}{x} = \patadd_{i,j} \cpat{\pi}{x'},\quad \text{and }\quad \ccol{\pi}{x} = \ccol{\pi}{x'}.\]
    
    \item Union node $\mu_x = \mu_{x_1} \union \mu_{x_2}$: we define
    \[ \cpat{\pi}{x} = \cpat{\pi}{x_1} \punion \cpat{\pi}{x_2}, \quad \text{and } \quad \ccol{\pi}{x} = \ccol{\pi}{x_1} \cjoin \ccol{\pi}{x_2}.\]
\end{itemize}
\end{definition}

\begin{definition}\label{def:sol-seq-from-sol}
    Let $x \in \nodes$, and $S \subseteq V_x$. Let $g : V_x \setminus S \rightarrow \{\black, \white\}$ be some mapping. Let $\pi = \pi^{S,g}_x:\cnodes_x \rightarrow \{0,3,4\}$ be the solution-sequence at $x$ defined as follows: for $y\in \cnodes_x$, let $v \in V_x$ be the vertex introduced at $y$, i.e.\ $\mu_y = i(v)$ for some $i\in [k]$. We define
    \[
    \pi(y) =
    \begin{cases}
    0&\colon v \in S,\\
    3&\colon v \notin S \land g(v) = \black,\\
    4&\colon v \notin S \land g(v) = \white.\\
    \end{cases}
    \]
\end{definition}

\begin{lemma}\label{lem:sol-seq-equiv-solution}
    Let $x \in \nodes$. For $S \subseteq V_x$ and some mapping $g : V_x \setminus S \rightarrow \{\black, \white\}$, let $\pi = \pi^{S,g}_x$.
    Then all the following holds:
    \begin{enumerate}
        \item $\budgetf(\pi) = |S|$,\label{it:sol-seq-quiv-sol-same-weight}
        \item $\solpat{S}{x}=\cpat{\pi}{x}$,
        \item $\solcol{g}{x} = \ccol{\pi}{x}$,
        \item and $g$ is a proper 2-coloring of $G_x[V_x\setminus S]$ if and only if $\pi$ is valid.
    \end{enumerate}
\end{lemma}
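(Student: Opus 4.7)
The plan is to prove all four claims simultaneously by structural induction on the subexpression $\mu_x$. Part~(1) requires no induction: by the construction of $\pi^{S,g}_x$, we have $\pi(y) = 0$ exactly when the vertex introduced at $y$ lies in $S$, and each vertex of $V_x$ is introduced at a unique node of $\cnodes_x$, so $\budgetf(\pi) = |S\cap V_x| = |S|$.

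For parts~(2) and~(3), the base case $\mu_x = i(v)$ is a direct three-way case analysis on $v\in S$, $g(v)=\black$, and $g(v)=\white$; the prescribed patterns $[0i]$, $[0,i]$, $[0]$ and colorings $c^i_{\noc}$, $c^i_{\black}$, $c^i_{\white}$ immediately match $\solpat{\cdot}{x}$ and $\solcol{\cdot}{x}$ on a single-vertex graph. For the inductive step, the recursive formulas in \cref{def:sol-seq} defining $\cpat{\pi}{x}$ and $\ccol{\pi}{x}$ are engineered to mirror exactly how the graph operation affects $\solpat{S}{x}$ and $\solcol{g}{x}$: relabel acts uniformly on labels on both sides; union takes the disjoint union of graphs and of patterns, merging the zero-sets via $\punion$ and colorings via $\cjoin$; and a join $\eta_{i,j}$ merges precisely those connected components of $G_x[S_x]$ whose label sets meet $\{i\}$ and $\{j\}$ respectively, which is exactly the effect of $\patadd_{i,j}$ (and the coloring of non-solution vertices is untouched by joins since they contribute no new vertices).

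Part~(4) is where the main work lies. Every edge of $G_x[V_x\setminus S]$ is first created by a unique join ancestor $\mu_y=\eta_{i,j}(\mu_{y'})$, as a biclique between the label-$i$ and label-$j$ non-solution vertices present in $V_{y'}$. For the forward direction, if $g$ is proper on $G_x[V_x\setminus S]$, then at every such $y$ no color used by a non-solution label-$i$ vertex in $V_{y'}$ may coincide with any color used by a non-solution label-$j$ vertex in $V_{y'}$; by the already-established part~(3) applied at $y'$, this non-overlap reads $\ccol{\pi}{y'}(i)\smeet\ccol{\pi}{y'}(j)=\noc$, which is exactly the consistency demanded by validity of $\pi$ at $y$. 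Conversely, assuming $\pi$ is valid, every edge of $G_x[V_x\setminus S]$ is either present inside some proper sub-expression, where properness follows by induction, or is one of the freshly added biclique edges at some join ancestor, where validity rules out precisely the forbidden color pair.

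The main obstacle is the bookkeeping in part~(4): one must verify that the single consistency check at a join node $y$ simultaneously handles every edge of the newly added biclique, and that the label-indexed coloring $\ccol{\pi}{y'}$ continues to record the color multiset of non-solution vertices under each label even after intermediate relabel operations have permuted labels. Once part~(3) is secured as an invariant along the induction, both directions of part~(4) reduce to the elementary observation that a complete bipartite graph is properly 2-colored by $g$ if and only if the color sets of its two sides are disjoint.
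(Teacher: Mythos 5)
Your proposal is correct and follows essentially the same path as the paper: direct accounting for part (1), then structural induction over $\syntaxtree_x$ for parts (2) and (3) (with identical base cases and recursive bookkeeping), and the same key observation for part (4) — that the consistency check $c(i)\smeet c(j)=\noc$ at a join node $\eta_{i,j}(\mu_{y'})$ is, via part (3), precisely the condition that $g$ properly 2-colors the biclique added between label-$i$ and label-$j$ non-solution vertices. The only cosmetic difference is that the paper proves part (4) interleaved node-by-node within the same induction (using an explicit preliminary observation that an improper/invalid child forces improper/invalid at the parent), while you phrase it more globally via the unique join node that first creates each edge of $G_x[V_x\setminus S]$; the underlying argument is the same.
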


\begin{proof}
    First, we prove \cref{it:sol-seq-quiv-sol-same-weight}. It holds that 
    \[
        \budgetf(\pi) = |\{y \in \cnodes_x \colon \pi(y) = 0\}| = |v \in S|,
    \]
    where the first equality holds from \cref{def:sol-seq}, and the second from \cref{def:sol-seq-from-sol}.
    Now we prove the other points for all $y\in \nodes_x$ by induction over $\syntaxtree_x$. 
    We distinguish the different types of nodes $y$.
    \begin{itemize}
        \item Introduce node $\mu_y = i(v)$ (base case):
        If $v\in S$, then it holds that $\pi(y) = 0$. Hence, it holds for $p = [0i]$ if $v = v_0$, or $p = [i]$ otherwise, that $\solpat{S}{y} = \cpat{\pi}{y} = p$. It also holds that $\solcol{g}{y}(j) = \ccol{\pi}y(j) = \noc$ for $j\in[k]$.
    
        Now we assume that $v\notin S$. It follows that $\pi(y) \neq 0$. Hence, it holds that $\solpat{S}{y} = \cpat{\pi}{y} = [0]$. It holds that $\pi(y) = 3$ if $g(v) = \black$ and $\pi(y) = 4$ otherwise. In both cases, it holds that $\solcol{g}{y}(i) = \ccol{\pi}y(i) = g(v)$, and $\solcol{g}{y}(j) = \ccol{\pi}y(j) = \noc$ for all $j\in[k] \setminus \{i\}$.

        The mapping $g$ is clearly a proper 2-coloring. Since there exist no join node in the subtree $\syntaxtree_y$, it holds that $\pi_y$ is a valid solution-subsequence at $y$.
    \end{itemize}
        Note that for a node $y\in \nodes_x$, if there exists a child $y'$ of $y$ where $g_{y'}$ is not a proper 2-coloring of $G_{y'}\setminus S$, then it holds by induction hypothesis, that $\pi_{y'}$ is not a valid solution-subsequence. Hence, it follows from the definition of a valid solution-sequence, that $\pi_y$ is not valid as well. Since $G_{y'}$ is a subgraph of $G_y$, and $g_y$ is an extension of $g_{y'}$, it holds in this case that $g_y$ is not a proper 2-coloring of $G_y \setminus S$ as well.
        Hence, we can assume that $g_{y'}$ is a proper 2-coloring of $G_{y'}\setminus S$, and that $\pi_{y'}$ is a valid solution-subsequence for all children $y'$ of $y$ in $\syntaxtree$.
    \begin{itemize}
        \item Relabel node $\mu_y = \rho_{i\rightarrow j}(\mu_{y'})$:
        Let $p = \solpat{S}{y'}$, and $c = \solcol{g}{y'}$. It holds by induction hypothesis that $\cpat{\pi}{y'} = p$ and $c = \ccol{\pi}{y'}$.
        The graph $G_y$ results from $G_{y'}$ by changing the label of each vertex labeled $i$ in $G_{y'}$ to the label $j$. Hence, it holds clearly that 
        \[
            \solpat{S}{y} = \big(\solpat{S}{y'}\big)_{i\rightarrow j} = p_{i\rightarrow j} = \cpat{\pi}{y}
        \]
        and
        \[
            \solcol{g}{y} = \big(\solcol{g}{y'}\big)_{i\rightarrow j} = c_{i\rightarrow j} = \ccol{\pi}{y}.
        \]

        Since we assume that $\pi_{y'}$ is a valid solution-subsequence at $y'$, it holds that $\pi_y$ is a valid solution-subsequence at $y$ as well. Also since $g_y = g_{y'}$ is a proper 2-coloring of $G_{y'}\setminus S$, and since $G_y$ contains the same sets of vertices and edges as $G_{y'}$ (but possibly a different labeling), it holds that $g_y$ is a proper 2-coloring of $G_y\setminus S$ as well.
    
        \item Join node $\mu_y = \eta_{i,j} (\mu_{y'})$:
        Let $p = \solpat{S}{y'}$, and $c = \solcol{g}{y'}$. It holds by induction hypothesis that $\cpat{\pi}{y'} = p$ and $c = \ccol{\pi}{y'}$.
        The graph $G_y$ results from $G_{y'}$ by adding all edges between vertices of label $i$ and vertices of label $j$, turning them into one connected component if vertices of both labels exist in $G_y$, or by keeping $G_{y'}$ as it is otherwise. Hence, it holds clearly that 
        \[
            \solpat{S}{y} = \patadd_{i,j}\big(\solpat{S}{y'}\big) = \patadd_{i,j}p = \cpat{\pi}{y}
        \]
        and
        \[
            \solcol{g}{y} = \solcol{g}{y'} = c = \ccol{\pi}{y}.
        \]

        Since we assume that $\pi_{y'}$ is a valid solution-subsequence at $y'$, it holds that $\pi_y$ is a valid solution subsequence at $y$ if and only if $\solcol{g}{y}(i) = c(i)$ and $\solcol{g}{y}(j) = c(j)$ are consistent, which is the case if and only if for each edge $\{u, v\} \in E_y$ with $\lab_{y}(u) = i$ and $\lab_y(v) = j$ it holds that $g(u)\neq g(v)$. However, this is the case if and only if $g_y$ is a proper 2-coloring of $G_y\setminus S$, since only such edges are added to $G_{y'}$, and since $g_{y'}$ is a proper 2-coloring of $G_{y'}\setminus S$.
        
        \item Union node $\mu_y = \mu_{y_1} \union \mu_{y_2}$: 
        Let $p_1 = \solpat{S}{y_1}, p_2 = \solpat{S}{y_2}, c_1 = \solcol{g}{y_1}$ and $c_2 = \solcol{g}{y_2}$. It holds by induction hypothesis that 
        \begin{itemize}
            \item $\cpat{\pi}{y_1} = \solpat{S}{y_1} = p_1$,
            \item $\cpat{\pi}{y_2} = \solpat{S}{y_2} = p_2$,
            \item $\ccol{\pi}{y_1} = \solcol{g}{y_1} = c_1$,
            \item and $\ccol{\pi}{y_2} = \solcol{g}{y_2} = c_2$.
        \end{itemize}
        Let $p = p_1 \punion p_2$, and $c = c_1 \cjoin c_2$. It follows from \cref{def:sol-seq}, that $\cpat{\pi}{y} = p$ and 
        $\ccol{\pi}{y} = c$.
        The graph $G_y$ results from the disjoint union of $G_{y_1}$ and $G_{y_2}$. Hence, the connected components of $G_y$ consist of the union of the connected components of both graphs $G_{y_1}$ and $G_{y_2}$. Since $v_0$ can belong to at most one of $V_{y_1}$ and $V_{y_2}$, it holds as well that the zero-set of at least one of $p_1$ and $p_2$ is a singleton. If follows that $\solpat{S}{y} = p_1\punion p_2 = p$. 
        
        Moreover, it holds for $X\in \{\black, \white\}$ that $X \sleq \solcol{g}{y}(i)$ if and only if there exists $v\in V_y$ with $\lab_y(v) = i$ and $g(v) = X$. Since it holds that $v \in V_{y_1}\dot\cup V_{y_2}$, this is the case, if and only if there exists $\ell \in [2]$ with $v \in V_{y_{\ell}}$ with $\lab_{y_{\ell}}(v) = \lab_y(v) = i$, which is the case if and only if $X \sleq \solcol{g}{y_{\ell}}(i)$. It follows that $\solcol{g}{y}(i) = \solcol{g}{y_1}(i)\sjoin \solcol{g}{y_2}(i)$ for all $i \in [k]$, and hence, $\solcol{g}{y} = \solcol{g}{y_1} \cjoin \solcol{g}{y_2} = c$.
        
        Since we assume that both $\pi_{y_1}$ and $\pi_{y_2}$ are valid solution-subsequence at $y_1$ and $y_2$ respectively, it holds by definition that $\pi_y$ is a valid solution-subsequence at $y$ as well. Also since $g_y = g_{y_1}\dot\cup g_{y_2}$, both proper 2-colorings of the corresponding subgraphs, and since $G_y$ is the disjoint union of $G_{y_1}$ and $G_{y_2}$, it holds that $g_y$ is a proper 2-coloring of $G_y\setminus S$ as well.
    \end{itemize}
\end{proof}

\begin{lemma}\label{lem:action-seq-rep-sol}
    Let $x\in \nodes$, $q\in \Pat$ and $c \in \Coloring$. There exists a set of vertices $S\subseteq V_x$ of size $\budget$ and a mapping $g:V_x\setminus S \rightarrow \{\black, \white\}$, such that $\solpat{S}{x} \sim q$, and $c=\solcol{g}{x}$, if and only if there exists a complete pattern $p'\in \Cp$ with $p'\sim q$, a weight $\weight$ and an action-sequence $\tau$ of cost $\budget$ and weight $\weight$ at $x$ generating the pair $(p', c)$. Moreover, $g$ is a proper 2-coloring of $G_x\setminus S$ if and only if $\tau$ is valid.
\end{lemma}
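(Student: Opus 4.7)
The plan is to route both sides of the equivalence through the intermediate notion of a solution-sequence (\cref{def:sol-seq}). By \cref{lem:sol-seq-equiv-solution}, the map $(S, g) \mapsto \pi^{S,g}_x$ is a cost-, pattern-, coloring-, and validity-preserving bijection between solutions and solution-sequences at $x$. Thus it will suffice to prove the following reformulation: there is a (valid) solution-sequence $\pi$ at $x$ of cost $\budget$ generating $(p, c)$ with $p \sim q$ iff there is a (valid) action-sequence $\tau$ at $x$ of cost $\budget$ generating $(p', c)$ with $p' \in \Cp$ and $p' \sim q$; the weight $\weight$ is then simply $\weightf(\tau)$.

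To relate the two kinds of sequences, I will say that an action-sequence $\tau$ \emph{extends} a solution-sequence $\pi$ if, at each introduce node $z \in \cnodes_y$, $\pi(z) = 0$ forces $\tau(z) \in \{1, 2\}$ and $\pi(z) \in \{3, 4\}$ forces $\tau(z) = \pi(z)$ (values at join nodes are unconstrained). Every $\tau$ extends a unique $\pi$. The core technical claim, proved by induction on $y \in \nodes_x$, is that the set
\[
    R_{\pi, y} = \{\acpat{\tau}{y} : \tau \text{ extends } \pi\}
\]
is contained in $\Cp$, represents $\cpat{\pi}{y}$ in the sense of \cref{def:representation}, and every such $\tau$ satisfies $\ccol{\tau}{y} = \ccol{\pi}{y}$, $\budgetf(\tau) = \budgetf(\pi)$, and is valid iff $\pi$ is valid.

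\textbf{The induction.} The base case (introduce node) is a direct case split: if $\pi(y) \in \{3, 4\}$, both sides generate $[0] \in \Cp$; if $\pi(y) = 0$ and the introduced vertex is not $v_0$, the pattern $\{\{0\}, \{i\}\}$ is already complete and both $\fix$ and $\forget$ act as identity on it; and in the remaining case $p = \{\{0, i\}\}$, the set $R_{\pi, y} = \{\fix(p, i), \forget(p, i)\}$ coincides with $R_p$ of \cref{def:rep-family-Rp} and represents $p$ by \cref{lem:actions-represent}. The relabel and union steps are immediate from \cref{lem:ops-pres-rep}, since $R_{\pi, y}$ is obtained by applying the corresponding representation-preserving operation to the already-representing sets at the children, and \cref{obs:rep-closed-under-set-union} handles set unions. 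The hard step will be the join node $\mu_y = \eta_{i, j}(\mu_{y'})$, because a solution-sequence performs no action there while an action-sequence has four choices. I will handle this by first pushing $R_{\pi, y'}$ through $\patadd_{i, j}$ (representation-preserving by \cref{lem:ops-pres-rep}) to get a representation of $\cpat{\pi}{y} = \patadd_{i, j}\cpat{\pi}{y'}$; each resulting pattern $p^*$ satisfies $|\inc(p^*)| \leq 2$ by \cref{lem:pats-closed-ops}, so \cref{lem:actions-represent} applies and $\{\action(p^*, \ell) : \ell \in [4]\}$ represents $p^*$; summing these families over $p^*$ via \cref{obs:rep-closed-under-set-union} yields exactly $R_{\pi, y}$ and shows it represents $\cpat{\pi}{y}$. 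The coloring, cost, and validity bookkeeping then follow by comparing the recursions of \cref{def:ac-seq} and \cref{def:sol-seq} clause by clause; in particular, $\ccol{\tau}{z} = \ccol{\pi}{z}$ at every descendant $z$ ensures that the consistency conditions at join nodes defining validity coincide.

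\textbf{Finishing the lemma.} For $(\Rightarrow)$, given $(S, g)$ I take $\pi = \pi^{S, g}_x$; since $R_{\pi, x}$ represents $\cpat{\pi}{x} = \solpat{S}{x} \sim q$, some $\tau$ extending $\pi$ satisfies $\acpat{\tau}{x} \sim q$ and inherits the right cost, coloring, and validity from the claim. For $(\Leftarrow)$, given $\tau$ generating $(p', c)$ with $p' \sim q$, I recover the unique $\pi$ it extends; $p' \in R_{\pi, x}$ together with $R_{\pi, x}$ representing $\cpat{\pi}{x}$ forces $\cpat{\pi}{x} \sim q$, after which \cref{lem:sol-seq-equiv-solution} reconstructs a solution $(S, g)$ with all desired properties. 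The main obstacle throughout will be the join-node inductive step, because it is the only case where the action-sequence exercises freedom unavailable to the solution-sequence; the proof succeeds there precisely because \cref{lem:actions-represent} certifies that the four actions available to $\tau$ collectively represent the single pattern $\patadd_{i, j}\cpat{\pi}{y'}$ produced by $\pi$.
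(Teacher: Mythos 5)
Your proposal follows essentially the same route as the paper's proof. The paper defines exactly the family you call "action-sequences extending $\pi$" (denoted $\Pi_x(\pi)$ there), proves by induction the same four-part claim (representation of $\cpat{\pi}{x}$, agreement of colorings, cost preservation, and validity equivalence), and resolves the join node by the identical device: pushing through $\patadd_{i,j}$ using \cref{lem:ops-pres-rep}, applying \cref{lem:actions-represent} to each resulting pattern, and aggregating via \cref{obs:rep-closed-under-set-union}.
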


\begin{proof}
    Let $\pi:\cnodes_x\rightarrow\{0,3,4\}$ be an arbitrary solution-sequence at $x$, and let $(p, c)$ be the pair generated by $\pi$. Let $\Pi = \Pi_x(\pi)$ be the family of all action-sequences $\tau$ at $x$, such that for each introduce node $y\in\cnodes_x$, it holds that $\tau(x) = \pi(x)$ if $\tau(x) \in \{3,4\}$, or $\tau(x) \in \{1,2\}$ if $\pi(x) = 0$. We define the family of patterns $P = P_x(\pi) = \{\acpat{\tau}{x}\colon \tau \in \Pi\}$.
    By induction over $\syntaxtree$, we will prove that all following points hold:
    \begin{enumerate}
        \item The family $P_x(\pi)$ represents $p$.\label{it:ac-seq-to-sol-sec-represent}
        \item It holds for all $\tau\in \Pi$ that $c^{\tau}_x = c$.
        \item It holds for all $\tau \in \Pi$ that $\tau$ is valid if and only if $\pi$ is.\label{it:ac-seq-to-sol-sec-valid-if-valid}
        \item It holds for all $\tau \in \Pi$ that $\budgetf(\tau)=\budgetf(\pi)$.\label{it:ac-seq-to-sol-seq-cost}
    \end{enumerate}

    Before we prove this claim, we show that the lemma follows from it. So we assume first that the claim holds. Let $S \subseteq V_x$ be a set of vertices, and $g : V_x \setminus S \rightarrow \{\black, \white\}$ be some mapping. Let $p = \solpat{S}{x}$, $c = \solcol{g}{x}$ and $\pi = \pi^{S,g}_x$. It holds by \cref{lem:sol-seq-equiv-solution} that $\cpat{\pi}{x} = p$, $\ccol{\pi}{x} = c$, $\budgetf(\pi)=|S|$, and that $g$ is a proper 2-coloring of $V_x \setminus S$ in $G_x$ if and only if $\pi$ is a valid solution-sequence at $x$.
    
    Hence, it holds for $q\in \Pat$ by the claim above (\cref{it:ac-seq-to-sol-sec-represent}), that $p\sim q$ if and only if there exists a pattern $p'\in P$ with $p' \sim q$. Hence, it holds that $p\sim q$ if and only if there exists an action-sequence $\tau \in \Pi_x(\pi)$ of cost $|S|$ at $x$ generating the pair $(p', c)$ for some pattern $p'$ with $p'\sim q$. Finally, it holds by \cref{it:ac-seq-to-sol-sec-valid-if-valid} that $\tau$ is a valid action-sequence at $x$ if and only if $g$ is a proper two-coloring of $G_x\setminus S$.

    We first prove \cref{it:ac-seq-to-sol-seq-cost}. It holds that 
    \begin{alignat*}{2}
        \budgetf(\tau) &= \big|\big\{y \in \cnodes_x \colon \tau(y)\in\{1,2\}\big\}\big|&&\\
        &= |\{y \in \cnodes_x \colon \pi(y) = 0\}|&& = \budgetf(\pi),
    \end{alignat*}
    where the first equality follows from \cref{def:ac-seq}, the second from the definition of $\Pi_x(\pi)$, and the third from \cref{def:sol-seq}.
    Now we prove the other points by induction over $\syntaxtree$. Let $\pi$ be some solution-sequence at $x$ of cost $\budget$, and let $(p, c)$ be the pair generated by $\pi$. Let  $\Pi = \Pi_x(\pi)$, and $P = P_{x}(\pi)$.
    \begin{itemize}
        \item Introduce node $\mu_x = i(v)$: If $\pi(x) = 0$, it holds for $p = \cpat{\pi}{x}$, that $P = R_p$ (\cref{def:rep-family-Rp}). It follows from \cref{lem:actions-represent} that $P$ represents $p$. It holds for $\Pi = \{\tau_1, \tau_2\}$ that all sequences $\pi, \tau_1, \tau_2$ generate the coloring $c$ with $c(j) = \noc$ for all $j\in[k]$. On the other hand, if $\pi(x)\in\{3,4\}$, it holds that $\cpat{\pi}{x} = [0]$, and $P = \{[0]\}$, so $P$ trivially represents $p$. In this case $\Pi$ is a singleton $\tau$ with $\tau(x) = \pi(x)$. Hence, it holds that $\accol{\tau}{x} = \ccol{\pi}{x} = c$, where $c(j) = \noc$ for all $j\in[k]\setminus \{i\}$, and $c(i) = \black$ if $\pi(x) =3$ or $c(i) = \white$ otherwise. Finally, it holds that all mentioned sequences are valid ones.
            
        \item Relabel node $\mu_x = \rho_{i\rightarrow j}(\mu_{x'})$: Let $\pi' = \pi_{x'}$, $p' = \cpat{\pi'}{x'}$ and $c' = \ccol{\pi'}{x'}$. Let $\Pi' = \Pi_{x'}(\pi')$, and $P' = P_{x'}(\pi')$. It holds by \cref{def:sol-seq} that $\pi' = \pi$, and hence, by the definition of $\Pi$, that $\Pi' = \Pi$. It follows that $\cpat{\pi}{x} = (\cpat{\pi_{x'}}{x'})_{i\rightarrow j}$ and $P = P'_{i\rightarrow j}$. By induction hypothesis, it holds that $P'$ represents $p'$, and hence, by \cref{lem:ops-pres-rep} it holds that $P$ represents $p$. Also by induction hypothesis, it holds for each $\tau' \in \Pi'$ that $\tau'$ generates the coloring $c'$ at $x'$ and that $\tau'$ is valid if and only if $\pi'$ is. Since $\Pi' = \Pi$, it holds that $\tau_{x'} = \tau$ is in $\Pi'$ for each $\tau \in \Pi$. Hence, it follows that $\tau$ generates the coloring ${(c')}_{i \rightarrow j} = c$, and $\tau$ is valid, if and only if $\tau_{x'}$ is valid. This is the case if and only if $\pi'$ is valid, which is the case if and only if $\pi$ is valid.

        \item Join node $\mu_x = \eta_{i,j} (\mu_{x'})$: Let $\pi' = \pi_{x'}$, $p' = \cpat{\pi'}{x'}$ and $c' = \ccol{\pi'}{x'} = c $. Let $\Pi' = \Pi_{x'}(\pi')$, and $P' = P_{x'}(\pi')$. It holds by \cref{def:sol-seq} that 
        $\pi' = \pi$, and hence, by the definition of $\Pi$, that 
        \[
          \Pi =  \big\{\tau \dot\cup \{x\mapsto i\} \colon \tau \in \Pi' \land i\in[4]\big\}.
        \]
        It holds that $\cpat{\pi}{x} = \patadd_{i,j}\cpat{\pi_{x'}}{x'}$. Let $\hat{P} = \patadd_{i,j}P'$.
        By induction hypothesis, it holds that $P'$ represents $p'$, and hence, by \cref{lem:ops-pres-rep} it holds that $\hat{P}$ represents $p$.
        Since it holds from \cref{def:rep-family-Rp} for $q\in \hat P$, that $R_q = \{\action(q, i)\colon i\in[4]\}$, it holds from the definition of $\acpat{\tau}{x}$ (\cref{def:ac-seq}) that
        \[
        P = \bigcup\limits_{\ell\in[4]}\action(\hat P, \ell) = \bigcup\limits_{q \in \hat P}R_q.
        \]
        Finally, it holds by \cref{lem:actions-represent} and \cref{obs:rep-closed-under-set-union}  that $P$ represents $\hat{P}$, and hence, by transitivity of representation, that $P$ represents $p$ as well.

        By induction hypothesis, it holds for each $\tau' \in \Pi'$ that $\tau'$ generates the coloring $c'$ at $x'$ and that $\tau'$ is valid if and only if $\pi'$ is.
        For each $\tau \in \Pi$, it holds that $\tau_{x'} \in \Pi'$ and that $\tau$ and $\tau_{x'}$ generate the same coloring. Hence, $\tau$ generates the coloring $c' = c$ as well. Finally, $\tau$ is valid, if and only if $\tau_{x'}$ is valid and $c(i)\sim c(j)$. This is the case if and only if $\pi'$ is valid and $c(i)\sim c(j)$, which is the case if and only if $\pi$ is valid.
        
        \item Union node $\mu_x = \mu_{x_1} \union \mu_{x_2}$: Let $\pi_1 = \pi_{x_1}$, $p_1 = \cpat{\pi_1}{x_1}$, $c_1 = \ccol{\pi_1}{x_1}$, $\Pi_1 = \Pi_{x_1}(\pi_1)$, and $P_1 = P_{x_1}(\pi_1)$. Similarly, let $\pi_2 = \pi_{x_2}$, $p_2 = \cpat{\pi_2}{x_2}$, $c_2 = \ccol{\pi_2}{x_2}$, $\Pi_2 = \Pi_{x_2}(\pi_2)$, and $P_2 = P_{x_2}(\pi_2)$.
        It holds by \cref{def:sol-seq} that 
        $\pi = \pi_1 \dot\cup \pi_2$, and hence, by the definition of $\Pi$, that 
        \begin{equation}\label{eq:acseq-repr-sol:union-node-Pi}            
            \Pi = \{\tau_1\dot\cup\tau_2 \colon \tau_1 \in \Pi_1 \land \tau_2 \in \Pi_2\}.    
        \end{equation}
        It follows that $\cpat{\pi}{x} = \cpat{\pi_{x_1}}{x_1}\punion\cpat{\pi_{x_2}}{x_2}$ and from \cref{def:ac-seq}, that $P = P_1 \punion P_2$. By induction hypothesis, it holds that $P_1$ represents $p_1$, and $P_2$ represents $p_2$. Hence, by \cref{lem:ops-pres-rep} it holds that $P$ represents $p$. Also, by the induction hypothesis, it holds for each $i \in \{1,2\}$, and for each $\tau_i \in P_i$ that $\tau_i$ generates the coloring $c_i$ at $x_i$ and that $\tau_i$ is valid if and only if $\pi_i$ is.
        
        From \cref{eq:acseq-repr-sol:union-node-Pi}, it holds for $\tau \in \Pi$ that $\tau = \tau_1 \dot\cup \tau_2$ for some $\tau_1\in \Pi_1$ and $\tau_2\in\Pi_2$. Hence, it follows that $\tau$ generates the coloring $c_1 \cjoin c_2 = c$, and $\tau$ is valid, if and only if both $\tau_1$ and $\tau_2$ are, which is the case if and only if both $\pi_1$ and $\pi_2$ are, which holds if and only if $\pi$ is valid.
    \end{itemize}
\end{proof}

\begin{lemma}\label{lem:conn-if-consistent-zero}
    Let $x\in \nodes$, and $S\subseteq V_x$ be a non-empty set of vertices. Then it holds that $p=\solpat{S}{x}\sim[0]$ if and only if $G_x[S]$ is connected, and $v_0 \in S$.
\end{lemma}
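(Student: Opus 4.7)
The plan is to prove both directions by directly unfolding the definition of $\solpat{S}{x}$ and the join operation, with the key observation being that the intersection relation $\sim_I$ in \cref{def:patops} only relates sets across the two patterns (elements of $p$ with elements of $q$), never within the same pattern.

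For the easy direction ($\Leftarrow$), if $v_0 \in S$ and $G_x[S]$ is connected, then $G_x[S]$ has a unique component $C_0$ containing $v_0$, so $\solpat{S}{x} = \{Z_p\}$ consists solely of its zero-set. Computing $p \join [0]$, the unique set $Z_p \in p$ intersects $\{0\} \in [0]$ at the element $0$, so the equivalence class collapses everything into a single union $Z_p$; hence $p\sim[0]$.

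For the hard direction ($\Rightarrow$), I will analyze the classes of the reflexive, symmetric, transitive closure of $\sim_I$ on $p \cup [0]$. Since $[0] = \{\{0\}\}$ has only one element, every transitive chain must alternate between $p$ and $[0]$ and hence pass through $\{0\}$. Therefore two distinct sets $A, B \in p$ can lie in the same class only if each intersects $\{0\}$, i.e., each contains $0$; but by construction of $\solpat{S}{x}$ the zero-set $Z_p$ is the unique set containing $0$. Consequently, the equivalence classes are $\{\{0\}, Z_p\}$ together with a singleton class $\{T\}$ for every other set $T \in p$. The condition $p\join[0] = \{Z_r\}$ therefore forces $p = \{Z_p\}$.

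It remains to translate this into the desired structural claim by case analysis on whether $v_0 \in S$. If $v_0 \notin S$, then $Z_p = \{0\}$ and, since $S$ is nonempty, $G_x[S]$ has at least one connected component $C_1$, yielding an additional set $\lab(C_1) \neq \emptyset$ in $p$; this contradicts $p = \{Z_p\}$. Thus $v_0 \in S$, in which case $p = \{Z_p\}$ is equivalent to $G_x[S]$ having no components other than $C_0$, i.e., $G_x[S]$ being connected. The only subtle point—and the main thing to get right—is the asymmetric nature of $\sim_I$: one must resist the temptation to let two sets of $p$ share a common label and thereby collapse, because intra-pattern intersections do not contribute to $\sim_I$; this is precisely why $\solpat{S}{x}$ faithfully records the components of $G_x[S]$ as separate sets even when they share labels in $G_x$.
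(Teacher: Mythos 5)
Your proof is correct and follows essentially the same approach as the paper's (the paper just states, very tersely, that $p\sim[0]$ iff $p$ consists of the zero-set only, and then reads off the equivalence from the definition of $\solpat{S}{x}$); you merely unfold the join computation in more detail. One minor inaccuracy in your closing aside: $\solpat{S}{x}$ does \emph{not} record components with identical label sets as separate elements, since a pattern is a set of label-sets and duplicates collapse — but this is not load-bearing for your argument, which only needs that some $\lab(C_i) \neq Z_p$ exists whenever $r\geq 1$, and that holds because $0\notin\lab(C_i)$ while $0\in Z_p$.
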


\begin{proof}
    It holds that $p\sim[0]$ if and only if $p$ consists of the zero-set only. However, since $S$ is not empty, it holds that $G_x[S]$ contains at least one connected component. Hence, $p$ consists of the zero-set only, if and only if $v_0 \in S$ and $G_x[S]$ is connected.
\end{proof}

\begin{corollary}\label{cor:sol-if-ac-seq}
    Let $\budget \in [n]$. There exists a weight $\weight$, a coloring $c\in \Coloring$ and a valid action-sequence $\tau$ of cost $\budget$ and weight $\weight$ at $r$ generating the pair $([0], c)$, if and only if there exists a connected odd cycle transversal of size $\budget$ in $G$ containing $v_0$.
\end{corollary}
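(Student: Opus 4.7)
The plan is to derive this essentially as a direct combination of \cref{lem:action-seq-rep-sol}, \cref{lem:conn-if-consistent-zero}, and \cref{lema:zero-only-complete-consistent}, applied at the root $r$ with the distinguished pattern $q = [0]$. The two directions are symmetric, so I would set up the notation once and then unroll both implications simultaneously.

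First I would invoke \cref{lem:action-seq-rep-sol} at $x = r$ with $q = [0]$: it gives an equivalence between the existence of a valid action-sequence $\tau$ of cost $\budget$ and some weight $\weight$ at $r$ generating a pair $(p', c)$ with $p'\in\Cp$ and $p'\sim [0]$, and the existence of a vertex set $S\subseteq V$ of size $\budget$ with a mapping $g\colon V\setminus S\to\{\black,\white\}$ such that $\solpat{S}{G}\sim [0]$, $c = \solcol{g}{G}$, and additionally $g$ is a proper $2$-coloring of $G\setminus S$ iff $\tau$ is valid.

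Next I would use \cref{lema:zero-only-complete-consistent} to collapse the pattern $p'$ to $[0]$: since $p'$ is required to be complete and consistent with $[0]$, the only choice is $p' = [0]$. Hence the equivalence above becomes exactly an equivalence with the existence of a valid action-sequence generating the pair $([0],c)$, which is what the statement asks for on the left-hand side.

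Finally I would translate the right-hand side into the language of odd cycle transversals. By \cref{lem:conn-if-consistent-zero}, since $\budget\geq 1$ forces $S$ to be non-empty, $\solpat{S}{G}\sim [0]$ is equivalent to $G[S]$ being connected with $v_0\in S$. On the other hand, the existence of a mapping $g\colon V\setminus S\to\{\black,\white\}$ that is a proper $2$-coloring of $G\setminus S$ is precisely the statement that $G\setminus S$ is bipartite, i.e.\ that $S$ is an odd cycle transversal. Combining these two observations with the equivalence from the previous paragraph yields the corollary in both directions; for the backward direction one additionally picks $c = \solcol{g}{G}$ and the weight $\weight$ supplied by \cref{lem:action-seq-rep-sol}. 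There is no real obstacle here: the only small care needed is noting that $\budget\in[n]$ ensures $S\neq\emptyset$ so that \cref{lem:conn-if-consistent-zero} applies, and that the ``valid'' condition on $\tau$ is precisely what corresponds to $g$ being a proper $2$-coloring under the equivalence.
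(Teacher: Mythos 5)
Your proposal is correct and uses the same ingredients and logical structure as the paper's proof: \cref{lem:action-seq-rep-sol} at the root with $q=[0]$, \cref{lem:conn-if-consistent-zero} to translate $\solpat{S}{r}\sim[0]$ into connectivity plus $v_0\in S$, and \cref{lema:zero-only-complete-consistent} to force $p'=[0]$. The only difference is cosmetic ordering of the three lemma applications.
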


\begin{proof}
    Let $S\subseteq V$ with $|S|=\budget$. It holds that $S$ is a connected odd cycle transversal of $G$ containing $v_0$ if and only if $G[S]$ is connected, $v_0\in S$, and there exists a proper $2$-coloring $g:V\setminus S\rightarrow \{\black,\white\}$ of $G\setminus S$.
    It holds by \cref{lem:conn-if-consistent-zero} that the former two conditions are the case if and only if $\solpat{S}{r} \sim [0]$. By \cref{lem:action-seq-rep-sol} it follows that the latter is the case if and only if there exists a valid action-sequence at $r$ of cost $\budget$ and some weight $\weight$ generating a pair $(p', c)$ where $c=\solcol{g}{r}$, and $p'\sim [0]$. Since $p'$ is a complete pattern, it holds due to \cref{lema:zero-only-complete-consistent} that $p' = [0]$.    
\end{proof}

\subsection{The Algorithm counts action-sequences}

In this section, we show that using the tables $T$ defined in \cref{sec:algo}, one can count (modulo 2) the number of action-sequences of cost $\budget$ and weight $\weight$, that generate patterns consistent with a complete pattern $q \in \Cp$.

\begin{definition}
    We say that a set of patterns $S$ \emph{parity-represents} another set of patterns $T$ over a family of patterns $\Pat^* \subseteq \Pat$, if for each pattern $q \in \Pat^*$ it holds that 
    \[
        |\{p\in S\colon p\sim q\}| \bquiv |\{p\in T\colon p\sim q\}|.
    \]
    We say that $S$ \emph{parity-represents} a pattern $p$ over $\Pat^*$, if $S$ parity-represents $\{p\}$ over $\Pat^*$.
\end{definition}

\begin{lemma}[{\cite[Lemma 8.10]{DBLP:journals/corr/abs-2307-14264/BojikianK23}}] \label{lem:parrep-parity-rep}
    Let $p\in \Cp$. The family $\parrep(p)$ parity-represents $p$ over $\Cp$.
\end{lemma}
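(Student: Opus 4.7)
The plan is to induct on the number $r$ of non-singleton sets of $p$ different from $Z_p$. When $r = 0$ we have $p \in \CSP$ and $\parrep(p) = \{p\}$ by construction, so the statement is immediate. For $r \geq 1$, fix one such set $S$ and put $\tilde p = p \setminus \{Z_p, S\}$. Peeling off one layer of the recursive definition yields
\[
\parrep(p) \;=\; \bigdelta_{S' \subsetneq S} \parrep\bigl(p^{(S')}\bigr), \qquad p^{(S')} := \tilde p \cup \{Z_p \cup S'\},
\]
and each $p^{(S')}$ is complete with only $r-1$ non-singleton, non-zero sets. Applying the inductive hypothesis to every $p^{(S')}$ individually reduces the claim to the identity
\[
\sum_{S' \subsetneq S} [p^{(S')} \sim q] \;\equiv\; [p \sim q] \pmod 2 \qquad \text{for every } q \in \Cp, \quad(\star)
\]
which I would prove by a direct combinatorial argument.

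For $(\star)$ I work with the bipartite intersection graph $B$ whose parts are the sets of $\tilde p$ and of $q$, with an edge between two sets iff they intersect. Let $C_1, \dots, C_m$ be its connected components and set $Q_j = \bigcup\{T \in C_j \cap q\}$. If $\lbs(p) \neq \lbs(q)$ then some singleton has no partner in the other pattern, every consistency relation fails, and both sides of $(\star)$ vanish identically; hence assume $\lbs(p) = \lbs(q)$. Since $\tilde p$ contains every singleton $\{i\}$ with $i \in \lbs(p)$, the $Q_j$'s partition $\lbs(q) \cup \{0\}$, and in particular each $a \in S$ lies in a unique $Q_{j(a)}$. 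Writing $J(X) := \{j : X \cap Q_j \neq \emptyset\}$, a direct analysis of how adding the relevant vertex (or two) re-connects $B$ gives
\[
p^{(S')} \sim q \iff J(Z_p) \cup J(S') = [m], \qquad p \sim q \iff J(Z_p) \cup J(S) = [m] \text{ and } J(Z_p) \cap J(S) \neq \emptyset;
\]
the second condition for $p \sim q$ is precisely what captures that $Z_p$ and $S$ lie on the same side of $B$ and must therefore share a common $q$-side component to be linked.

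Setting $a(j) := S \cap Q_j$ and $A' := \bigcup_{j \in J(Z_p)} a(j)$, the number of $S' \subseteq S$ with $J(Z_p) \cup J(S') = [m]$ factors as $\prod_{j \notin J(Z_p)}(2^{|a(j)|} - 1) \cdot 2^{|A'|}$, and subtracting one accounts for the excluded case $S' = S$. A three-way split then verifies $(\star)$: if $[m] \neq J(Z_p) \cup J(S)$ some factor $2^{|a(j)|}-1$ is zero and both sides vanish; if the union equals $[m]$ but $J(Z_p) \cap J(S) = \emptyset$ then $|A'| = 0$ and the count of proper subsets is $1 - 1 \equiv 0 \pmod 2$, matching $[p \sim q] = 0$; if additionally $J(Z_p) \cap J(S) \neq \emptyset$ then $|A'| \geq 1$ makes $2^{|A'|}$ even and the count is $0 - 1 \equiv 1 \pmod 2$, matching $[p \sim q] = 1$. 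The main obstacle in this plan is the combinatorial reformulation of consistency in terms of the components of $B$; once consistency of $p^{(S')}$ and of $p$ with $q$ is expressed purely via $J(Z_p)$, $J(S)$, and $J(S')$, the parity count is essentially routine.
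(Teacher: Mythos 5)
The paper does not prove this lemma; it cites it as Lemma~8.10 of the earlier Bojikian--Kratsch arXiv paper, so there is no proof in the present source to compare against. I therefore focus on whether your argument is correct, and as far as I can check, it is.

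The overall structure is clean: unrolling the first level of the $\bigdelta$-recursion does give $\parrep(p) = \bigdelta_{S' \subsetneq S} \parrep(p^{(S')})$, because the step map $P \mapsto \bigdelta_{q \in P}\{(q\setminus\{Z_q,S_i\})\cup\{Z_q\cup S'\} : S' \subset S_i\}$ distributes over symmetric difference, each $p^{(S')}$ has a distinct zero-set so no cancellation occurs in $P_1$, and the remaining non-singleton non-zero sets of $p^{(S')}$ are exactly $S_2,\dots,S_r$; the inductive hypothesis and a mod-2 double-count then reduce everything to $(\star)$. The combinatorial reformulation of consistency in terms of the components $C_1,\dots,C_m$ of the bipartite intersection graph $B$ of $\tilde p$ and $q$ is sound: since both $p$ and $q$ are complete and $\lbs(p)=\lbs(q)$, every set of $\tilde p$ and of $q$ meets some singleton on the other side, so the only possibly isolated vertex is $Z_q=\{0\}$, the $Q_j$ partition $\lbs(q)\cup\{0\}$, a newly added $p$-side set $X$ attaches precisely to the components indexed by $J(X)$, and $0\in Z_p\cap Z_q$ forces $j_0\in J(Z_p)$. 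From this, $p^{(S')}\sim q \iff J(Z_p)\cup J(S') = [m]$ and $p\sim q \iff J(Z_p)\cup J(S) = [m] \wedge J(Z_p)\cap J(S)\neq\emptyset$ are correct, and the factorization $\prod_{j\notin J(Z_p)}(2^{|a(j)|}-1)\cdot 2^{|A'|}$ (using that the $a(j)$ partition $S$) together with the three-way parity split delivers $(\star)$. Two presentational caveats: the ``subtracting one'' step is only literally a subtraction of one when $J(Z_p)\cup J(S)=[m]$ (in the first case the count of valid $S'\subseteq S$ is already $0$, not ``$-1$''), and the odd product $\prod_{j\notin J(Z_p)}(2^{|a(j)|}-1)$ should be stated as $\equiv 1\pmod 2$ rather than as $1$; both are clearly what you intend, but a referee would want them said explicitly. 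The decomposition into the fixed part $\tilde p$ plus a varying zero-set is a nice way to isolate the $S'$-dependence and makes $(\star)$ a genuinely local computation on $S$.
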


\begin{definition}
    Given a $k$-ary operation over patterns $\op\colon \Pat^k \rightarrow \Pat$, we say that $\op$ preserves $\Pat^*$-parity-representation over $\hat{\Pat}$ for two families of patterns $\Pat^*, \hat{\Pat}\subseteq \Pat$ if and only if
    for all sets $S_1,\dots, S_k, T_1,\dots T_k \subseteq \hat{\Pat}$ such that $T_i$ parity-represents $S_i$ over $\Pat^*$ for all $i\in[k]$ it holds that $\exec(S_1,\dots, S_k)$ parity represents $\exec(T_1,\dots, T_k)$ over $\Pat^*$.
\end{definition}

\begin{lemma}[{\cite[Lemma 8.12]{DBLP:journals/corr/abs-2307-14264/BojikianK23}}]\label{lem:ops-pres-parity-rep}
    The operations join, relabel and union, as defined in \cref{def:patops}, preserve $\Cp$-parity-representation over $\Cp$.
\end{lemma}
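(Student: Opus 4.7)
The plan is to handle each of the three operations separately and, in each case, to reduce the parity-representation statement to an identity on individual patterns. For any $k$-ary operation $\mathrm{op}$ and any test pattern $q \in \Cp$, the exclusive version satisfies
\[
\big|\{p \in \exop{\mathrm{op}}(T_1, \dots, T_k) \colon p \sim q\}\big|
\equiv
\sum_{(p_1, \dots, p_k) \in T_1 \times \dots \times T_k}
\big[\,\mathrm{op}(p_1, \dots, p_k) \sim q\,\big] \pmod 2,
\]
so it suffices to decouple the right-hand indicator so that the parity-representation of each $T_i$ by $S_i$ can be applied to one argument at a time.

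For relabel, I would establish an ``adjoint'' reformulation: for any $q \in \Cp$ there is a complete pattern $q' \in \Cp$, depending only on $q$, $i$, and $j$ (obtained by suitably splitting the label $j$ in $q$ into copies of $i$ and $j$), such that $p_{i \to j} \sim q$ if and only if $p \sim q'$. Substituting and applying the parity-representation of $T$ by $S$ at the new test pattern $q'$ yields the case.

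For union, the fact that $\punion$ only combines zero-sets while leaving all other sets intact makes consistency factor cleanly: there is a deterministic decomposition under which $p_1 \punion p_2 \sim q$ is equivalent to $p_1 \sim q^{(1)}$ and $p_2 \sim q^{(2)}$, where $q^{(1)}, q^{(2)} \in \Cp$ depend only on $q$ and the label partition determined by $q$. Applying parity-representation once in each argument then gives the result.

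For join, the central identity is the associativity of $\join$: since $(p_1 \join p_2) \sim q$ holds if and only if $p_1 \sim (p_2 \join q)$, one can peel off the second argument, apply the parity-representation of $T_1$ by $S_1$, and then do a symmetric argument in $p_2$. The main obstacle here is that $p_2 \join q$ need not remain in $\Cp$, so parity-representation over $\Cp$ does not apply directly to the intermediate test pattern. I would overcome this by using $\parrep$ together with \cref{lem:parrep-parity-rep} to re-express the test against the non-complete intermediate pattern as a mod-$2$ sum of tests against complete surrogate patterns, and then invoke the parity-representation hypothesis on each surrogate. This surrogate step is where the real technical care lies.
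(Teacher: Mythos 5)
The lemma is cited from the earlier paper of Bojikian and Kratsch (their Lemma~8.12), so this paper contains no proof to compare against; I will therefore assess your proposal on its own terms. Your high-level skeleton (reduce the exclusive-version claim to a sum of indicators, then apply the parity-representation hypothesis one argument at a time via an ``adjoint'' reformulation of the test) is reasonable, but every one of the three decoupling steps has a concrete gap, and in each case the gap is the same in kind: the reformulated test pattern escapes $\Cp$, where your hypothesis no longer applies.

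For join, associativity of $\join$ does give $(p_1\join p_2)\sim q \iff p_1\sim (p_2\join q)$, and that much is fine. But $p_2\join q$ need not be complete even when $p_2,q\in\Cp$; for instance $p_2=\{\{0\},\{1\},\{2\},\{1,2\}\}$ and $q=\{\{0\},\{1\},\{2\}\}$ give $p_2\join q=\{\{0\},\{1,2\}\}\notin\Cp$. Your proposed fix is to use $\parrep$ to express the test against this non-complete pattern as a mod-$2$ sum of tests against complete surrogates, but this is not available: $\parrep$ (\cref{def:parrep}) is defined only for $p\in\Cp$, and \cref{lem:parrep-parity-rep} says $\parrep(p)$ parity-represents $p$ \emph{when $p$ is complete and when the test $q$ ranges over $\Cp$}. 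You would need the reverse direction --- replacing a non-complete \emph{test} pattern by complete surrogates while the patterns being tested stay in $\Cp$ --- and no lemma in the paper (or in what you cite) provides that. For relabel, the adjoint $q'$ obtained by adding $i$ to every set of $q$ containing $j$ does satisfy $p_{i\to j}\sim q \iff p\sim q'$, but as soon as $j\in\lbs(q)$ the singleton $\{j\}\in q$ becomes $\{i,j\}\in q'$, so $q'\notin\Cp$ and the same obstruction recurs; the simpler candidate $q'=q_{j\to i}$, which would stay in $\Cp$, fails once $p$ contains both $i$ and $j$ in a non-singleton set (e.g.\ $p=\{\{0\},\{1\},\{2\},\{1,2\}\}$, $q=\{\{0,2\},\{2\}\}$, $i=1$, $j=2$: here $p_{1\to2}\sim q$ but $p\not\sim q_{2\to1}$). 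For union, the claimed factorization $p_1\punion p_2\sim q \iff p_1\sim q^{(1)} \land p_2\sim q^{(2)}$ with $q^{(1)},q^{(2)}$ depending only on $q$ is simply false: the split of $\lbs(q)$ between the two arguments is determined by $p_1,p_2$ and not by $q$. Taking $q=\{\{0,1,2\},\{1\},\{2\}\}$, both the pair $(p_1,p_2)=(\{\{0\},\{1\}\},\{\{0\},\{2\}\})$ and the swapped pair $(\{\{0\},\{2\}\},\{\{0\},\{1\}\})$ satisfy $p_1\punion p_2\sim q$, yet $\{\{0\},\{1\}\}$ and $\{\{0\},\{2\}\}$ have different label sets and hence cannot both be consistent with a single fixed $q^{(1)}\in\Cp$ as a biconditional would require. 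Any correct proof of this lemma must contend directly with these escapes from $\Cp$; as written, your proposal does not.
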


\begin{lemma}[{\cite[Lemma 8.14]{DBLP:journals/corr/abs-2307-14264/BojikianK23}}] \label{lem:actions-pres-parity-rep}
    Let $S,T \subseteq \Cp$ be two families of complete patterns, such that $T$ parity-represents $S$ over $\Cp$, and $i,j\in[k]$.
    Then it holds for all $\ell\in[4]$ that $\exac(\patadd_{i,j}T,\ell)$ parity-represents $\exac(\patadd_{i,j}S,\ell)$ over $\Cp$.
\end{lemma}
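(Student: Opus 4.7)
The plan is to reduce the claim to a single pullback identity over $\Cp$. Unfolding $\exac$ and using that symmetric differences collapse to parity sums of indicators modulo~$2$, the claim is equivalent to
\[
\sum_{p\in T}[\action(\patadd_{i,j}p,\ell)\sim q]\;\equiv\;\sum_{p\in S}[\action(\patadd_{i,j}p,\ell)\sim q]\pmod 2
\]
for every $q\in\Cp$. It therefore suffices to produce, for every $q\in\Cp$, a finite family $Q=Q(q,\ell,i,j)\subseteq\Cp$ depending only on $q,\ell,i,j$ such that the pullback identity
\[
[\action(\patadd_{i,j}p,\ell)\sim q]\;\equiv\;\sum_{q'\in Q}[p\sim q']\pmod 2
\]
holds for every $p\in\Cp$. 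Once established, summing both sides over $p\in T$ and over $p\in S$, and invoking the parity-representation hypothesis of $T$ over $\Cp$ termwise for each $q'\in Q\subseteq\Cp$, concludes the proof.

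I would construct $Q$ by a case analysis on $\ell$ and on the configuration of $\lbs(p)\cap\{i,j\}$, using the following ingredients. First, associativity of the join gives the rewrite $[(p\join[ij])\sim r]=[p\sim(r\join[ij])]$ when $\{i,j\}\subseteq\lbs(p)$, together with the trivial identity $[\patadd_{i,j}p\sim r]=[p\sim r]$ otherwise. Second, the operation $\action(\cdot,\ell)$ is a composition of fix/forget operations on the incomplete labels $\{i,j\}$ of $\patadd_{i,j}p$, whose effect on consistency with $q$ can be computed explicitly: for instance, the singletons added by $\fix$ can be seen not to alter the consistency outcome because they connect to exactly the same $q$-sets as the big merged set containing $i,j$, and $\forget$ is handled symmetrically. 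Third, \cref{lem:parrep-parity-rep} is invoked to XOR-decompose consistencies involving the non-complete intermediate patterns $r\join[ij]$ that arise from the associativity rewrite, yielding an expression solely in terms of consistencies with complete patterns in $\Cp$.

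The main obstacle is gluing the two branches of the conditional in $\patadd_{i,j}$ into a single XOR formula that holds uniformly over all $p\in\Cp$. The branching condition $\{i,j\}\subseteq\lbs(p)$ is not itself a consistency query on $p$, so one has to verify via a finite case enumeration (over the $O(1)$ configurations of $\lbs(p)\cap\{i,j\}$ and $\inc(\patadd_{i,j}p)$) that the two-branch XOR formulas merge into a common family $Q\subseteq\Cp$ with $|Q|$ bounded by a constant depending only on $(q,\ell,i,j)$. This is a more intricate variant of the bookkeeping already performed in \cref{lem:ops-pres-parity-rep} for join, relabel, and union, extended to cover the conditional join implicit in $\patadd_{i,j}$ and the fix/forget actions of $\action$.
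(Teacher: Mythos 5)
The paper does not prove this lemma; it is cited from \cite[Lemma 8.14]{DBLP:journals/corr/abs-2307-14264/BojikianK23}, so there is no in-paper proof to compare against, but the proposal can be assessed on its own terms. Your reduction to the pullback identity
\[
[\action(\patadd_{i,j}p,\ell)\sim q]\;\equiv\;\sum_{q'\in Q}[p\sim q']\pmod 2 \qquad\text{for all }p\in\Cp
\]
is correct: once such a $Q=Q(q)\subseteq\Cp$ exists for each $q\in\Cp$, the lemma follows by summing over $T$ and $S$ and invoking the hypothesis termwise. This is precisely the statement that the pullback of the consistency query under $p\mapsto\action(\patadd_{i,j}p,\ell)$ lies in the $\mathbb{F}_2$-row space of the consistency matrix restricted to $\Cp$, and that existence statement is the entire content of the lemma.

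Your sketch for constructing $Q$, however, rests on a false claim. You assert that the singletons introduced by $\fix$ ``do not alter the consistency outcome because they connect to exactly the same $q$-sets as the big merged set.'' That fails whenever the fixed label is absent from $\lbs(q)$: take $p=\{\{0,1,2\}\}$ (so $\inc(p)=\{1,2\}$) and the complete pattern $q=\{\{0\},\{2\}\}$. Then $p\join q=\{\{0,1,2\}\}$, so $p\sim q$; but $\fix(p,1)=\{\{0,1,2\},\{1\}\}$, the new singleton $\{1\}$ intersects no set of $q$, so $\fix(p,1)\join q=\{\{0,1,2\},\{1\}\}$ and $\fix(p,1)\not\sim q$. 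The singleton's $q$-neighbourhood is a proper subset of the merged set's, not the same set, and $\forget$ changes consistency analogously, so both of your fix/forget ingredients are incorrect. Moreover, the branching condition $\{i,j\}\subseteq\lbs(p)$, when multiplied into the consistency indicator, a priori produces a quadratic rather than linear form in consistency queries over $\Cp$; you flag this gluing as ``the main obstacle'' but supply no mechanism for collapsing the cross-terms, and the heuristic you do offer would steer the case analysis to a wrong conclusion. In short: correct framework and correct reduction, but the key pullback identity is unproven and the intuition offered for it is false.
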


\begin{lemma} \label{lem:vectors-pres-parity-rep}
    Let $\op$ be some $k$-ary operation over patterns that preserves $\Pat^*$-parity-representation over $\hat{\Pat} \subseteq \Pat$ for two families $\Pat^*, \hat{\Pat}$. Let $P_1,P_2\subseteq \hat{\Pat}$ be two subfamilies, both closed under $\op$. Let $S_1, \dots S_k \in \{0,1\}^{P_1}$, and $T_1, \dots T_k\in\{0,1\}^{P_2}$. If it holds for each $i\in[k]$, and for each $q\in \Pat^*$ that 
    \[
        \sumstack{p\in P_1\\p\sim q}S_i\ind{p}\bquiv \sumstack{p\in P_2\\p\sim q}T_i\ind{p},
    \]
    then for $S^*\in \{0,1\}^{P_1}$, with 
    \[S^*\ind{p}= \sumstack{(p_1,\dots, p_k)\in P_1^k\\
    \op(p_1,\dots,p_k)=p}
    \Big(S_1\ind{p_1}\cdot S_2\ind{p_2}\dots S_k\ind{p_k}\Big),\]
    and for $T^*\in \{0,1\}^{P_2}$, with
    \[T^*\ind{p}= \sumstack{(p_1,\dots, p_k)\in  P_2^k\\
    \op(p_1,\dots,p_k)=p}
    \Big(T_1\ind{p_1}\cdot T_2\ind{p_2}\dots T_k\ind{p_k}\Big),\]
    it holds for all $q\in P^*$ that
    \[
    \sumstack{p\in P_1\\p\sim q}S^*\ind{p}\bquiv \sumstack{p\in P_2\\p\sim q}T^*\ind{p},
    \]
\end{lemma}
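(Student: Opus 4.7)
The plan is to view each $\{0,1\}$-valued vector $S_i \in \{0,1\}^{P_1}$ as the indicator of the subset $\mathcal{S}_i = \{p \in P_1 : S_i[p] = 1\} \subseteq \hat{\Pat}$, and analogously $\mathcal{T}_i = \{p \in P_2 : T_i[p] = 1\} \subseteq \hat{\Pat}$, and then to express both sides of the desired congruence in terms of these sets via the exclusive operation $\exop$.

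First, I would rewrite the hypothesis. For every $i \in [k]$ and $q \in \Pat^*$, the assumed congruence $\sum_{p \in P_1, p \sim q} S_i[p] \equiv \sum_{p \in P_2, p \sim q} T_i[p] \pmod{2}$ reads $|\{p \in \mathcal{S}_i : p \sim q\}| \equiv |\{p \in \mathcal{T}_i : p \sim q\}| \pmod{2}$, which is exactly the statement that $\mathcal{T}_i$ parity-represents $\mathcal{S}_i$ over $\Pat^*$ (the relation being symmetric).

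Second, I would identify the mod-$2$ reduction of $S^*$ with the indicator of $\exop(\mathcal{S}_1, \dots, \mathcal{S}_k)$. By the defining formula of $S^*[p]$ and the fact that each $S_i[p_i] \in \{0,1\}$, the value $S^*[p]$ equals the number of tuples $(p_1, \dots, p_k) \in \mathcal{S}_1 \times \cdots \times \mathcal{S}_k$ with $\op(p_1, \dots, p_k) = p$; by the definition of $\exop$, this count is odd iff $p$ lies in the symmetric difference $\bigdelta_{(p_1, \dots, p_k) \in \mathcal{S}_1 \times \cdots \times \mathcal{S}_k} \{\op(p_1, \dots, p_k)\} = \exop(\mathcal{S}_1, \dots, \mathcal{S}_k)$. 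Because $P_1$ is closed under $\op$, this set lies inside $P_1$, which is consistent with $S^*$ being indexed by $P_1$. The analogous identification holds for $T^*$ and $P_2$.

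Finally, I would invoke the hypothesis that $\op$ preserves $\Pat^*$-parity-representation over $\hat{\Pat}$: applied to $\mathcal{S}_i, \mathcal{T}_i \subseteq \hat{\Pat}$ with $\mathcal{T}_i$ parity-representing $\mathcal{S}_i$ for each $i$, it yields that $\exop(\mathcal{S}_1, \dots, \mathcal{S}_k)$ parity-represents $\exop(\mathcal{T}_1, \dots, \mathcal{T}_k)$ over $\Pat^*$. Unpacking this for an arbitrary $q \in \Pat^*$ and combining with the identifications of the previous paragraph delivers exactly the claimed congruence between $\sum_{p \in P_1, p \sim q} S^*[p]$ and $\sum_{p \in P_2, p \sim q} T^*[p]$ modulo $2$. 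The main obstacle, and really the only non-routine step, is the combinatorial identification of the convolution-style sum defining $S^*$ (reduced mod $2$) with the exclusive operation $\exop$ applied to the indicator sets; once this correspondence is made explicit, the rest is a direct application of the preservation hypothesis together with the closure of $P_1$ and $P_2$ under $\op$.
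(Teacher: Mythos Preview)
Your proposal is correct and follows essentially the same approach as the paper: define the indicator sets $X_i, Y_i$ from the vectors, translate the hypothesis into parity-representation, identify the mod-$2$ reduction of $S^*$ and $T^*$ with the exclusive operation $\exop$ applied to these sets, and then invoke the preservation hypothesis. The only cosmetic difference is naming, and you make explicit the role of the closure of $P_1,P_2$ under $\op$, which the paper leaves implicit.
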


\begin{proof}
    We define the sets $X_1,\dots,X_k \subseteq P_1$ as 
    \[X_i = \{p \in P_1\colon S_i\ind{p} = 1\}.\]
    Similarly, we define $Y_1,\dots,Y_k \subseteq P_2$ as 
    \[Y_i = \{p \in P_2\colon T_i\ind{p} = 1\}.\]
    For $q\in\Pat^*$ and $i\in[k]$ it holds that 
    \[|\{p\in X_i\colon p\sim q\}| \bquiv \sumstack{p\in P_1\\p\sim q}S_i\ind{p} \bquiv \sumstack{p\in P_2\\p\sim q}T_i\ind{p} \bquiv |\{p\in Y_i\colon p\sim q\}|\]
    Hence, it holds that $X_i$ parity-represents $Y_i$ over $\Pat^*$ for each $i\in [k]$. Let $X^* = \exec(X_1,\dots, X_k)$, and $Y^* = \exec(Y_1,\dots,Y_k)$. Since $\op$ preserves $\Pat^*$-parity-representation over $\hat{\Pat}$, it holds that $X^*$ parity-represents $Y^*$ over $\Pat^*$.
    We claim that 
    \[X^* = \{p\in P_1\colon S^*\ind{p} = 1\},\]
    and
    \[Y^* = \{p\in P_2\colon T^*\ind{p} = 1\}\]
    hold. Assuming this is the case, it follows for $q\in \Pat^*$ that 
    \begin{align*}
    \sumstack{p\in P_1\\p\sim q}S^*\ind{p} &\bquiv |\{p\in X^*\colon p\sim q\}|\\
    &\bquiv |\{p\in Y^*\colon p \sim q\}\bquiv \sumstack{p\in P_2\\p\sim q}T^*\ind{p}.
    \end{align*}
    We only prove the former claim, the latter follows analogously, by exchanging $X^*, X_i, S^*, S_i, P_1$ with $Y^*, Y_i, T^*, T_i, P_2$ respectively. It holds for $p\in P_1$ that
    \begin{align*}
        p \in X^* &\iff p \in \bigdelta\limits_{(p_1,\dots p_k)\in X_1\times\dots\times X_k}
            \big\{\op(p_1,\dots, p_k)\big\}\\
        &\iff \sum\limits_{(p_1,\dots p_k)\in X_1\times\dots\times X_k}
        [\op(p_1,\dots, p_k)=p] \bquiv 1\\
        &\iff \sum\limits_{(p_1,\dots p_k)P_1^k}
        [p_1\in X_1]\cdot \ldots \cdot [p_k\in X_k]\cdot [\op(p_1,\dots, p_k)=p] \bquiv 1\\
        &\iff \sumstack{(p_1,\dots p_k)\in P_1^k\\ \op(p_1,\dots, p_k)=p}
        S_1\ind{p_1} \cdot \ldots \cdot S_k\ind{p_k} \bquiv 1\\
        &\iff S^*\ind{p} \bquiv 1.\\
    \end{align*}
\end{proof}

\begin{lemma}\label{lem:alg-counts-sequences}
    For each node $x\in V(\syntaxtree)$, for all values $\budget, \weight$, and all $c \in \Coloring$ and $q\in\Cp$, it holds that 
    \[
        \sumstack{p\in\CSP\\p\sim q}T\nodeind\ind{(p,c)}\bquiv 1,
    \]
    if and only if there exists an odd number of valid action-sequences $\tau$ at $x$ of cost $\budget$ and weight $\weight$ generating the pair $(p,c)$, where $p\sim q$.
\end{lemma}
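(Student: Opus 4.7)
The plan is to prove the statement by structural induction on $\syntaxtree$, strengthening the claim to an intermediate invariant that bridges the CS-patterns indexing $T$ and the complete patterns generated by action-sequences. Specifically, I will strengthen the claim to: for every node $x$ and all $(\budget, \weight, c)$ and $p \in \CSP$,
\[
T\nodeind\ind{(p,c)} \bquiv \sumstack{\hat p \in \Cp\\ p \in \parrep(\hat p)} A_x(\budget, \weight, c, \hat p),
\]
where $A_x(\budget, \weight, c, \hat p)$ denotes the number of valid action-sequences $\tau$ at $x$ of cost $\budget$ and weight $\weight$ that generate $(\hat p, c)$. From this invariant, the lemma follows by summing over $p \in \CSP$ with $p \sim q$, swapping the order of summation, and invoking \cref{lem:parrep-parity-rep}, which gives $|\{p \in \parrep(\hat p) \colon p \sim q\}| \bquiv [\hat p \sim q]$ for all $q \in \Cp$.

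The base case (introduce node $\mu_x = i(v)$) is a direct check: each of the four initialized entries of $T\nodeind$ corresponds to exactly one value of $\tau(x) \in [4]$ with matching cost, weight, generated pattern, and coloring. Since $\wnodes_x = \{x\}$ contains no join node, validity is automatic, and the generated complete patterns $[0]$, $\fix(p, i)$, $\forget(p, i)$ are already CS-patterns with $\parrep$ a singleton, so the invariant holds entry-by-entry.

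For the inductive step, the relabel and union cases will proceed via \cref{lem:vectors-pres-parity-rep}. For a relabel node, $\wnodes_x = \wnodes_{x'}$ and action-sequences at $x$ coincide with those at $x'$ with the generated pattern and coloring relabeled; since relabel preserves parity-representation over $\Cp$ by \cref{lem:ops-pres-parity-rep}, the invariant transfers directly. For a union node, action-sequences at $x$ decompose as disjoint unions of action-sequences at the two children, with patterns combining via $\punion$, colorings via $\cjoin$, and costs and weights adding; stratifying over the pairs $(\budget_1,\budget_2,\weight_1,\weight_2)$ and $(c_1, c_2)$ with $c_1 \cjoin c_2 = c$ and applying \cref{lem:vectors-pres-parity-rep} with $\op = \punion$ yields the invariant at $x$.

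The main obstacle will be the join node $\mu_x = \eta_{i,j}(\mu_{x'})$, where several effects must be reconciled simultaneously. An action-sequence $\tau$ at $x$ decomposes as $\tau = \tau' \dot\cup \{x \mapsto \ell\}$ with $\acpat{\tau}{x} = \action(\patadd_{i,j}\acpat{\tau'}{x'}, \ell)$ and $\ccol{\tau}{x} = \ccol{\tau'}{x'}$, and validity additionally requires $c(i) \sim c(j)$. The algorithm's update mirrors this decomposition precisely: the case $c(i) \not\sim c(j)$ zeros out invalid sequences, the sum over $\ell \in [4]$ with weight shift $\weight - \weightf(x, \ell)$ enumerates the possible action at $x$, and the constraint $p \in \parrep(\action(\patadd_{i,j}p', \ell))$ re-expands through the join's effect. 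The core tool is \cref{lem:actions-pres-parity-rep}, which packages $\patadd_{i,j}$ and $\action(\cdot, \ell)$ together and shows they preserve $\Cp$-parity-representation over $\Cp$; a careful rearrangement of the double summation, swapping the inner summation over complete patterns $\hat p' \in \Cp$ at $x'$ with the outer summation over $p' \in \CSP$ via the defining expansion of $\parrep$, will conclude the join case and hence the induction.
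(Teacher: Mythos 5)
Your overall strategy --- induction over $\syntaxtree$ comparing $T$ against a count of valid action-sequences, relying on \cref{lem:vectors-pres-parity-rep}, \cref{lem:ops-pres-parity-rep}, \cref{lem:actions-pres-parity-rep} and \cref{lem:parrep-parity-rep} --- mirrors the paper's proof, and your base case and your derivation of the lemma from the invariant are fine. The gap is in the strengthening. You propose the \emph{entry-wise} invariant $T\nodeind\ind{(p,c)} \bquiv \sum_{\hat p \in \Cp \colon p \in \parrep(\hat p)} A_x(\budget, \weight, c, \hat p)$ (with $A_x$ as you define it), but the lemmas you invoke cannot carry this through the inductive step. \cref{lem:vectors-pres-parity-rep} is stated with \emph{summed} conditions on both its hypothesis and its conclusion: assuming that, for every $q\in\Cp$, the sums of $S_i$ and $T_i$ over $\{p \colon p\sim q\}$ agree modulo two, it concludes the same summed condition for the convolved tables $S^*$ and $T^*$; it does \emph{not} conclude that $S^*$ and $T^*$ agree coordinate by coordinate. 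So ``the invariant transfers directly'' for relabel and ``applying \cref{lem:vectors-pres-parity-rep} yields the invariant'' for union are unjustified. What the cited lemmas actually deliver from the inductive hypothesis is only
\[
\sumstack{p\in\CSP\\p\sim q}T\nodeind\ind{(p,c)} \;\bquiv\; \sumstack{p\in\CSP\\p\sim q}\ \sumstack{\hat p \in \Cp\\ p\in\parrep(\hat p)} A_x(\budget,\weight,c,\hat p) \qquad\text{for every } q\in\Cp,
\]
not the pointwise version. Promoting this to a coordinate-wise identity would require either a proof that the exclusive versions of $\punion$, relabeling, and the composition $\action(\cdot,\ell)\circ\patadd_{i,j}$ commute \emph{exactly} with $\parrep$ (as equalities of families of CS-patterns), or that the $\CSP\times\Cp$ consistency matrix has trivial left kernel modulo two; neither is established in the paper, and neither follows from the cited machinery.

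The remedy is to make the summed relation itself the induction hypothesis, which is precisely the paper's approach. Define $D\nodeind\ind{(p,c)}$, for $p\in\Cp$, as the number of valid action-sequences at $x$ of cost $\budget$ and weight $\weight$ generating $(p,c)$, and prove by induction that $\sum_{p\in\CSP,\ p\sim q}T\nodeind\ind{(p,c)}\bquiv \sum_{p\in\Cp,\ p\sim q}D\nodeind\ind{(p,c)}$ for all $q\in\Cp$. Then \cref{lem:vectors-pres-parity-rep}, \cref{lem:ops-pres-parity-rep} and \cref{lem:actions-pres-parity-rep} apply as intended at the relabel, union and join nodes, the $\parrep$ appearing in the join update is discharged by \cref{lem:parrep-parity-rep}, and the invariant at $x$ is literally the statement of \cref{lem:alg-counts-sequences} once the right-hand side is read off. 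The rest of your plan, including the join-case rearrangement you sketch, carries over unchanged after this weakening.
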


\begin{proof}
    For $p\in\Cp$ and $c\in\Coloring$, let $D\nodeind\ind{(p,c)}$ be the number of valid action-sequences at $x$ of cost $\budget$ and weight $\weight$ generating the pair $(p,c)$.
    We prove by induction over $\syntaxtree$ that for all $x\in\nodes$, for all values $\budget,\weight$, for all $c\in \Coloring$ and all complete patterns $q\in \Cp$ it holds that
    \[
        \sum\limits_{\substack{p\in\CSP\\p\sim q}}T\nodeind\ind{(p,c)}\bquiv
        \sum\limits_{\substack{p\in\Cp\\p\sim q}}D\nodeind\ind{(p,c)}.
    \]
    \begin{itemize}
        \item Introduce node $\mu_x = i(v)$: Clearly there exist four valid action-sequences at $x$, each generating a complete pattern that is a $CS$-pattern as well. Hence, it holds for $p\in \CSP$ that 
        \[D\nodeind\ind{(p,c)} = T\nodeind\ind{(p,c)},\]
        and $D\nodeind\ind{(p,c)} = 0$ for $p\in \Cp\setminus \CSP$.
            
        \item Relabel node $\mu_x = \rho_{i\rightarrow j}(\mu_{x'})$:
        Let $(p, c) \in \Cp \times \Coloring$. For each pair $(p', c')\in \Cp\times \Coloring$ with $(p')_{i\rightarrow j} = p$ and $(c')_{i\rightarrow j} = c$ it holds that each valid action-sequence at $x'$ generating $(p', c')$, is a valid action-sequence at $x$ that generates $(p, c)$. It also holds for each valid action-sequence $\tau$ at $x$ generating $(p, c)$, that $\tau_{x'}$ generates a pair $(p', c')$ at $x'$ where $(p')_{i\rightarrow j} = p$ and $(c')_{i\rightarrow j} = c$. Hence, it holds that
        \[
         D\nodeind\ind{(p,c)} = \sum\limits_{\substack{(p',c')\in\Cp\times\Coloring,\\p'_{i\rightarrow j} = p \land c'_{i\rightarrow j} = c}} D\ind{x',\budget,\weight}\ind{(p',c')}.   
        \] 
        By induction hypothesis, it holds for all values $\budget,\weight$, and all $c\in \Coloring$ and $q\in\Cp$ that 
        \[
            \sumstack{p\in\Cp\\p\sim q}D\subnodeind\ind{(p, c)}\bquiv
            \sumstack{p\in\CSP\\p\sim q}T\subnodeind\ind{(p, c)}.
        \]
        It holds by \cref{lem:ops-pres-parity-rep} that $i\rightarrow j$ preserves $\Cp$-parity-representation over $\Cp$ (and hence, over $\CSP$).
        Hence, it holds by \cref{lem:vectors-pres-parity-rep} and the definition of $T\nodeind$ for a relabel node $x$, for $c\in \Coloring$ and $q\in \Cp$ that
        \[
            \sumstack{p\in\Cp\\p\sim q}D\nodeind\ind{(p, c)}\bquiv
            \sumstack{p\in\CSP\\p\sim q}T\nodeind\ind{(p, c)}.
        \]
     
        \item Union node $\mu_x = \mu_{x_1} \union \mu_{x_2}$:
        Let $(p, c) \in \Cp \times \Coloring$. For all pairs $(p_1, c_1), (p_2,c_2)\in \Cp\times \Coloring$ with $p_1 \punion p_2 = p$ and $c_1 \cjoin c_2 = c$ it holds that for each two valid action-sequences $\tau_1$ generating $(p_1, c_1)$ at $x_1$, and $\tau_2$ generating $(p_2, c_2)$ at $x_2$, that $\tau=\tau_1 \dot\cup \tau_2$ is a valid action-sequence at $x$ that generates $(p, c)$. Also, for each valid action-sequence $\tau$ at $x$ generating $(p, c)$, there exist two tuples $(p_1, c_1), (p_2,c_2)$ with  $p_1 \punion p_2 = p$ and $c_1 \cjoin c_2 = c$, such that $\tau_{x_1}$ generates $(p_1, c_1)$ at $x_1$ and $\tau_{x_2}$ generates $(p_2, c_2)$ at $x_2$. Hence, it holds that
        \[
        D\nodeind\ind{(p,c)} =
        \sumstack{\budget_1+\budget_2=\budget\\\weight_1+\weight_2=\weight}
        \Big(\sumstack{p_1\punion p_2 = p\\c_1\cjoin c_2=c}
        D\leftind\ind{(p_1,c_1)} \cdot D\rightind\ind{(p_2,c_2)}\Big).
        \]
        By induction hypothesis, it holds for $x'\in\{x_1, x_2\}$ and all $\budget,\weight$, $c\in \Coloring$ and $q\in\Cp$ that 
        \[
            \sumstack{p\in\Cp\\p\sim q}D\subnodeind\ind{(p,c)}\bquiv
            \sumstack{p\in\CSP\\p\sim q}T\subnodeind\ind{(p,c)}.
        \]

        It holds by \cref{lem:ops-pres-parity-rep} that $\punion$ preserves $\Cp$-parity-representation over $\Cp$ (and hence, over $\CSP$).
        Hence, it holds by \cref{lem:vectors-pres-parity-rep} and the definition of $T\nodeind$ for a union node $x$, for $c\in \Coloring$ and $q\in \Cp$ that
        \[
            \sumstack{p\in\Cp\\p\sim q}D\nodeind\ind{(p,c)}\bquiv
            \sumstack{p\in\CSP\\p\sim q}T\nodeind\ind{(p,c)}.
        \]
        
        \item Join node $\mu_x = \eta_{i,j} (\mu_{x'})$:
        For $c\in\Coloring$ with $c(i)\not\sim c(j)$, it holds clearly for each complete pattern $p$, that $D\nodeind\ind{(p,c)} = 0$.
        Since it also holds that $T\nodeind\ind{(p,c)} = 0$ for each $p\in\CSP$ by definition, the claim follows in this case.
        Now assume that $c(i)\sim c(j)$.
        For $\ell \in [4]$, let $\tau$ be a valid action-sequence at $x$ generating the pair $(p,c)$ for some $p\in \Cp$, such that $\tau(x)= \ell$. It must hold that $\tau_{x'}$ is a valid action-sequence at $x'$ generating the pair $(p', c)$ with $\action(\patadd_{i,j}p', \ell) = p$. Moreover, since we assume that $c(i)\sim c(j)$, each valid action-sequence $\tau'$ at $x'$ generating the pair $(p', c)$ for some $p'\in\Cp$ can be extended by $x\mapsto \ell$ to a valid action-sequence at $x$ generating the pair $(p, c)$ with $p = \action(\patadd_{i,j}p', \ell)$. Hence, it holds that 
        \[
        D\nodeind\ind{(p,c)} =
        \sum\limits_{\ell\in[4]}
        \sumstack{p'\in\Cp\\ \action(\patadd_{i,j}p', \ell)=p}
        D\ind{x',\budget,\weight - \weightf(x, \ell)}\ind{(p',c)}.
        \]
        By induction hypothesis, it holds for all values $\budget,\weight$ and for all $q\in \Cp$ that
        \[
            \sumstack{p\in\Cp\\p\sim q}D\subnodeind\ind{(p, c)}\bquiv
            \sumstack{p\in\CSP\\p\sim q}T\subnodeind\ind{(p, c)}.
        \]

        We define the tables $T'\subnodeind\in\{0,1\}^{\Cp\times \Coloring}$ as
        \[
            T'\subnodeind\ind{(p,c)} = 
            \begin{cases}
                T\subnodeind\ind{(p,c)} &\colon p\in\CSP,\\
                0 &\colon \text{otherwise.}
            \end{cases}  
        \]
        Then clearly, it holds for all values $\budget, \weight$, and all $q\in\Cp$ that
        \[
            \sumstack{p\in\Cp\\p\sim q}D\subnodeind\ind{(p, c)}\bquiv
            \sumstack{p\in\Cp\\p\sim q}T'\subnodeind\ind{(p, c)}.
        \]
        Finally, let $T''\nodeind\in\{0,1\}^{\Cp \times \Coloring}$ defined by 
        \begin{align*}
            T''\nodeind\ind{(p,c)} &=
            \sum\limits_{\ell\in[4]}
            \sumstack{p'\in\Cp\\ \action(\patadd_{i,j}p', \ell)=p}
            T'\ind{x',\budget,\weight - \weightf(x, \ell)}\ind{(p',c)}\\
            &=
            \sum\limits_{\ell\in[4]}
            \sumstack{p'\in\CSP\\ \action(\patadd_{i,j}p', \ell)=p}
            T\ind{x',\budget,\weight - \weightf(x, \ell)}\ind{(p',c)}.
        \end{align*}
        It holds by \cref{lem:actions-pres-parity-rep} that actions preserve $\Cp$-parity-representation over $\Cp$, and hence, by \cref{lem:vectors-pres-parity-rep} that 
        \[
            \sumstack{p\in\Cp\\p\sim q}D\nodeind\ind{(p, c)}\bquiv
            \sumstack{p\in\Cp\\p\sim q}T''\nodeind\ind{(p, c)}.
        \]
        Finally, we show that 
        \[
        \sumstack{p\in\Cp\\p\sim q}T''\nodeind\ind{(p, c)}\bquiv
        \sumstack{p\in\CSP\\p\sim q}T\nodeind\ind{(p, c)},
        \]
        which is equivalent to
        \begin{align*}
        \sumstack{p\in\Cp\\p\sim q}
        \sum\limits_{\ell\in [4]}
        \sumstack{p'\in\CSP\\ \action(\patadd_{i,j}p', \ell)=p}
        &T\ind{x',\budget,\weight - \weightf(x, \ell)}\ind{(p',c)}\bquiv\\
        &\sumstack{p\in\CSP\\p\sim q}
        \sum\limits_{\ell\in [4]}
        \sumstack{p'\in\CSP\\ p\in \parrep\big(\action(\patadd_{i,j}p', \ell)\big)}
        T\ind{x',\budget,\weight - \weightf(x, \ell)}\ind{(p',c)}.
        \end{align*}
        This holds, if for each $\ell \in [4]$ it holds that
        \[
            \sumstack{p'\in\CSP\\ \action(\patadd_{i,j}p', \ell)\sim q}
            T\ind{x',\budget,\weight - \weightf(x, \ell)}\ind{(p',c)}\bquiv\\
            \sumstack{p'\in\CSP,\\ p\in \parrep\big(\action(\patadd_{i,j}p', \ell)\big),\\p\sim q}
            T\ind{x',\budget,\weight - \weightf(x, \ell)}\ind{(p',c)},
        \]
        which is the case, if for each $p'\in \CSP$ and $p=\action(\patadd_{i,j}p', \ell)$ it holds that $p\sim q$ if and only if $|\{ r\in\parrep(p)\colon p\sim q\}|$ is odd, which is clearly the case due to \cref{lem:parrep-parity-rep}.
    \end{itemize}
\end{proof}

\begin{corollary}\label{cor:odd-sequences-implies-table}
    For two values $\budget, \weight$, and a coloring $c\in\Coloring$, it holds that $T\rootind\ind{([0], c)} = 1$ if and only if there exists an odd number of valid action-sequences of cost $\budget$ and weight $\weight$ at $r$ generating $([0],c)$.
\end{corollary}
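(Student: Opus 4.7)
The plan is to derive this corollary as a nearly immediate consequence of \cref{lem:alg-counts-sequences} by specializing it to $x = r$ and $q = [0]$, and then collapsing both sums to a single term using \cref{lema:zero-only-complete-consistent}.

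First, I would instantiate \cref{lem:alg-counts-sequences} at the root $r$ with the complete pattern $q = [0]$, obtaining that
\[
  \sumstack{p\in\CSP\\ p\sim [0]} T\rootind\ind{(p,c)} \bquiv 1
\]
if and only if there is an odd number of valid action-sequences at $r$ of cost $\budget$ and weight $\weight$ generating some pair $(p,c)$ with $p\sim [0]$. The next step is to observe that on both sides of this equivalence, only one choice of $p$ is actually possible. By \cref{lema:zero-only-complete-consistent}, the only complete pattern consistent with $[0]$ is $[0]$ itself; since $\CSP \subseteq \Cp$ and actions (and hence the patterns generated by action-sequences) always produce complete patterns by \cref{def:actions}, the condition $p \sim [0]$ with $p \in \CSP$ (or $p \in \Cp$) forces $p = [0]$.

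Applying this collapse on the algorithmic side, the left-hand sum reduces to the single entry $T\rootind\ind{([0], c)}$, which takes values in $\{0,1\}$, so the congruence $\equiv 1 \pmod 2$ is equivalent to $T\rootind\ind{([0], c)} = 1$. Applying the same collapse on the combinatorial side, the valid action-sequences generating some $(p,c)$ with $p\sim [0]$ are precisely the valid action-sequences generating $([0], c)$. Combining both observations yields the claimed equivalence.

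There is no real obstacle here: the only subtlety is ensuring that action-sequences indeed generate complete patterns (so that \cref{lema:zero-only-complete-consistent} applies on the RHS), which is guaranteed by \cref{def:actions} together with \cref{lem:pats-closed-ops}, since $\punion$ and relabel preserve completeness and actions map into $\Cp$. Hence the proof amounts to citing \cref{lem:alg-counts-sequences}, noting that $[0]$ is the unique complete (and CS-) pattern consistent with $[0]$, and reading off the corollary.
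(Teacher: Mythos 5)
Your proposal is correct and follows essentially the same route as the paper's proof: apply \cref{lem:alg-counts-sequences} at $x=r$ with $q=[0]$, then use \cref{lema:zero-only-complete-consistent} to collapse the sum on the algorithmic side to the single entry $T\rootind\ind{([0],c)}$ and the collection on the combinatorial side to the action-sequences generating exactly $([0],c)$. The only difference is cosmetic: you spell out why action-sequences always generate complete patterns (via \cref{def:actions} and \cref{lem:pats-closed-ops}) and why the $\bmod\ 2$ congruence becomes an equality, both of which the paper leaves implicit.
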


\begin{proof}
    By \cref{lem:alg-counts-sequences}, it holds that
    \[\sumstack{p\in\CSP\\p\sim [0]}T\rootind\ind{(p,c)} \equiv_2 1,\]
    if and only if there exists an odd number of valid action-sequences $\tau$ at $r$ of cost $\budget$ and weight $\weight$ generating a pair $(p,c)$ with $p\sim [0]$.
    It holds by \cref{lema:zero-only-complete-consistent} that $[0]$ is the only complete pattern consistent with $[0]$, and hence, it holds that 
    \[T\rootind\ind{([0], c)} = \sumstack{p\in\CSP\\p\sim [0]}T\rootind\ind{(p,c)} \equiv_2 1,\]
    if and only if there exists an odd number of valid action-sequences $\tau$ at $r$ of cost $\budget$ and weight $\weight$ generating the pair $([0],c)$.
\end{proof}

\subsection{Isolation lemma}\label{sec:iso}

\begin{definition}\label{def:isolation}
    A weight function $\weightf:U\rightarrow \mathbb{Z}$ \emph{isolates} a set family $\mathcal{F}\subseteq 2^U$, if there exists a unique $S'\subseteq \mathcal{F}$ with $\weightf(S') = \min_{S\in\mathcal{F}}\weightf(S)$.
\end{definition}

\begin{lemma}[\cite{DBLP:journals/combinatorica/MulmuleyVV87}]\label{lem:iso}
    Let $\mathcal{F}\subseteq \pow{U}$ be a set family over a universe $U$ with $|\mathcal{F}|>0$, and let $N>|U|$ be an integer. For each $u\in U$, choose a weight $\omega(u)\in \{1,2,\dots N\}$ uniformly and independently at random. Then it holds that $\pr[\omega \text{ isolates } \mathcal{F}]\geq 1-|U|/N$.
\end{lemma}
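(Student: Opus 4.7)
The plan is to use the classical ambiguity argument of Mulmuley, Vazirani, and Vazirani. For a weight function $\omega$ and an element $u \in U$, I would call $u$ \emph{ambiguous} (with respect to $\omega$) if there exist two minimum-weight sets $S_1, S_2 \in \mathcal{F}$ with $u \in S_1 \Delta S_2$. My first observation would be that $\omega$ isolates $\mathcal{F}$ if and only if no element of $U$ is ambiguous: indeed, if two distinct minimum-weight sets $S_1 \neq S_2$ exist then any $u \in S_1 \Delta S_2$ (a nonempty set) witnesses ambiguity, and conversely ambiguity requires two distinct minimum-weight sets.

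The heart of the argument is to bound $\pr[u \text{ is ambiguous}]$ for a fixed $u \in U$. I would introduce
\[
  \alpha_u = \min\bigl\{\omega(S \setminus \{u\}) : S \in \mathcal{F},\ u \in S\bigr\}, \quad
  \beta_u  = \min\bigl\{\omega(S) : S \in \mathcal{F},\ u \notin S\bigr\},
\]
with the convention that a minimum over an empty family is $+\infty$. Then $u$ is ambiguous precisely when both quantities are finite and $\omega(u) + \alpha_u = \beta_u$. Conditioning on the values of $\omega(v)$ for all $v \neq u$, the quantities $\alpha_u$ and $\beta_u$ become fixed scalars, so the event that $u$ is ambiguous reduces to $\omega(u)$ taking the single value $\beta_u - \alpha_u$. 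Since $\omega(u)$ is independent of the other weights and uniform on a set of $N$ integers, this conditional probability is at most $1/N$; averaging over the conditioning gives $\pr[u \text{ is ambiguous}] \leq 1/N$.

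A union bound then concludes the proof:
\[
\pr[\omega \text{ does not isolate } \mathcal{F}] \leq \sum_{u \in U} \pr[u \text{ is ambiguous}] \leq \frac{|U|}{N},
\]
yielding $\pr[\omega \text{ isolates } \mathcal{F}] \geq 1 - |U|/N$ as claimed. I do not expect a substantial obstacle here, as this is a classical result; the only subtlety is handling the corner cases where one of the sub-families defining $\alpha_u$ or $\beta_u$ is empty, but in those cases $u$ is trivially non-ambiguous and the bound $1/N$ continues to hold. The crucial idea is the decoupling trick: revealing $\omega$ on $U \setminus \{u\}$ freezes $\alpha_u$ and $\beta_u$ before $\omega(u)$ is drawn, so that at most one outcome of $\omega(u)$ creates a tie at $u$.
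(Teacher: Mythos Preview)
Your proof is correct and is precisely the classical argument of Mulmuley, Vazirani, and Vazirani. The paper does not prove this lemma at all---it merely cites \cite{DBLP:journals/combinatorica/MulmuleyVV87} as the source---so there is no in-paper proof to compare against; your write-up supplies exactly the standard proof one would expect.
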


\begin{lemma}\label{lem:iso-imply-single-sol}
    Assume that $G$ contains a connected odd cycle transversal of size $\target$.
    Fix $\W = 8 \cdot |\wnodes|$, and let $\weightf$ is chosen uniformly at random in $[\W]^{\big(\wnodes \times [4]\big)}$. Let $\tau$ be a minimum weight valid action-sequence of cost $\target$ generating a pair $([0], c)$ for any coloring $c$. Then $\tau$ is unique with probability at least one half.
\end{lemma}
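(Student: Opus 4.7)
The plan is to reduce the statement to a direct application of the isolation lemma (Lemma~\ref{lem:iso}) by encoding valid action-sequences as subsets of a ground set on which the random weights live. I will take $U = \wnodes \times [4]$ as the ground set, so $|U| = 4|\wnodes|$, and I note that the prescribed value $\W = 8|\wnodes|$ satisfies $\W > |U|$, which is the hypothesis of the isolation lemma.

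To each action-sequence $\tau\colon \wnodes \to [4]$ I will associate the subset $S_\tau = \{(y,\tau(y)) : y \in \wnodes\} \subseteq U$. This encoding is clearly injective, since two distinct action-sequences disagree at some $y \in \wnodes$ and hence differ in their $y$-coordinate as subsets of $U$. It is also weight-preserving: $\omega(S_\tau) = \sum_{(y,i)\in S_\tau}\weightf(y,i) = \sum_{y\in\wnodes}\weightf(y,\tau(y)) = \weightf(\tau)$. I will then let $\mathcal{F} \subseteq \pow{U}$ be the image under $\tau \mapsto S_\tau$ of the family of valid action-sequences at $r$ of cost $\target$ generating a pair $([0], c)$ for some $c \in \Coloring$. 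Combining the assumption that $G$ contains a connected odd cycle transversal of size $\target$ (containing $v_0$) with Corollary~\ref{cor:sol-if-ac-seq} guarantees $\mathcal{F}\neq\emptyset$.

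Since $\weightf$ restricted to $\wnodes \times [4]$ is uniform in $[\W]^U$, Lemma~\ref{lem:iso} yields
\[
\Pr[\weightf \text{ isolates } \mathcal{F}] \;\geq\; 1 - \frac{|U|}{\W} \;=\; 1 - \frac{4|\wnodes|}{8|\wnodes|} \;=\; \frac{1}{2}.
\]
Whenever isolation occurs, there is a unique minimum-weight set in $\mathcal{F}$, and by injectivity and weight-preservation of the encoding, the corresponding minimum-weight action-sequence $\tau$ is unique. I do not anticipate any real obstacle here; the only mild subtlety is that $\weightf$ is formally defined on all of $\nodes \times [4]$, but values outside $\wnodes \times [4]$ never contribute to $\weightf(\tau)$ for any action-sequence, so applying the isolation lemma to the restriction of $\weightf$ to $\wnodes \times [4]$ is harmless.
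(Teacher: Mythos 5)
Your proof is correct and follows essentially the same route as the paper: both encode each action-sequence $\tau$ as the subset $\{(y,\tau(y)) : y \in \wnodes\}$ of $U = \wnodes\times[4]$, observe that this bijection preserves weights, and then apply the isolation lemma with $|U| = 4|\wnodes|$ and $N = \W = 8|\wnodes|$. You add two small clarifications the paper leaves implicit (nonemptiness of $\mathcal{F}$ via \cref{cor:sol-if-ac-seq}, and the harmlessness of $\weightf$ being defined on all of $\nodes\times[4]$), but the argument is the same.
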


\begin{proof}
    Let $P$ be the family of all valid action-sequences of cost $\target$ generating a pair $([0], c)$ for any coloring $c$. For each $\tau \in P$, the weight of $\tau$ is given by $\sum\limits_{x\in\wnodes}\weightf(x,\tau(x))$. Let $U = \wnodes\times[4]$, and for $\tau\in P$, let $F_{\tau} = \{(x,\tau(x))\colon x \in \wnodes\}$. Let $\mathcal{F} = \{F_{\tau}\colon \tau \in P\}$. Then $\tau\mapsto F_{\tau}$ is a bijection from $P$ to $\mathcal{F}$. Moreover, it holds that $\weightf(\tau) = \weightf(F_{\tau})$ for all $\tau \in P$. Finally, by \cref{lem:iso}, it holds that $\weightf$ isolates $\mathcal{F}$ with probability
    \begin{align*}
        \pr[\weightf \text{ isolates } \mathcal{F}] &\geq 1-4|\wnodes| / 8|\wnodes|\\
        & = 1 - 1/2 = 1/2.
    \end{align*}
\end{proof}

\subsection{Correctness of the algorithm}

\begin{proof}[Proof of \cref{theo:upper-bound}]
	Let $\W = 8 \cdot |\wnodes|$. First, choose a weight function $\weightf:\wnodes\times [4] \rightarrow \W$, by choosing $\weightf(x,\ell) \in [\W]$ independently and uniformly at random for each $(x, \ell)\in\wnodes\times[4]$. For each vertex $v\in V$ the algorithm repeats the following: the algorithm chooses $v_0 = v$, and then computes all tables $T\nodeind$ for all $x\in\nodes$, and all values $\budget \in [\target]$ and $\weight \in [4\cdot|\wnodes|\cdot\W]$ in a bottom-up manner over $\syntaxtree$. The algorithm returns yes, if for any choice of $v_0$ there exists a coloring $c\in\Coloring$ and a weight $\weight$ such that $T\solind\ind{([0],c)} = 1$. Otherwise, it returns no.

    If the graph does not contain a connected odd cycle transversal of size $\target$, then it holds for any choice of $v_0\in V$, by \cref{cor:sol-if-ac-seq}, that there exists no action-sequence of cost $\target$ and weight $\weight$ at $r$, that generates the pair $([0], c)$ for any weight $\weight$ and coloring $c$. By \cref{cor:odd-sequences-implies-table} it holds that $T\solind\ind{([0], c)} = 0$ for all values $\weight$ and any coloring $c$.
    
    On the other hand, if a solution of size $\target$ exists. Let $v_0$ be any vertex in any solution of size $\target$. Then by \cref{lem:iso-imply-single-sol} there exists a weight $\weight$ and a coloring $c$, such that with probability at least $1/2$ there exists a unique action-sequence at $r$ of cost $\target$ and weight $\weight$ generating the pair $([0], c)$, which implies by \cref{cor:odd-sequences-implies-table} that $T\solind\ind{([0], c)} = 1$. Hence, with probability at least $1/2$ the algorithm outputs yes.
\end{proof}

\section{Lower bound}\label{sec:lb}

Let $\var = \{v_1,\dots v_n\}$ be a set of variables.
In the $d$-SAT problem, an instance $I$ is a set of clauses $I = \{C_1,\dots C_m\}$, where each clause $C_j = \{a^j_1,\dots a^j_d\}$ is a set of $d$ literals. A literal $a$ is either of the form $a = v_i$ (a positive literal) or $a = \overline{v_i}$ (a negative literal). Let $\var(a) = v_i$ be the variable underlying the literal $a$, and for a clause $C$ we define $\var(C) = \{\var(a) \colon a \in C\}$. A partial assignment $\alpha:V'\rightarrow\{\true, \false\}$ for $V'\subseteq V$ is a Boolean mapping. We say that $\alpha$ satisfies a literal $a$, with $\var(a) = v$, if $v \in V'$ and it holds that $\alpha(v) = \true$ if and only if $a$ is a positive literal. We say that $\alpha$ satisfies a clause $C$ if it satisfies at least one literal of $C$, and $\alpha$ satisfies $I$ if it satisfies each clause $C_j$ of $I$. 
The goal of the $d$-SAT problem is to find a satisfying assignment $\alpha$ of $I$ or declare that no such assignment exists.

The Strong Exponential Time Hypothesis (SETH) \cite{DBLP:journals/jcss/ImpagliazzoP01,DBLP:journals/jcss/ImpagliazzoPZ01} conjectures that the trivial algorithm that tests all different assignments to find a satisfying one using brute-force is essentially optimal for general values of $d$. In the following, we provide a formal statement of the hypothesis:

\begin{conjecture}[\cite{DBLP:journals/jcss/ImpagliazzoP01,DBLP:journals/jcss/ImpagliazzoPZ01}]
    For any positive value $\delta$ there exists an integer $d$ such that the $d$-SAT problem cannot be solved in time $\ostar((2-\delta)^n)$, where $n$ denotes the number of the variables.
\end{conjecture}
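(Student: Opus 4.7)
The final statement is the Strong Exponential Time Hypothesis (SETH), and an honest proposal must begin by acknowledging that no proof is presently known; indeed, proving SETH would be a major breakthrough that, among other things, would imply $\classP\neq\classNP$. Nevertheless, I can sketch the skeleton of an attempt and identify where it would have to break new ground.

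The plan is to reason by contraposition: assume that for some $\delta>0$ there is, for every integer $d$, an $\ostar((2-\delta)^n)$ algorithm $\mathcal{A}_d$ for $d$-SAT, and try to derive a contradiction. First I would invoke the Impagliazzo--Paturi--Zane sparsification lemma to reduce arbitrary $d$-SAT instances to sparse ones with a linear number of clauses, making the hypothetical speedup robust against changes in clause width. Next I would combine the uniform assumption $\{\mathcal{A}_d\}_{d\geq 3}$ with a tight self-reducibility argument (splitting a CNF on a carefully chosen small separator of variables and recursing) in order to upgrade the per-width savings into a single improvement valid for unbounded clause width, ideally yielding a Circuit-SAT algorithm strictly better than exhaustive search, say in time $2^n/n^{\omega(1)}$ on circuits of appropriate depth. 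The final step would be to derive a contradiction against an unconditional lower bound, in the spirit of Williams' $\mathsf{NEXP}\not\subseteq\mathsf{ACC}^0$ framework, where a nontrivial satisfiability algorithm for a circuit class is known to imply a separation for that class.

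The hard part, by a wide margin, is this last step: no known unconditional lower bound is strong enough to rule out a $(2-\delta)^n$ savings for SAT, and the three classical barriers---relativization, natural proofs, and algebrization---apply to essentially every existing technique for circuit lower bounds. Consequently I expect the plan to succeed at best in a \emph{conditional} form, reducing SETH to a cleaner statement about circuit lower bounds or derandomization that is more directly attackable, rather than as an unconditional theorem. A genuine proof of SETH will require a new technique that simultaneously avoids all three barriers, and identifying what such a technique should even look like is itself an open research direction; for this reason the present paper, like the rest of the fine-grained complexity literature it cites, treats this statement as a hypothesis and only uses it to derive conditional lower bounds such as Theorem~\ref{theo:lower-bound}.
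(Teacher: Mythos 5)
You have correctly identified that this statement is the Strong Exponential Time Hypothesis, which the paper states as a conjecture (citing Impagliazzo--Paturi and Impagliazzo--Paturi--Zane) and does not prove; it is used only as an assumption for the conditional lower bound. Your assessment that no proof is known and that any proof sketch must remain speculative matches the paper's treatment exactly, so there is nothing to compare beyond noting that neither you nor the paper offers (or could offer) a proof.
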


Our lower bound follows similar tight lower bounds for structural parameters \cite{DBLP:journals/talg/LokshtanovMS18, DBLP:journals/corr/abs-1103-0534/CyganNPPRW11, DBLP:conf/iwpec/HegerfeldK22} based on SETH. We provide a parameterized reduction from the $d$-SAT problem to the \Coctp problem, for any value of $d$, bounding the value of the parameter in the resulting graph. We show that an algorithm with running time $\ostar((12-\epsilon)^{\lcw})$ for any positive value $\epsilon$ implies an algorithm with running time $\ostar((2-\delta)^{n})$ for a positive value $\delta >0$ for the $d$-SAT problem for any value of $d$. This would contradict SETH, and hence, assuming SETH, the lower bound must hold. In the following, we describe the instance $(G, \budget)$ that results from the reduction given an instance $I$ of the $d$-SAT problem.

\subsection{Construction of the graph}
By adding the triangle $\tri{u}{v}$ to a graph $G$ for two vertices $u,v\in V(G)$, we mean that we add the edge $\{u,v\}$, and we add a new vertex $w_{u,v}$ and make it adjacent to $u$ and $v$ only. When we say we add a triangle at a vertex $v$, we mean that we add two new vertices $u_v, w_v$, and make them adjacent to each other and to $v$ only. Vertices introduced when adding triangles are called triangle vertices. For a set of vertices $S\subseteq V$, we denote by $\inctri{S}$ the set consisting of $S$ and all vertices added to create triangles at single vertices of $S$ or between pairs of vertices in $S$.

We build $G$ by combining copies of fixed components of constant sizes, called gadgets. We distinguish three kinds of gadgets in $G$, namely \emph{path gadgets}, \emph{decoding gadgets}, and \emph{clause gadgets}. We distinguish a single vertex $r$ in $G$ called the \emph{root} vertex, with a triangle added to it, and another single vertex $\broot$. When we say that we color a vertex \emph{white} in $G$, it means that we add an edge between this vertex and $\broot$. When we say we color a vertex $u$ \emph{black}, we mean that we introduce a new vertex $u'$, we color it white, and we add an edge between $u$ and $u'$. We also distinguish two vertices $g_1$ and $g_2$ in $G$, called \emph{guards}. We add a triangle at each of them.

Let $t_0$ be a constant that we fix later, and $s = \left\lceil n/t_0 \right\rceil$. We partition the variables into $s$ groups $V_1, \dots V_s$ each of size $t_0$ (except for the last one). Hence, the $i$th group contains the variables $v_{(i-1)\cdot t_0 + 1},\dots, v_{i\cdot t_0}$, except for the last group, where we replace $i\cdot t_0$ with $n$.

Let $t = \left\lceil t_0 \cdot \log_{12}(2) \right\rceil$ and $n' = s\cdot t$. The graph $G$ consists of $n'$ path-like structures, called \emph{path sequences}. Each path sequence is a sequence of $c = (11\cdot n' + 1)\cdot m$ path gadgets. Consecutive path gadgets are connected by cuts of size $9$, given as bicliques between three vertices of one gadget, and three of the other, giving a path sequence its narrow path-like structure. The first path gadget of each path sequence is adjacent to $g_1$, while the last path gadget is adjacent to $g_2$.

Let $P_1, \dots, P_{n'}$ be the set of all path sequences in $G$. We group them into $s$ groups of size $t$ each, calling each of these groups a \emph{path bundle}. Hence, the $i$th path bundle $\pbundle{i}$ consists of the path sequences $P_{(i-1)\cdot t + 1} \dots P_{i\cdot t}$, and it corresponds to the variable group $V_i$. For each $i\in [s], j \in [c]$, we call the set of all $j$th path gadgets on all path sequences in the path bundle $\mathscr{B}_i$ a \emph{(simple) bundle} $\bundle{i}{j}$.

For a path sequence $P_i$, let $\pgad{i}{1},\dots \pgad{i}{c}$ be the path gadgets on the path sequence $P_i$ from left to right. We divide each path sequence $P_i$ into $(11\cdot n' + 1)$ groups of size $m$ each, called \emph{segments} $\seg{i}{1},\dots,\seg{i}{11\cdot n' + 1}$, where $\seg{i}{j} = \pgad{i}{(j-1)\cdot m + 1}, \dots, \pgad{i}{j\cdot m}$. For $j\in[11\cdot n' + 1]$, we call the set consisting of the $j$th segment of each path sequences a \emph{section} $\sect{j}$, i.e.\ $\sect{j} = \{\seg{i}{j}\colon i\in[n']\}$. Finally, for $j\in [c]$, we call the set consisting of the $j$-th path gadget of each path sequence a \emph{column} $\col{j} = \{\pgad{i}{j}\colon i\in[n']\}$. See \Cref{fig:lb-construction} for illustration.

\begin{figure}
    \centering
    \begin{subfigure}[b]{0.3\textwidth}
        \centering
        \includegraphics[width=\textwidth]{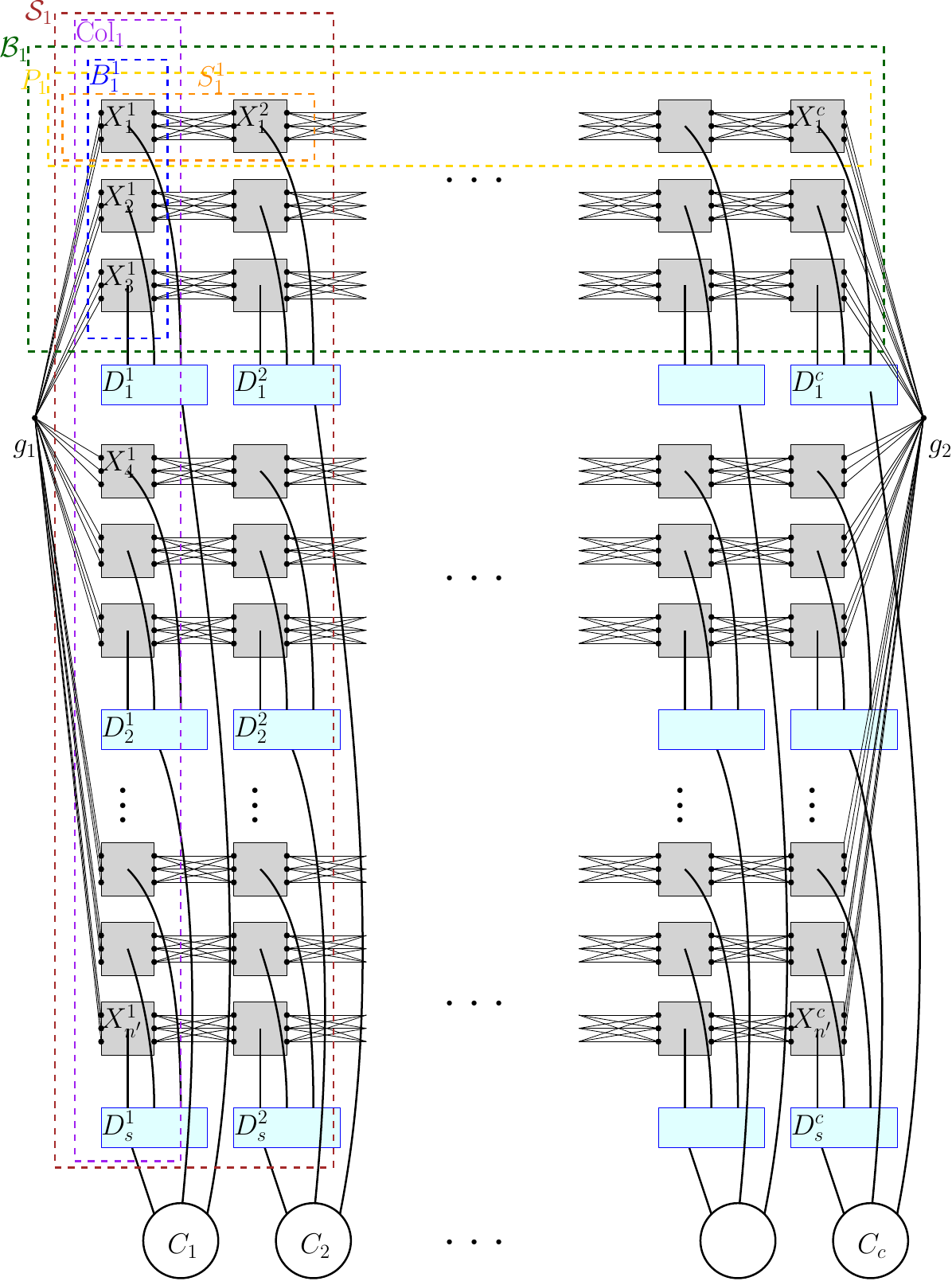}
        
    \end{subfigure}
    \hfill
    \begin{subfigure}[b]{0.3\textwidth}
        \centering
        \includegraphics[width=\textwidth]{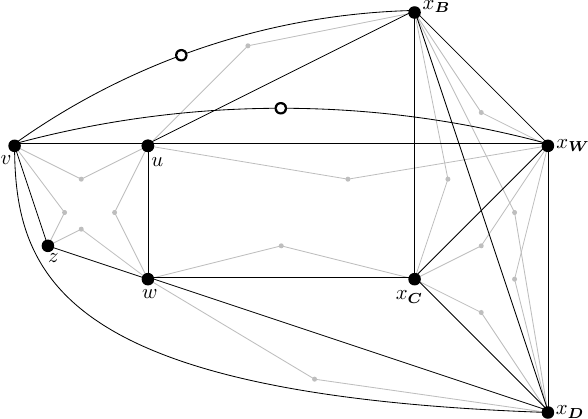}
    \end{subfigure}
    \caption{On the left, the graph $G$ corresponding to an instance with  two clauses. The gray squares represent path gadgets. Decoding gadgets are depicted in cyan, and clause gadgets as circles. We outline in yellow, blue, green, purple, orange and brown, the first path sequence, bundle, path bundle, column, segment and section respectively. On the right we depict a simple path gadget. Subdivision vertices are drawn as circles, while triangle vertices are drawn as small gray disks.
    \label{fig:lb-construction}}
\end{figure}

We attach a decoding gadget $\dgad{i}{j}$ to each bundle $\bundle{i}{j}$ for $i\in[s]$ and $j\in[c]$, and we attach a clause gadget $\cgad{j}$ to each column $\col{j}$ for $j\in[c]$. A decoding gadget is only adjacent (possibly through triangles) to vertices of the path gadgets in its corresponding bundle, to the clause gadget in its column, and to $r$. A clause gadget is only adjacent to vertices in the decoding gadgets attached to bundles in its column.

We define the set $\States = \{\conn, \disc, \black, \white\}$, and call it the set of \emph{simple  states}. We define the set of \emph{states} $\pstates$ as the set of tuples $\pstates = \{s_1, \dots s_{12}\}\subseteq \States^3$ given by the following list:

\begin{center}
\begin{tabularx}{.85\textwidth}{X X l}    
    \bp $s_{1} =  (\conn,\conn,\conn)$&
    \bp $s_{2} =  (\conn,\conn,\disc)$&
    \bp $s_{3} =  (\conn,\disc,\black)$\\
    \bp $s_{4} =  (\conn,\disc,\white)$&
    \bp $s_{5} =  (\conn,\black,\white)$&
    \bp $s_{6} =  (\disc,\disc,\disc)$\\
    \bp $s_{7} =  (\disc,\disc,\black)$&
    \bp $s_{8} =  (\disc,\disc,\white)$&
    \bp $s_{9} =  (\disc,\black,\white)$\\  
    \bp $s_{10} = (\black,\black,\black)$&
    \bp $s_{11} = (\white,\white,\white)$&
    \bp $s_{12} = (\black,\black,\white)$\\
\end{tabularx}
    \end{center}

Intuitively, a solution $\sol$ defines a state $s\in\pstates$ in each path gadget, given by the intersection of $\sol$ with this path gadget. A combination of states over a whole bundle $\bund = \{\pgadget_1, \dots \pgadget_t\}$, given from the intersection of $\sol$ with all path gadgets in this bundle, builds a state assignment $\pi : \bund \rightarrow \pstates$. Let $\pbund$ be the path bundle $\bund$ belongs to, and $V_i$ be the variable group corresponding to $\pbund$. Such state assignments $\pi$ correspond to different partial Boolean assignments over $V_i$. Since it holds that
\[
    \big|\pstates^{\bund}\big| = 12^t \geq 12^{t_0 \cdot \log_{12}2} = 2^{t_0} \geq \big|\{\true, \false\}^{V_i}\big|,
\]
we can fix an arbitrary injective mapping $\tau: \{\true,\false\}^{V_i} \rightarrow \pstates^{[t]}$. Since both the set of the variables, and all path gadgets in each bundle are ordered (the former by index, and the latter by the index of the path sequence they belong to), we can assume that the same mapping is defined for all variable groups $V_i$, and all bundles $\bundle{i}{j}$.
Now we describe the different types of gadgets.

\subparagraph*{Simple path gadget $Y$.} Before we describe a path gadget, we describe a simpler structure called a simple path gadget. A simple path gadget consists of the vertices $v, u, w, z$, where $u,w,z$ are all adjacent to $r$, and four more vertices $x_{\conn}, x_{\disc}, x_{\black}, x_{\white}$ called the state vertices, all adjacent to $r$. We color $x_{\black}$ black, and $x_{\white}$ white. We add the triangles $\tri{v}{u}, \tri{u}{w}, \tri{u}{z}, \tri{z}{v}$, and for all $\{s,t\}\in\binom{\States}{2}$ we add the triangle $\tri{x_s}{x_t}$, i.e.\ we add a triangle between each pair of state vertices. We also add the triangles $\tri{u}{x_{\black}}, \tri{u}{x_{\white}}, \tri{w}{x_{\conn}}, \tri{w}{x_{\disc}}$. Finally, we add the edges $\{v, x_{\black}\}, \{v, x_{\white}\}$ and subdivide each of them by a new vertex, and add the edge $\{v, x_{\disc}\}$.
In total, a simple path gadget contains $10$ non-triangle vertices.

Before we define a path gadget, we define the mapping $\nxt \colon \pstates\rightarrow \States^3$ given by:

\begin{center}
    \begin{tabularx}{\textwidth}{X X l}  
    \bp $\nxt(s_{ 1}) = (\black, \black, \white)$&
    \bp $\nxt(s_{ 2}) = (\disc , \black, \white)$&
    \bp $\nxt(s_{ 3}) = (\disc , \disc , \white)$\\
    \bp $\nxt(s_{ 4}) = (\disc , \disc , \black)$&
    \bp $\nxt(s_{ 5}) = (\disc , \disc , \disc )$&
    \bp $\nxt(s_{ 6}) = (\conn , \black, \white)$\\
    \bp $\nxt(s_{ 7}) = (\conn , \disc , \white)$&
    \bp $\nxt(s_{ 8}) = (\conn , \disc , \black)$&
    \bp $\nxt(s_{ 9}) = (\conn , \conn , \disc )$\\
    \bp $\nxt(s_{10}) = (\white, \white, \white)$&
    \bp $\nxt(s_{11}) = (\black, \black, \black)$&
    \bp $\nxt(s_{12}) = (\conn , \conn , \conn )$\\   
    \end{tabularx}\end{center}

\subparagraph*{Path gadgets $\pgadget$.} A path gadget $\pgadget$ consists of six simple path gadgets $Y_1, \dots, Y_6$. We denote a vertex $x$ in $Y_i$ by $x^i$. Hence, by $v^1$ we denote the vertex $v$ in $Y_1$, and by $x_{\conn}^3$ we denote the vertex $x_{\conn}$ in $Y_3$. We call the vertices $v^1, v^2, v^3$ the \emph{entry} vertices, and the vertices $v^4, v^5, v^6$ the \emph{exit} vertices of $X$. These vertices play a special role in $G$, since they connect consecutive path gadgets on a path sequence. In particular, we add a biclique between the exit vertices of each path gadget (except for the last one), and the entry vertices of the following path gadget. The entry vertices of the first path gadget of each path sequence are all adjacent to $g_1$, and the exit vertices of the last path gadget are all adjacent to $g_2$.

For each $s\in \pstates$, we add a vertex $z_s$ to $X$ and make it adjacent to $r$. We call the resulting vertices the \emph{transition vertices} of $X$. We add all triangles $\tri{z_s}{z_{s'}}$ for all $s\neq s' \in \pstates$.
Finally, for each $s \in \pstates$, let $s = (\chi_1,\chi_2,\chi_3)$ and $\nxt(s) = (\chi_4,\chi_5,\chi_6)$, where $\chi_i\in\States$ for all $i\in[6]$. For each $i\in[6]$ we add all triangles $\tri{z_s}{x_{\chi}^i}$, for all $\chi\neq \chi_i\in\States$. That means we add a triangle between a transition vertex $z_s$ and a state vertex $x_{\chi}^i$ in $Y_i$, if $\chi_i\neq \chi$. This completes the description of a path gadget. Given a vertex $v$ in a path gadget $X$, if not clear from the context, we write $v(X)$ to indicate the path gadget it belongs to.
In total, a path gadget contains $6\cdot 10 + 12 = 72$ non-triangle vertices.

\subparagraph*{Decoding gadgets $\dgadget$.}
Now we describe decoding gadgets, and their connections to the rest of the graph.
As discussed earlier, we attach a decoding gadget $\dgad{i}{j}$ to each bundle $\bundle{i}{j}$. Recall that a bundle is a sequence of $t$ path gadgets. We define a \emph{deletion pair} as two vertices $u, w$, both adjacent to $r$, and a triangle added between them. We define the family of state assignments $\Pi = \pstates^{[t]}$. A decoding gadget consists of $12^t$ disjoint deletion pairs, each corresponding to a mapping in $\Pi$. For $\pi \in \Pi$, let $u_{\pi}, w_{\pi}$ be the deletion pair corresponding to the mapping $\pi$.

Let $\dgadget = \dgad{i}{j}$, and let $\pgadget_1,\dots \pgadget_t$ be the path gadgets on the bundle $\bundle{i}{j}$. For each mapping $\pi\in\Pi$ and for each $i\in[t]$ and $s\in \pstates$, we add the triangle $\tri{z_s(\pgadget_i)}{u_{\pi}}$ if it holds that $\pi(i) \neq s$.
Given a vertex $v$ in a decoding gadget $\dgadget$, if not clear from the context, we write $v(\dgadget)$ to indicate the decoding gadget it belongs to.
In total, a decoding gadget contains  $2 \cdot 12^t$ non-triangle vertices.

\subparagraph*{Clause gadget \cgadget.}
A clause gadget $\cgadget = \cgad{j}$ corresponds to the clause $C_{h}$, where $j = k\cdot m + h$ for some $k\in[12\cdot n' - 1]_0$, i.e.\ $\cgadget$ is attached to the $h$th column in its section.
Let $C_h = C^1_h \dot\cup \dots \dot\cup C^{d'}_h$ for some $d'\leq d$ be the partition of $C_h$ into the different variable groups, i.e.\ for all $\ell\in[d']$ it holds that $\var(C^{\ell}_{h})\subseteq V_q$ for some $q \in [s]$, and that for each $q\in[s]$ there exists at most one $\ell\in[d']$ such that $\var(C^{\ell}_{h})\cap V_q \neq \emptyset$.

A clause gadget $\cgad{j}$ consists of a single odd cycle over $d'$ vertices, if $d'$ is odd, or $d'+1$ vertices otherwise. Let $c_1, \dots c_{d'}$ be the first $d'$ vertices on this cycle, and $c_0$ be the extra vertex if $d'$ is even.

Let $V^1_h, \dots, V^{d'}_h$ be the variable groups underlying $C^1_h, \dots, C^{d'}_h$ respectively. For $i\in[d']$, let $V^i_h = V_{\ell}$ be the $\ell$th variable group.
We add edges between $c_i$ and $\dgad{\ell}{j}$ as follows:

Let $A$ be the set of all partial assignments over $V_{\ell}$ satisfying $C^{i}_h$, and let $\hat{\Pi} = \tau(A)$ be the set of all corresponding state assignments. Then we add all edges $\{c_i, w_{\pi}\}$ for each $\pi \in \hat{\Pi}$. This completes the description of a clause gadget, and with which the description of the graph $G$.

\subparagraph*{Budget.}
Now we define the target value $\budget$ in the resulting instance $(G, \budget)$ of the \Coctp problem. First, we define a family of disjoint components $\lbfamily$ in $G$ (mostly the gadgets of $G$). For each component $X\in\lbfamily$, we will define a family of disjoint subsets of $X$ (denoted $\lbsets(X)$). For each set $S \in \lbsets(X)$, we will define a lower bound $\lbound(S)$ on the size of a connected odd cycle transversal in $G[S]$. We define $\lbound(X)$ as the sum of $\lbound(S)$ for all $S\in\lbsets(X)$. Finally, we define $\budget$ as the sum of $\lbound(X)$ for all $X\in \lbfamily$.

By the tightness of $\budget$, as we shall see in \cref{lem:lb-b-tight}, it holds that any solution of size $\budget$ must contain exactly $\lbound(X)$ vertices in each such component $X$, and no other vertices.

Let $\pgfamily, \dgfamily$ and $\cgfamily$ be the families of all path gadgets, decoding gadgets and clause gadgets in $G$ respectively. Let
\[\lbfamily = \pgfamily \cup \dgfamily\cup \cgfamily \cup \big\{\inctri{\{r\}}\big\} \cup \big\{\inctri{\{g_1\}}\big\} \cup \big\{\inctri{\{g_2\}}\big\}\]
For a path gadget $\pgadget$ we define $\lbsets(\pgadget)$ as the family of all sets:
\begin{itemize}
    \item $\inctri{\{x^i_{\conn}, x^i_{\disc}, x^i_{\black}, x^i_{\white}\}}$, for all $i\in[6]$.
    \item $\inctri{\{v^i, u^i, w^i, z^i\}}$, for all $i\in[6]$.
    \item $\inctri{\{z_s \colon s \in \pstates\}}$.
\end{itemize}
We define
\begin{itemize}
    \item $\lbound(\inctri{\{x^i_{\conn}, x^i_{\disc}, x^i_{\black}, x^i_{\white}\}})=3$, for all $i\in[6]$.
    \item $\lbound(\inctri{\{v^i, u^i, w^i, z^i\}})=2$, for all $i\in[6]$.
    \item $\lbound(\inctri{\{z_s \colon s \in \pstates\}})=11$.
\end{itemize}
It follows that $\lbound(\pgadget) = 41$.
For a decoding gadget $\dgadget$, we define
\[\lbsets(\dgadget) = \big\{\inctri{\{u_{\pi}, w_{\pi}\}}\colon \pi \in \Pi\big\}.\]
For all $\pi \in \Pi$ we set 
\[\lbound(\inctri{\{u_{\pi}, w_{\pi}\}}) = 1.\]
It follows that $\lbound(\dgadget) = 12^t$.

For a clause gadget $\cgadget$, we define $\beta(\cgadget) = \{\cgadget\}$, and set $\lbound(\cgadget) = 1$. Similarly, for $x\in\{r,g_1,g_2\}$, we define $\lbsets(\inctri{\{x\}}) = \big\{\inctri{\{x\}}\big\}$, and set $\lbound(\inctri{\{x\}}) = 1$

It follows that 
\[
\budget = 41 \cdot c  \cdot n' + 12^t \cdot c \cdot s + c + 3
        = (41 \cdot n' + 12^t \cdot s + 1)\cdot c + 3.
\]

\subsection{Proof of correctness}

\begin{lemma}\label{lem:lb-cw-bound}
    The graph $G$ has linear clique-width at most $k = n' + k_0$, where $k_0$, is a constant upper bound for the total number of
    non-triangle vertices in a path gadget, a decoding gadget, and a clause gadget plus seven, i.e.\
    \[
        k_0 = 72 + 2\cdot 12^t + d + 1 + 7 = d + 2 \cdot 12^t + 80.
    \]
    Moreover, a linear $k$-expression of $G$ can be constructed in polynomial time.
\end{lemma}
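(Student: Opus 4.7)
The plan is to exhibit a linear $k$-expression for $G$ with $k = n' + k_0$ that builds $G$ column by column from left to right, reusing a bounded number of ``local'' labels across columns and across bundles within a column.

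I would first introduce the global vertices $r$, $\broot$, $g_1$, $g_2$ together with their triangles, keeping one persistent label for each of $r$, $\broot$, $g_1$, $g_2$ and a persistent ``done'' label $\mathrm{dn}$. Every triangle vertex adjacent to two existing vertices, and every ``color black'' helper adjacent to its owner and to $\broot$, is introduced with a fresh temporary label, joined to its two neighbours, and then relabeled to $\mathrm{dn}$; ``coloring white'' is the single join between the involved label and the $\broot$-label. This constant-size bookkeeping reduces the rest of the construction to building path, decoding and clause gadgets and installing the bicliques between them.

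Next, I would process columns $\col{1}, \ldots, \col{c}$ in turn. The invariant before column $j$ is that $n'$ ``exit labels'' $e_1, \ldots, e_{n'}$ are active, where $e_i$ carries the three exit vertices of $\pgad{i}{j-1}$; this is legal because those three vertices participate in exactly the same biclique with the three entry vertices of $\pgad{i}{j}$ (for $j = 1$ we instead join entries to $g_1$, and after column $c$ the final exit labels are joined to $g_2$ and relabeled to $\mathrm{dn}$). Column $j$ is processed in three phases: (i) introduce the clause gadget $\cgad{j}$, giving each of its at most $d+1$ non-triangle vertices its own label and installing its odd-cycle edges; (ii) for each bundle $\bundle{i}{j}$ (for $i = 1, \ldots, s$), first build the decoding gadget $\dgad{i}{j}$ using up to $2 \cdot 12^t$ labels for its deletion-pair vertices $u_\pi, w_\pi$ (joining them to $r$ and each $w_\pi$ to the relevant $c_\ell$ labels of $\cgad{j}$), then introduce the $t$ path gadgets of the bundle one at a time, each using up to $72$ temporary labels, joining its entries to the corresponding previous-column exit label (which is immediately reused to carry the three new exit vertices under one label), and joining its transition labels to the required $u_\pi$'s via fresh triangle vertices, before relabeling all of its temporary labels except the new exit label to $\mathrm{dn}$; once every path gadget of the bundle has been processed, the decoding-gadget labels are likewise relabeled to $\mathrm{dn}$. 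Finally, (iii) after all bundles of the column are processed, relabel the clause-gadget labels to $\mathrm{dn}$, leaving only the updated $n'$ exit labels active for column $j+1$.

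The label accounting is the main thing to verify. A worst-case snapshot occurs while some path gadget in the current bundle is being connected to the current decoding gadget: the active labels are contained in the global constants (contributing the $+7$ in $k_0$), the $n'$ exit labels, at most $72$ labels for the current path gadget, at most $2 \cdot 12^t$ labels for the decoding gadget of the current bundle, and at most $d+1$ labels for the current clause gadget. Because each path gadget is fully processed and its transitions are relabeled to $\mathrm{dn}$ before the next path gadget in the same bundle is started, no transition labels of earlier path gadgets in the bundle need to be retained. This yields the bound $n' + 72 + 2 \cdot 12^t + (d+1) + 7 = n' + k_0$. The expression uses $O(|V(G)|)$ union operations and $O(|V(G)| \cdot k^2)$ unary operations and can be emitted in polynomial time. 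The main obstacle I anticipate is verifying carefully that the chosen ordering of introductions, joins, and relabelings installs every required edge exactly once; this is routine given the strictly local structure of each gadget but requires careful bookkeeping to spell out exhaustively.
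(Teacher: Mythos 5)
Your proposal is correct and follows essentially the same column-by-column construction as the paper's proof: reserve $n'$ persistent labels for the exit vertices of the current column's path gadgets, keep a constant pool of seven bookkeeping labels (for $r$, $\broot$, $g_1$, $g_2$, a ``forget'' label, and helper labels for triangle/white-coloring vertices), and use temporary labels for the clause gadget ($d+1$), the current decoding gadget ($2\cdot 12^t$), and the current path gadget ($72$), freeing them once their local adjacencies are installed. The only minor difference is that the paper fixes two dedicated helper labels $\ltri$ and $\lwhite$ rather than ``fresh temporary labels,'' but both yield the same bound $k = n' + k_0$.
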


\begin{proof}
    We describe a linear $k$-expression of $G$. Let $x_0, x_1, \dots$ be the nodes of $\syntaxtree$ building a simple path, and $x'_i$ be the private neighbor of $x_i$ (if exists) in the caterpillar structure of the syntax tree of a linear $k$-expression. Since we only use these nodes to introduce a new vertex of some label $\ell$, and then unify it at $x$ with the labeled graph built so far $G_{x_{i-1}}$, we will just ignore these nodes and say that at $x_i$ we add a vertex labeled $\ell$ to $G_{x_{i-1}}$. Hence, we consider $x_0, x_1, \dots$ as timestamps of building $G$, and denote $G_{x_i}$ by $G_i$. Similarly, let $\mu_i = \mu_{x_i}$, $V_i = V_{x_i}$, $E_i = E_{x_i}$, and $\lab_i = \lab_{x_i}$.

    We reserve the first $n'$ labels for later, and use the label $\lforget = n'+1$ to label vertices whose incident edges have all been introduced before the current timestamp. 
    When we say we forget a vertex $u$ at some timestamp $x_i$, we mean that we relabel this vertex to $\lforget$, i.e.\
    \[\mu_i = \relabel{\lab_{i-1}(u)}{\lforget}(\mu_{i-1}).\]
    We also reserve the label $\ltri = n'+2$ for adding triangles between two vertices. Let $x$ be some timestamp. For $u,v\in V_x$, let $i=\lab_x(u)$ and $j=\lab_x(v)$. When say that we add a triangle between $u$ and $v$, we assume first that $|\rlab_x(i)| = \{u\}$ and $|\rlab_x(j)| = \{v\}$ (both have unique labels). In this case, we mean that we add a new vertex $w_{u,v}$ with label $\ltri$, and add all edges between $u$, $v$ and $w_{u,v}$, and then we relabel $w_{u,v}$ to $\lforget$. Since this is the only time we use the label $\ltri$ in the construction, clearly at any given timestamp it holds that $|\rlab(\ltri)|\leq 1$.

    Now we are ready to describe the linear $k$-expression of $G$. First, we introduce the vertices $r,\broot, g_1$ and $g_2$ with labels $\lroot = n'+3$, $\lbroot=n'+4$, $\lgl = n'+5$, and $\lgr=n'+6$ respectively. These are the only vertices that will be assigned the corresponding labels at any timestamp, and will never be relabeled. Hence, for a vertex $u$ of a unique label at any timestamp, we can make it adjacent to any of them through a join operation without changing the adjacency of any other vertex of the graph. We introduce the private neighbors of $r, g_1,$ and $g_2$, using a different label for each introduced vertex. We create the corresponding triangles using join operations, and then we forget all these private neighbors.
    Finally, we reserve the label $\lwhite = n'+7$ for creating white vertices when coloring some vertices black. For a vertex $v$ of a unique label $\ell$, when we say we color $v$ white we mean that we join the labels $\ell$ and $\lbroot$, while by coloring $v$ black, we mean that we introduce a vertex $w$, labeled $\lwhite$, and then apply two join operations: one between $\lwhite$ and $\lbroot$, and another between $\lwhite$ and $\ell$. After that we forget the label $\lwhite$.

    We divide the rest of the construction into phases. In each phase we add a new column to the construction, one by one, adding the $j$th column in the phase $j$. Let $x_{h}$ be the timestamp at the end of phase $j$. We will preserve the following invariant: 
    It holds that $\lab_{h}(r) = \lroot$, $\lab_{h}(\broot) = \lbroot$, $\lab_{h}(g_1) = \lgl$, $\lab_{h}(g_2) = \lgr$, and 
    \[\lab_{h}(v^4(X_i^j)) = \lab_{h}(v^5(X_i^j)) = \lab_{h}(v^6(X_i^j)) = i\] for all $i\in [n']$.
    All other vertices in $V_{h}$ are labeled $\lforget$ at $x_h$.

    Now we describe the phase $j$ for all $j\in [c]$. The phase starts by creating the vertices of the clause gadget $\cgad{j}$, each having its own label, and we add the edges of the odd cycle accordingly. After that we build the bundles one by one. For a bundle $\bundle{\ell}{j}$, we first create the decoding gadget $\dgad{\ell}{j}$ by creating all non-triangle vertices of the deletion pairs, assigning a unique label to each vertex, and adding a triangle between the vertices of each deletion pair. After that we add the edges between $\cgad{j}$ and $\dgad{\ell}{j}$, and the edges between $r$ and $\dgad{\ell}{j}$ as described in the construction of $G$. Finally, we build the path gadgets of $\bundle{\ell}{j}$ one by one, where for a path gadget $\pgad{i}{j}$, we create all non-triangle vertices of $\pgad{i}{j}$, each of a unique label, we add all edges and triangles between the vertices of $\pgad{i}{j}$, and between $\pgad{i}{j}$ and $\bundle{\ell}{j}$. We color $x^t_{\black}(\pgad{i}{j})$ black, and $x^t_{\white}(\pgad{i}{j})$ white, for all $t\in[6]$. If $j = 1$, we add all edges between the entry vertices of $\pgad{i}{j}$ and $g_1$. Otherwise, we add the edges between the entry vertices of $\pgad{i}{j}$ and the exit vertices of $\pgad{i}{j-1}$ by joining their labels with the label $i$. If $j = c$, we add all edges between the exit vertices of $\pgad{i}{j}$ and $g_2$.

   After we build a path gadget, we forget the exit vertices of $\pgad{i}{j-1}$ if $j\neq 1$. We also forget all vertices of $\pgad{i}{j}$ except its exit vertices, and relabel its exit vertices to the label $i$. After we build the whole bundle we forget its decoding gadget, and after we build the whole column we forget the clause gadget attached to it.
\end{proof}

Now we prove the correctness of the reduction.

\begin{lemma} \label{lem:lb-sol-if-sat}
    If $I$ is satisfiable, then $G$ admits a connected odd cycle transversal of size $\budget$.
\end{lemma}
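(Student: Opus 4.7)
The plan is to construct an explicit connected odd cycle transversal $S \subseteq V(G)$ of size exactly $\budget$ from the satisfying assignment $\alpha$, then verify connectivity of $G[S]$ and $2$-colourability of $G \setminus S$.

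First, for each variable group $V_i$ I set $\pi_i := \tau(\alpha|_{V_i}) \in \pstates^{[t]}$; this state assignment is used uniformly across all columns, so every path gadget $\pgad{\ell}{j}$ (where $P_\ell$ is the $\ell'$-th path sequence of $\mathscr{B}_i$) is decorated with the state $s := \pi_i(\ell')$ for every $j\in[c]$. Then I define $S$ gadget by gadget: include $r, g_1, g_2$; for every path gadget with state $s = (\chi_1, \chi_2, \chi_3)$ and $\nxt(s) = (\chi_4, \dots, \chi_6)$, include the $11$ transition vertices $z_{s'}$ with $s' \neq s$ and, in each simple path gadget $Y^t$, include the three state vertices $\{x^t_\chi : \chi \neq \chi_t\}$ together with the pair $\{u^t, v^t\}$; in every decoding gadget $\dgad{a}{j}$, include $u_\pi$ for every $\pi \neq \pi_a$ and include $w_{\pi_a}$ instead of $u_{\pi_a}$; in every clause gadget $\cgad{j}$ corresponding to $C_h$, use that $\alpha$ satisfies some literal $a \in C_h$: let $V_\ell$ be the variable group containing $\var(a)$ and $i$ the index with $V^i_h = V_\ell$, then include $c_i$.

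A direct count gives $|S| = 41 c n' + 12^t s c + c + 3 = \budget$, matching the lower bound in every component of $\lbfamily$. Connectivity of $G[S]$ follows because $r$ is adjacent in $G$ to every $u^t$, every $z_{s'} \in S$, every state vertex in $S$, and every $u_\pi, w_{\pi_a}$ in $S$, giving a star-shaped component centred at $r$; each $v^t \in S$ attaches via the edge $\{u^t, v^t\}$; $g_1$ and $g_2$ attach via the $v^t \in S$ of the first/last path gadget of each path sequence; and each selected $c_i$ attaches via the edge $\{c_i, w_{\pi_\ell}\}$, which exists by construction because $\alpha|_{V_\ell}$ satisfies $C^i_h$, so $\pi_\ell \in \tau(A)$ where $A$ is the set of partial assignments over $V_\ell$ satisfying $C^i_h$. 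Crucially, because every $v^t$ lies in $S$, the $K_{3,3}$ biclique between the exit vertices of $\pgad{\ell}{j}$ and the entry vertices of $\pgad{\ell}{j+1}$ is entirely contained in $S$, so no cross-column bipartiteness constraint arises.

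For $2$-colourability of $G \setminus S$, colour $\broot$ black; this propagates to every ``white-coloured'' vertex (white, as a neighbour of $\broot$) and to every ``black-coloured'' vertex (black, via its white subdivision partner). All triangles are broken: those within $\inctri{\{x^t_\chi\}}$ by the three state vertices in $S$; those within $\inctri{\{v^t, u^t, w^t, z^t\}}$ by $\{u^t, v^t\}$; those within $\inctri{\{z_s : s \in \pstates\}}$ by the $11$ selected transition vertices; those within each deletion pair by its selected vertex; and each odd clause cycle by removing $c_i$ (which leaves a bipartite path). The cross-component triangles $\tri{z_s}{x^i_\chi}$ with $\chi \neq \chi_i$ are broken because either $z_{s'} \in S$ (for $s' \neq s$) or $x^i_\chi \in S$ (since $\chi \neq \chi_i$). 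The remaining uncoloured vertices---$w^t, z^t$, the kept $z_s$, the kept state vertex $x^t_{\chi_t}$ (pre-coloured when $\chi_t \in \{\black, \white\}$, free otherwise), the unselected $u_{\pi_a}$, the clause path, and the various triangle and subdivision vertices---each have at most one forced-coloured neighbour in $G \setminus S$, so a valid $2$-colouring extends by local choices. The main delicacy is verifying this extension simultaneously everywhere; the choice $\{u^t, v^t\} \subseteq S$ cuts the subdivided edges $\{v^t, x^t_\black\}, \{v^t, x^t_\white\}$ and the $K_{3,3}$ cuts, reducing the verification to an independent local check within each simple path gadget, each decoding gadget, and each clause path.
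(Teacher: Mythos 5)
Your construction of $\sol$ differs from the paper's in one place that matters: inside each simple path gadget $Y_i$ you always take the pair $\{u^i,v^i\}$ alongside the three non-kept state vertices, whereas the paper takes $\{v^i,w^i\}$ when the simple state $\chi_i$ lies in $\{\conn,\disc\}$ and $\{u^i,z^i\}$ when $\chi_i\in\{\black,\white\}$. That state-dependent split is not optional. The gadget places the triangles $\tri{w}{x_{\conn}}$ and $\tri{w}{x_{\disc}}$ on $w$; when $\chi_i\in\{\conn,\disc\}$ the state vertex $x^i_{\chi_i}$ is kept outside $\sol$, so the triangle $\tri{w^i}{x^i_{\chi_i}}$ is broken only if $w^i\in\sol$, and your set omits $w^i$. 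Thus $G\setminus\sol$ retains a length-$3$ odd cycle and your $\sol$ is not an odd cycle transversal at all. (Symmetrically, $\tri{u}{x_{\black}}$ and $\tri{u}{x_{\white}}$ force $u\in\sol$ when $\chi_i\in\{\black,\white\}$; you include $u$ anyway, so the failure is exactly in the $\conn$/$\disc$ cases.) In addition, the four internal triangles on $\{v,u,w,z\}$ sit between the parts $\{v,w\}$ and $\{u,z\}$, so the only size-$2$ hitting sets there are those two antipodal pairs, and $\{u,v\}$ leaves the $w$--$z$ triangle uncovered as well. Your closing paragraph explicitly defers the ``independent local check'' inside each simple path gadget, and that is precisely where the argument fails.

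A secondary symptom is that with every $v^i$ placed in $\sol$, your $2$-colourability discussion discards the mechanism the gadget relies on: for $\chi_i\in\{\black,\white\}$ the paper keeps $v^i$ outside $\sol$ so that the subdivided paths to $x^i_{\black}$ and $x^i_{\white}$ force the colour of $v^i$, and the bicliques between consecutive path gadgets are then consistently colourable because $\nxt$ exchanges colour classes across the cut. Asserting that ``no cross-column bipartiteness constraint arises'' because the biclique endpoints lie in $\sol$ does not reflect the construction once the pair is chosen correctly, and it is false for the $\black$/$\white$ states. The remainder of your proposal---including $r,g_1,g_2$, the eleven transition vertices, the decoding-gadget and clause-gadget selections, and the budget tally---does match the paper's proof.
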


\begin{proof}
    Let $\alpha$ be a satisfying assignment of $I$. For $\ell \in [s]$, let $\alpha_{\ell}$ be the restriction of $I$ to $V_{\ell}$, and let $\pi_{\ell} = \tau(\alpha_{\ell})$. We define the solution $\sol \subseteq V$ of size $\budget$ as the set of the following vertices:

    \begin{itemize}
        \item The root $r$, and the guards $g_1$ and $g_2$.
        
        \item For each $\ell \in [s]$, we include in $\sol$ the same vertices from all bundles $\bundle{\ell}{j}$ on the path bundle $\pbundle{\ell}$, hence $\sol$ will contain the same vertices from all path gadgets on the same path sequence. Let us fix some bundle $\bund$ on this path bundle, and let $\pgadget_1, \dots \pgadget_t$ be the path gadgets along this bundle, i.e.\ $\pgadget_i = \pgad{i'}{j}$ for $i' = (\ell - 1) \cdot t + i$. We describe which vertices to include from this path gadget.
        
        Given a simple path gadget $Y$, and a state $s \in \States$,
        we define $L_{\chi}(Y) = \{v, w\}$ if $\chi \in \{\conn, \disc\}$, and $L_{\chi}(Y) = \{u, z\}$ otherwise. We define $R_{\chi} = \{x_{{\chi}'} \colon {\chi}' \neq {\chi}\}$. Let $V_{\chi}(Y) = L_{\chi}(Y)\cup R_{\chi}(Y)$.
        Let $\pi_{\ell}(i) = s = (\chi_1, \chi_2, \chi_3)$, and let $\nxt(s) = (\chi_4, \chi_5, \chi_6)$.
        We include in $\sol$ all vertices $V_{\chi_i}(Y_i)$ for all $i\in[6]$. We also include all vertices $z_{s'}$ for $s' \neq s$.
        From $\dgad{\ell}{j}$, we include all vertices $u_{\pi'}$ for $\pi' \neq \pi_{\ell}$, and the vertex $w_{\pi_{\ell}}$ in $\sol$.

        \item Let $\cgad{j}$ be a clause gadget, and assume that $j= k\cdot m  + h$ for some value of $k$, i.e.\ $\col{j}$ is the $h$th column in its section. Let $i \in [d']$ be the smallest index, such that $\alpha$ satisfies $C^i_h$. We include the vertex $c_i$ in $\sol$.
    \end{itemize}

    We claim that $\sol$ is a connected odd cycle transversal of $G$ of size $\budget$. The size follows from the fact that we pack exactly $\lbound(Q)$ vertices in each component $Q \in \lbsets(X)$ for all $X \in \lbfamily$. First we show that $G[\sol]$ is connected, and after that we provide a $2$-coloring of $G\setminus \sol$. Since it holds that $r \in \sol$, in order to show connectivity it suffices to show that all vertices in $\sol$ are reachable from $r$ in $G[\sol]$.
    
    By the choice of $\sol$, it does not contain any triangle vertex. It holds the $r$ is adjacent to $g_1, g_2$, to both endpoints of any deletion pair, and to all non-triangle vertices in a path gadget except for $v^1,\dots v^6$. Hence, it suffices to show that entry vertices, exit vertices and vertices of clause gadgets that are in $\sol$ are reachable from $r$ in $G[\sol]$. For a clause gadget $\cgad{j}$, let $C_{h} = \{a^{h}_1, \dots a^{h}_d\}$ be the corresponding clause (i.e.\ $\col{j}$ is the $h$th in its section). By the choice of $\sol$, it holds that there exists a unique $i\in[d']$ with $c_i \in \sol$. It holds by the choice of $i$, that $\alpha$ satisfies $C^i_h$. Let $V_{\ell} \supseteq \var(C^i_h)$ be the corresponding variable group. Then it holds that $\alpha_{\ell}$ satisfies $C^i_h$, and hence, $c_i$ is adjacent to $w_{\pi_{\ell}}(\dgad{\ell}{j})$, which is in $\sol$ and is adjacent to $r$.
    
    Now let us fix a bundle $\bundle{\ell}{j}$, and let $\pgadget$ be the $i$th path gadget on this bundle. For $h\in[3]$, let $v^h \in \sol$ be an entry vertex in the solution.
    If $\pgadget$ is the first path gadget on its path sequence, then $v^h$ is adjacent to $g_1$, which is in $\sol$ and adjacent to $r$. Otherwise, let $\pgadget'$ be the path gadget that precedes $\pgadget$ on its path sequence.
    Let $\pi_{\ell}(i) = s = (\chi_1,\chi_2,\chi_3)$, and $\nxt(s) = (\chi_4, \chi_5, \chi_6)$. Then it holds that $\chi_h \in \{\disc, \conn\}$. If $\chi_h = \conn$, then it holds that $x^h_{\disc}(\pgadget) \in \sol$, and is adjacent to $v^h$. Otherwise, it holds that $\chi_h = \disc$. From the definition of the mapping $\nxt$, it holds that $\conn \in \{\chi_4, \chi_5, \chi_6\}$. Hence, there exists $h'\in \{4,5,6\}$, such that both $x^{h'}_{\disc}(\pgadget')$ and $v^{h'}(\pgadget')$ are in $\sol$. Hence, we get the path $r, x^{h'}_{\disc}(\pgadget'), v^{h'}(\pgadget'), v^h$, and $v^h$ is reachable from $r$.

    Analogously, for $h\in \{4,5,6\}$, let $v^h \in \sol$ be an exit vertex in the solution.
    If $\pgadget$ is the last path gadget on its path sequence, then $v^h$ is adjacent to $g_2$, which is in $\sol$ and adjacent to $r$. Otherwise, let $\pgadget'$ be the path gadget that follows $\pgadget$ on its path sequence.
    Let $\pi_{\ell}(i) = s = (\chi_1,\chi_2,\chi_3)$, and $\nxt(s) = (\chi_4, \chi_5, \chi_6)$. Then it holds that $\chi_h \in \{\disc, \conn\}$. If $\chi_h = \conn$, then it holds that $x^h_{\disc}(\pgadget) \in \sol$, and is adjacent to both $v^h$ and $r$. Otherwise, it holds that $\chi_h = \disc$. From the definition of the mapping $\nxt$, it holds that $\conn \in \{\chi_1, \chi_2, \chi_3\}$. Hence, there exists $h'\in \{1,2,3\}$, such that both $x^{h'}_{\disc}(\pgadget')$ and $v^{h'}(\pgadget')$ are in $\sol$. Hence, we get the path $r, x^{h'}_{\disc}(\pgadget'), v^{h'}(\pgadget'), v^h$, and $v^h$ is reachable from $r$.

    Now we show that $G\setminus \sol$ is $2$-colorable by presenting a proper 2-coloring with colors $\black$ and $\white$. First note that each triangle vertex has degree one in $G \setminus \sol$. Hence, we can always properly color these vertices assuming the rest of the graph admits a proper $2$-coloring. Let $G'$ be the graph resulting from $G\setminus \sol$ after removing all triangle vertices. We show that $G'$ is $2$-colorable.

    For a decoding gadget $\dgadget$, it holds for $\pi \in \pstates^{[t]}$ that either $w_{\pi} \in \sol$, or $u_{\pi}\in\sol$. In the former case, it holds that $u_{\pi}$ is an isolated vertex in $G'$, since all its adjacencies were built through triangles, and hence, all its non-triangle neighbors belong to $\sol$. In the latter case, it holds that $w_{\pi}$ is either isolated in $G'$, or adjacent to at most one vertex of a clause gadget, namely the one corresponding to the same variable group this decoding gadget corresponds to. Since decoding gadgets induce simple paths in $G'$, it holds that all decoding gadgets together with all clause gadgets build connected components of $G'$ that are either isolated vertices or caterpillar graphs. Both are trivially $2$-colorable.
    
    Finally, We are left with the vertices of path gadgets, together with the vertex $\broot$.
    We show that we can define a proper 2-coloring $\gamma$ of these vertices that extends the coloring $\{\broot  \mapsto \black\}$. First, we assign the color $\black$ to all vertices that were colored black, and the color $\white$ to all vertices colored white in the construction of $G$. Since $\broot$ separates the different path sequences in $G'$, it suffices to fix $i\in [n']$ and show that we can color all vertices of the path gadgets on the $i$th path sequence consistently with the fixed coloring of $\broot$. We color all path gadgets $\pgad{i}{j}$ in the same way for all values of $j$. Recall that $\sol$ includes the same vertices in all path gadgets on the $i$th path sequence. Assume that $\pgad{i}{j}$ is the $h$th path gadget on the $\ell$th bundle, and let $\pi_{\ell}(h) = s = (\chi_1, \chi_2, \chi_3)$. Let $\nxt(s) = (\chi_4,\chi_5,\chi_6)$. Fix $q \in [6]$, let $\chi = \chi_q$. We show how to color the simple path gadget $Y = Y_q$.  If $\chi \in \{\black, \white\}$, let $\chi' = \{\black,\white\}\setminus \chi$. We assign the color $\chi$ to $v$. The vertex $w$ is isolated, and hence, can be colored arbitrarily. Finally, we assign the color $\chi'$ to the two vertices added to subdivide the edges $\{v, x_{\black}\}$ and $\{v, x_{\white}\}$. On the other hand, if $\chi \in \{\conn, \disc\}$, then the vertices $u, z$ and the vertices added to subdivide edges are all isolated, and can be colored arbitrary.
    
    This already colors all simple path gadgets $Y_1,\dots Y_6$. Finally, the vertices $z_s$ are only connected with triangles to other vertices in $G$, and hence, are isolated in $G'$ and can be colored arbitrarily. It is easy to see that the described coloring induces a proper 2-coloring in each path gadget. So we only have to check the edges between the exit vertices of a path gadget and the entry vertices of the following path gadget. For $\chi \in \{\black, \white\}$, if $\chi \in \{\chi_3,\chi_4,\chi_5\}$, then $\chi \notin \{\chi_1, \chi_2, \chi_3\}$. Hence, if $v^h(\pgad{i}{j})$ is assigned $\chi$ for some $h\in\{3,4,5\}$, then $v^{h'}(\pgad{i+1}{j})$ is not assigned $\chi$ for all $h'\in\{1,2,3\}$.
\end{proof}

Finally, we prove the other direction of the reduction. In order to achieve this, we make some observations on the structure of an optimal solution.
From now on, let $\sol$ be a given connected odd cycle transversal of $G$ of size $\budget$.

\begin{lemma}\label{lem:lb-b-tight}
    For each $X\in \lbfamily$, and $Q \in \lbsets(X)$, let $S_Q = \sol\cap Q$. Then it must hold that $|S_Q| = \lbound(Q)$. Moreover, it holds that
    \[
        S = \dot{\bigcup\limits_{X\in\lbfamily}}
        \left(\dot{\bigcup\limits_{Q\in\lbsets(X)}}S_Q\right).
    \]
\end{lemma}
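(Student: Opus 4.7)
The plan is to use the tightness of $\budget$ as a sum of per-subset lower bounds, together with the fact that any odd cycle transversal of $G$ restricts to one on every induced subgraph, to force $\sol$ to be distributed exactly according to those lower bounds and to contain no vertex outside the specified subsets. First I would verify that the family $\mathcal{Q} = \{Q \colon X \in \lbfamily,\, Q \in \lbsets(X)\}$ is pairwise vertex-disjoint in $G$. Gadgets of $\lbfamily$ are introduced on disjoint vertex sets and every triangle vertex is private to the triangle that created it; within a fixed path gadget the sets $\inctri{\{v^i, u^i, w^i, z^i\}}$, $\inctri{\{x^i_{\conn}, x^i_{\disc}, x^i_{\black}, x^i_{\white}\}}$ and $\inctri{\{z_s \colon s \in \pstates\}}$ share none of their listed non-triangle vertices, and the $\inctri{\cdot}$ closure only picks up triangle vertices whose endpoints are both inside the corresponding set, so the triangle vertices are partitioned as well.

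Next, for each $Q \in \mathcal{Q}$ I would establish the local lower bound $|\sol \cap Q| \ge \lbound(Q)$ by showing that the odd cycle transversal number of $G[Q]$ is at least $\lbound(Q)$; this suffices because $\sol \cap Q$ is an odd cycle transversal of $G[Q]$. The easy cases are direct: a deletion pair $\inctri{\{u_{\pi}, w_{\pi}\}}$, each of $\inctri{\{r\}}, \inctri{\{g_1\}}, \inctri{\{g_2\}}$, and a clause gadget each induce an odd cycle (a triangle or an odd cycle of length $d'$ or $d'+1$), so each requires at least one deletion; and for $\inctri{\{v^i, u^i, w^i, z^i\}}$ the four triangles $\tri{v^i}{u^i}, \tri{u^i}{w^i}, \tri{u^i}{z^i}, \tri{z^i}{v^i}$ have no common vertex, forcing two deletions. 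The more delicate cases are $\inctri{\{x^i_{\conn}, x^i_{\disc}, x^i_{\black}, x^i_{\white}\}}$ and $\inctri{\{z_s \colon s \in \pstates\}}$, which induce a $K_4$ and a $K_{12}$ in which every edge has additionally been subdivided by a degree-two triangle vertex. For these I would use a case distinction on the number $k$ of retained original clique vertices: if $k \ge 3$ the induced $K_k$ already contains an odd cycle, while if $k \le 2$ each remaining pair of clique vertices still spans an auxiliary triangle with its triangle vertex, which can only be broken by deleting the triangle vertex itself. A short arithmetic check then yields minima of $3$ and $11$, matching $\lbound$.

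Combining the two steps with $|\sol| = \budget$,
\[
    \budget = |\sol| \ge \sum_{Q \in \mathcal{Q}} |\sol \cap Q| \ge \sum_{Q \in \mathcal{Q}} \lbound(Q) = \budget,
\]
so every inequality must be an equality, giving $|\sol \cap Q| = \lbound(Q)$ for each $Q \in \mathcal{Q}$ and $\sol \subseteq \bigcup_{Q \in \mathcal{Q}} Q$, which together yield the claimed disjoint-union decomposition. The main obstacle is the per-$Q$ lower bound for the two clique-like subsets, where the odd cycle transversal lower bound for the underlying complete graph has to be combined with the additional subdividing triangle vertices; the remaining ingredients are routine disjointness and arithmetic bookkeeping.
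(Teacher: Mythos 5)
Your proposal is correct and follows essentially the same route as the paper: establish a per-component lower bound $|\sol \cap Q| \ge \lbound(Q)$ for each $Q$, then invoke disjointness and the exact identity $\budget = \sum_Q \lbound(Q)$ to force equality everywhere and confine $\sol$ to $\bigcup_Q Q$. One small terminology slip: the clique edges in $\inctri{\{x^i_{\chi}\}}$ and $\inctri{\{z_s\}}$ are not \emph{subdivided} by the triangle vertices (the edge $\{x_s,x_t\}$ is still present, with the triangle vertex as a pendant third vertex); your subsequent casework already treats them correctly as triangles, so the argument is unaffected.
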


\begin{proof}
    We show that $|S_Q| \geq \lbound(Q)$ must hold. The equality and the fact that $S_Q$ does not contain any other vertices follows from the tightness of $\budget$, since all components $Q$ are pairwise disjoint.

    If $Q = \inctri{\{x\}}$ for $x\in\{r,g_1,g_2\}$, $Q=\cgadget$ for a clause gadget $\cgadget$, or $Q=\inctri{\{u_{\pi}, w_{\pi}\}}$ for a deletion pair in a decoding gadget, the claim follows from the fact, that $Q$ induces an odd cycle in $G$.
    For a path gadget $\pgadget$, we distinguish the different types of the set $Q$.

    \begin{itemize}
        \item Let $Q = \inctri{\{x^i_{\conn}, x^i_{\disc}, x^i_{\black}, x^i_{\white}\}}$ for some $i\in[6]$. Since there is a triangle between any two vertices $x^i_{\chi}$ and $x^i_{\chi'}$ for $\chi \neq \chi'$, we need to include at least three of these four vertices, or their private neighbors in order to hit all triangles. Hence, it must hold that $|S_Q\geq 3|$.
        
        \item Let $Q = \inctri{\{v^i, u^i, w^i, z^i\}}$. Let $S_1 = \{v^i, w^i\}$, and $S_2 = \{u^i, z^i\}$. If $S_1\not \subseteq S_Q$, then it must hold that $S_2 \subset S_Q$ since we added all triangles between each vertex of $S_1$ and each vertex of $S_2$.
        
        \item For $Q = \inctri{\{z_s \colon s \in \pstates\}}$, similarly to the first case, there is a triangle between any pair of vertices in $\{z_s \colon s \in \pstates\}$. Hence, we need to include at least $12-1 = 11$ of these vertices (or their private neighbors) in $S$ in order to hit all triangles. Hence, it holds that $S_Q\geq 11$.
    \end{itemize}
    
\end{proof}

\begin{lemma}\label{lem:lb-sol-no-tri}
    Any solution $\sol$ of size $\budget$ cannot contain any triangle vertices.
\end{lemma}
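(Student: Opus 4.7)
The plan is to combine the tightness guaranteed by \cref{lem:lb-b-tight} with the connectivity of $G[\sol]$. First, $|\sol| = \budget \geq 3$ (since the components $\inctri{\{r\}}, \inctri{\{g_1\}}, \inctri{\{g_2\}}$ already contribute three vertices to $\budget$), so connectivity forces every vertex of $\sol$ to have at least one neighbor in $\sol$. Moreover, every triangle vertex $w$ has by construction exactly two neighbors $u, v$ in $G$.

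If $u$ and $v$ do not lie together in any $Q \in \lbsets(X)$ for some $X \in \lbfamily$, then $w$ itself belongs to no such $Q$, so the partition statement of \cref{lem:lb-b-tight} immediately rules out $w \in \sol$. This handles, for instance, the triangle vertices arising from $\tri{u}{x^i_\chi}$, $\tri{w}{x^i_\chi}$, $\tri{z_s}{x^i_\chi}$, and $\tri{z_s(\pgadget_i)}{u_\pi}$, whose endpoints straddle two different components of $\lbsets$.

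For the remaining triangle vertices, both endpoints $u, v$ lie in a common $Q$ (so $w \in Q$). For each such $Q$ I would use the tightness $|\sol \cap Q| = \lbound(Q)$ to argue that $w \in \sol$ forces $u, v \notin \sol$, leaving $w$ isolated in $G[\sol]$ and contradicting connectivity. For the single-triangle components (the triangles at $r, g_1, g_2$ and each deletion pair $\inctri{\{u_\pi, w_\pi\}}$) this is immediate since $\lbound(Q) = 1$. For $Q = \inctri{\{v^i, u^i, w^i, z^i\}}$ with $\lbound = 2$, a short enumeration of size-$2$ covers of the four triangles $\tri{v^i}{u^i}, \tri{u^i}{w^i}, \tri{u^i}{z^i}, \tri{z^i}{v^i}$ shows that the only cover containing a triangle vertex is $\{u^i, w_{z^i, v^i}\}$, which leaves $z^i, v^i \notin \sol$ and hence isolates $w_{z^i, v^i}$. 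For $Q = \inctri{\{x^i_\chi\}_{\chi \in \States}}$ with $\lbound = 3$ and $\binom{4}{2}=6$ triangles, the analogous enumeration forces any size-$3$ cover containing $w_{x^i_s, x^i_t}$ to take $\{x^i_\chi : \chi \in \States \setminus \{s, t\}\}$ as its other two elements, again isolating the chosen triangle vertex.

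The main obstacle is the transition-vertex component $Q = \inctri{\{z_s : s \in \pstates\}}$ with $\lbound = 11$ and $\binom{12}{2}$ triangles. Here I would set $Z = \sol \cap \{z_s\}_s$ and $W = (\sol \cap Q) \setminus Z$, and observe that every pair $z_s, z_{s'}$ with both outside $\sol$ forces $w_{s,s'} \in W$. Combined with $|Z| + |W| = 11$ this gives $|W| \geq \binom{12 - |Z|}{2} = \binom{|W|+1}{2}$, whose only integer solutions are $|W| \in \{0, 1\}$. In the case $|W| = 1$, the unique element of $W$ has both endpoints outside $\sol$, leaving it isolated and completing the contradiction; the case $|W| = 0$ is exactly what we want.
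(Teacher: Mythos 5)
Your proof is correct, but it takes a genuinely different route from the paper's. The paper's argument is shorter and more robust: given a triangle vertex $w\in\sol$, it deletes $w$ and observes that $\sol'=\sol\setminus\{w\}$ is still a connected odd cycle transversal, because $w$ has degree two in $G$ with $N(w)\subseteq N(u)\cup\{u\}$ for a non-triangle neighbor $u\in\sol$ (so $G[\sol']$ stays connected), and $w$ has degree at most one in $G\setminus\sol'$ (so any proper $2$-coloring extends). Since $|\sol'|=\budget-1$, this contradicts the tightness established in \cref{lem:lb-b-tight}. You instead analyze the $\lbsets$ partition directly: a triangle vertex whose triangle endpoints straddle two distinct components is absent from every $Q$ and hence excluded by the partition statement of \cref{lem:lb-b-tight}; for the remaining cases you enumerate the minimum-cardinality covers inside each component type and show that any cover containing a triangle vertex isolates it in $G[\sol]$, contradicting connectivity (your $|W|\leq 1$ inequality handles the transition-vertex component elegantly). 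Both approaches lean on connectivity plus tightness of $\budget$, but the paper's "delete and shrink" argument is uniform across all triangle types and insensitive to details of the gadget layout, whereas your component-by-component enumeration is more explicit but must be re-audited if the gadgets change. One small imprecision in your write-up: for triangle vertices created at a \emph{single} vertex (at $r$, $g_1$, $g_2$), one of the "two neighbors $u,v$" is itself a triangle vertex, so the membership criterion for $w\in\inctri{S}$ should be stated in terms of the base set $S$ rather than $u$ and $v$ lying in a common $Q$; this causes no actual error here, since those triangles occur exactly in the singleton components $\inctri{\{r\}},\inctri{\{g_1\}},\inctri{\{g_2\}}$ that you already handle.
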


\begin{proof}
    Since $G$ contains at least three vertex disjoint odd cycles (the triangles at $r, g_1$ and $g_2$ for instance), it must holds that $|\sol| \geq 3$. Let $w$ be a triangle vertex in $\sol$. Since $G[\sol]$ is connected, and since the non-triangle neighbors of $w$ separates $w$ from the rest of the graph, there must exist a neighbor $u$ of $w$ that belongs to $\sol$ and is not a triangle vertex. But each odd cycle containing $w$ must containing $u$ as well, since $w$ has degree two in $G$. We claim that $\sol' = \sol \setminus \{w\}$ is a connected odd cycle transversal of $G$ of size $\budget -1$ which contradicts \cref{lem:lb-b-tight}.
    Since the neighbors of $w$ are a subset of the neighbors of $u$, $G[\sol']$ is connected. Since $w$ has degree at most $1$ in $G\setminus \sol'$, we can extend any proper 2-coloring of $G\setminus S$ to a proper 2-coloring of $G\setminus \sol'$ by assigning a different color to $w$ than its non-solution neighbor (if exists). 
\end{proof}

\begin{corollary}\label{lem:lb-tri-imply-sol}
    It holds for each vertex $v$ with a triangle added at $v$ that $v\in \sol$, and for each pair $u,v$ with a triangle added between $u$ and $v$ that $\{u,v\}\cap S\neq \emptyset$.
\end{corollary}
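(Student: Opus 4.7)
The plan is to derive this directly as a short consequence of \cref{lem:lb-sol-no-tri}, together with the fact that every added triangle is an odd cycle that must be hit by the transversal $\sol$.

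First, consider a vertex $v$ at which a triangle was added: by construction this introduces two new triangle vertices $u_v, w_v$ that are adjacent to each other and both adjacent to $v$, so $\{v, u_v, w_v\}$ induces a triangle in $G$. Since $\sol$ is an odd cycle transversal, $\sol \cap \{v, u_v, w_v\} \neq \emptyset$. By \cref{lem:lb-sol-no-tri}, neither $u_v$ nor $w_v$ (both triangle vertices) lies in $\sol$, and hence $v \in \sol$.

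Next, consider a pair $u, v$ between which a triangle was added: by construction a single triangle vertex $w_{u,v}$ is introduced and made adjacent only to $u$ and $v$, and the edge $\{u, v\}$ is present, so $\{u, v, w_{u,v}\}$ induces a triangle. Again $\sol$ must hit this triangle, and $w_{u,v} \notin \sol$ by \cref{lem:lb-sol-no-tri}, which forces $\{u,v\} \cap \sol \neq \emptyset$.

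I do not anticipate a genuine obstacle here; the argument is essentially a one-liner once \cref{lem:lb-sol-no-tri} is in hand. The only care needed is to spell out that the construction of \emph{both} kinds of triangles (at a vertex, or between two vertices) genuinely yields a triangle subgraph in $G$, so that the odd cycle transversal property applies.
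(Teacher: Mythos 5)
Your proof is correct and takes exactly the same route as the paper: apply \cref{lem:lb-sol-no-tri} to rule out triangle vertices, then use that an odd cycle transversal must hit the triangle created by the construction, forcing a non-triangle vertex into $\sol$. The paper merely compresses this into one sentence.
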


\begin{proof}
    This follows from \cref{lem:lb-sol-no-tri}, since $v$ in the former case, and $u$ and $v$ in the latter, are the only non-triangle vertices on odd cycles of length three.
\end{proof}

\begin{definition}\label{def:lb-pgad-state}
    Let $\pgadget$ be some path gadget. Let $z_s$ be the only transition vertex of $\pgadget$ not in $\sol$ for some $s \in \pstates$. Then we call $s$ the \emph{state} of $\pgadget$ defined by $\sol$ (denoted by $\statef_{\sol}(\pgadget)$). Similarly, for a simple path gadget $Y$, let $x_{\chi}$ be the only state vertex of $Y$ not in $\sol$ for some $\chi\in\States$. Then we call $\chi$ the \emph{state} of $Y$ defined by $\sol$ ($\statef_{\sol}(Y)$). We omit the subscript $\sol$ when clear from context.
\end{definition}

\begin{lemma}\label{lem:lb-state-implies-simple-state}
    Let $\pgadget$ be some path gadget. Let $s = (\chi_1,\chi_2,\chi_3)$ be the state of $\pgadget$, and let $\nxt(s) = (\chi_4,\chi_5,\chi_6)$. Then, it must hold that $\chi_i$ is the state of $Y_i$ for all $i\in [6]$.
\end{lemma}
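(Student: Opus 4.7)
The plan is to combine \cref{lem:lb-b-tight} with \cref{lem:lb-sol-no-tri} and the triangle structure built into each path gadget. By \cref{def:lb-pgad-state}, the state $s = \statef(\pgadget)$ is defined by $z_s \notin \sol$, and moreover $z_{s'} \in \sol$ for all $s' \neq s$ (since \cref{lem:lb-b-tight} forces exactly $11$ of the $12$ vertices $z_{s'}$ into $\sol$, and \cref{lem:lb-sol-no-tri} excludes the triangle vertices of $\inctri{\{z_{s'} \colon s' \in \pstates\}}$). Fix $i \in [6]$. Applying \cref{lem:lb-b-tight} to the set $\inctri{\{x^i_\conn, x^i_\disc, x^i_\black, x^i_\white\}}$ together with \cref{lem:lb-sol-no-tri} yields that exactly three of the four state vertices of $Y_i$ lie in $\sol$, so to conclude $\statef(Y_i) = \chi_i$ it suffices to show that for every $\chi \in \States \setminus \{\chi_i\}$ the vertex $x^i_\chi$ belongs to $\sol$.

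Recall from the construction of $\pgadget$ that for each such $\chi \neq \chi_i$ we added the triangle $\tri{z_s}{x^i_\chi}$. This triangle is an odd cycle in $G$ and must therefore be hit by the odd cycle transversal $\sol$. By \cref{lem:lb-sol-no-tri} the private triangle vertex $w_{z_s, x^i_\chi}$ is not in $\sol$, and by the choice of $s$ we have $z_s \notin \sol$; consequently $x^i_\chi \in \sol$. Ranging over the three values $\chi \neq \chi_i$ produces three distinct state vertices of $Y_i$ forced into $\sol$, which, together with the tight count above, leaves $x^i_{\chi_i}$ as the unique state vertex of $Y_i$ outside $\sol$, i.e.\ $\statef(Y_i) = \chi_i$.

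I do not anticipate a genuine obstacle here: the claim is essentially a bookkeeping consequence of the way the triangles between transition vertices and state vertices were placed in the construction, combined with the two tightness lemmas (\cref{lem:lb-b-tight} and \cref{lem:lb-sol-no-tri}) that have already been established. The only subtlety worth spelling out carefully is that the argument must be applied uniformly to all six indices $i \in [6]$, including $i \in \{4,5,6\}$ where $\chi_i$ comes from $\nxt(s)$ rather than from $s$ itself, but the construction adds the relevant triangles for all $i \in [6]$ symmetrically, so the same reasoning applies verbatim.
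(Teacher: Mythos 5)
Your proof is correct and follows essentially the same approach as the paper: the key step in both is that for each $\chi \neq \chi_i$ the construction added the triangle $\tri{z_s}{x^i_\chi}$, and since $z_s \notin \sol$ by definition of $\statef(\pgadget)$ and triangle vertices are excluded from $\sol$ by \cref{lem:lb-sol-no-tri}, the vertex $x^i_\chi$ is forced into $\sol$. The paper phrases this as a one-line contradiction citing \cref{lem:lb-tri-imply-sol} (which packages the triangle-hitting argument you spell out explicitly), so the two proofs differ only in presentation, not substance.
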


\begin{proof}
    Assume this is not the case. Let $i\in [6]$, with $x^i_{\chi} \notin \sol$ for some $\chi\neq \chi_i$. But then it holds that both $x^i_{\chi}$ and $z_s$ are not in $\sol$ which contradicts \cref{lem:lb-tri-imply-sol} since we added a triangle between these two vertices.
\end{proof}

\begin{lemma}\label{lem:lb-state-imply-v}
    Let $Y$ be a simple path gadget. We define $V_Y^{\sol} = (V(Y)\cup\{r\})\cap \sol$. Let $\chi = \statef(Y)$. Then one of the following is true:
    \begin{itemize}
        \item $\chi = \conn$, $v \in \sol$, and $v$ is reachable from $r$ in $G[V_Y^{\sol}]$,
        
        \item $\chi = \disc$, $v \in \sol$, and $v$ is isolated in $G[V_Y^{\sol}]$,
        
        \item $\chi = \black$, $v \notin \sol$, and $\gamma(v) = \black$ for each proper 2-coloring of $G\setminus \sol$ with colors $\black$ and $\white$ that extends $\broot\mapsto\black$,
        
        \item or $\chi = \white$, $v \notin \sol$, and $\gamma(v) = \white$ for each proper 2-coloring of $G\setminus \sol$ with colors $\black$ and $\white$ that extends $\broot\mapsto\black$.
    \end{itemize}
\end{lemma}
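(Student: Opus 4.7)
The plan is to proceed by case analysis on $\chi \in \States$, first pinning down exactly which vertices of $Y$ lie in $\sol$. Applying \cref{lem:lb-b-tight} to the component $\inctri{\{x_{\conn}, x_{\disc}, x_{\black}, x_{\white}\}}$ together with \cref{lem:lb-sol-no-tri}, exactly three of the state vertices lie in $\sol$, and by definition of $\statef(Y)$ these are precisely $x_{\chi'}$ for $\chi' \neq \chi$. Applying the same pair of lemmas to $\inctri{\{v,u,w,z\}}$ yields that either $\{v,w\} \subseteq \sol$ or $\{u,z\} \subseteq \sol$. The cross triangles $\tri{u}{x_{\black}}, \tri{u}{x_{\white}}, \tri{w}{x_{\conn}}, \tri{w}{x_{\disc}}$ then decide between the two via \cref{lem:lb-tri-imply-sol}: if $\chi \in \{\black, \white\}$ then the triangle involving $x_{\chi}$ forces $u \in \sol$, hence $\{u,z\} \subseteq \sol$ and $v \notin \sol$; if $\chi \in \{\conn, \disc\}$ then the corresponding triangle forces $w \in \sol$, hence $\{v,w\} \subseteq \sol$ and $v \in \sol$.

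For $\chi = \conn$, the vertex $x_{\disc}$ is in $\sol$ and is adjacent both to $v$ (via the direct edge $\{v, x_{\disc}\}$ of the simple path gadget) and to $r$, yielding the path $r - x_{\disc} - v$ in $G[V_Y^{\sol}]$, so $v$ is reachable from $r$. For $\chi = \disc$, the non-triangle-vertex neighbours of $v$ inside $Y$ consist of $u, z, x_{\disc}$ together with the two subdivision vertices on the edges $\{v, x_{\black}\}$ and $\{v, x_{\white}\}$; in this case all five of them lie outside $\sol$, and $v$ is not adjacent to $r$, so $v$ is isolated in $G[V_Y^{\sol}]$.

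For the remaining cases $\chi \in \{\black, \white\}$, $v \notin \sol$ and I will trace the forced $2$-coloring along the subdivided edge from $v$ to $x_{\chi}$. Write $x'_{\black}$ for the auxiliary vertex introduced by the ``colour $x_{\black}$ black'' step, and $s_{\black}, s_{\white}$ for the subdivision vertices on the edges $\{v, x_{\black}\}$ and $\{v, x_{\white}\}$. By \cref{lem:lb-b-tight}, none of $x'_{\black}, s_{\black}, s_{\white}, \broot$ lies in $\sol$, since none of them belongs to any component of $\lbfamily$. Hence whenever $x_{\black} \notin \sol$, the path $\broot - x'_{\black} - x_{\black}$ in $G \setminus \sol$ propagates $\gamma(\broot) = \black$ to $\gamma(x_{\black}) = \black$; symmetrically, whenever $x_{\white} \notin \sol$, the direct edge $\{x_{\white}, \broot\}$ forces $\gamma(x_{\white}) = \white$. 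In case $\chi = \black$, continuing the propagation along $x_{\black} - s_{\black} - v$ forces $\gamma(v) = \black$; in case $\chi = \white$, the symmetric propagation along $x_{\white} - s_{\white} - v$ forces $\gamma(v) = \white$.

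The main anticipated obstacle is purely bookkeeping: I must verify that every auxiliary vertex used above (the colour-helper $x'_{\black}$, the subdivision vertices $s_{\black}, s_{\white}$, and $\broot$ itself) indeed lies outside $\sol$, which is crucial for the constraint-propagation in the colour cases. This follows from \cref{lem:lb-b-tight}, which characterises $\sol$ as exactly the disjoint union of the enumerated sets $S_Q$ over components of $\lbfamily$, and these auxiliary vertices appear in no such component.
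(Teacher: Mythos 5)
Your proof is correct and follows essentially the same approach as the paper's: you first use \cref{lem:lb-b-tight} together with the cross-triangles $\tri{u}{x_{\black}}, \tri{u}{x_{\white}}, \tri{w}{x_{\conn}}, \tri{w}{x_{\disc}}$ to determine whether $\{v,w\}\subseteq\sol$ or $\{u,z\}\subseteq\sol$ based on $\chi$, then argue reachability/isolation of $v$ through $x_{\disc}$ in the $\conn$/$\disc$ cases, and finally trace the forced $2$-coloring along the path from $\broot$ through $x_{\chi}$ and the subdivision vertex to $v$. The only difference is that you spell out more explicitly the bookkeeping that the auxiliary vertices ($x'_{\black}$, the subdivision vertices, $\broot$) lie outside $\sol$ via \cref{lem:lb-b-tight}, which the paper leaves implicit when asserting the path lies in $G\setminus\sol$.
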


\begin{proof}
    Since $\budget(\{v,u,w,z\}) = 2$, it must hold that either $\{v, w\} \subseteq \sol$ and $\{u, z\}\cap \sol = \emptyset$, or $\{u, z\}\subseteq \sol$ and $\{v, w\}\cap \sol = \emptyset$. If $\chi \in \{\conn, \disc\}$, then the former must be the case due to the triangle $\tri{w}{x_{\chi}}$, and otherwise  the latter must be the case due to the triangle $\tri{u}{x_{\chi}}$.

    If $\chi = \conn$, then $v$ is reachable from $r$ through the vertex $x_{\disc}$. On the other hand, if $\chi = \disc$, then $v$ is isolated in $G[V_Y^{\sol}]$.

    Finally, assume that $\chi = \black$, or $\chi = \white$, and that $v\notin \sol$.
    Let $\gamma$ be an arbitrary proper $2$-coloring of $G\setminus \sol$ that extends $\broot \mapsto \black$. Then there is a fixed path from $\broot$ to $v$ in $G\setminus S$ that goes from $\broot$ to $x_{\chi}$ (in the former case through an added white vertex), and then to $v$ through the subdivision vertex between $v$ and $x_{\chi}$. In the former case the path has an even length and hence, it must hold that $\gamma(v) = \black$, while in the latter the path has an odd length and hence, $\gamma(v) = \white$ must hold.
\end{proof}

\begin{lemma}\label{lem:lb-states-transition-direction}
    For each two consecutive path gadgets $\pgadget$ and $\pgadget'$, let $s_i$ be the state of $\pgadget$ and $s_{i'}$ be the state of $\pgadget'$. Then it holds that $i'\leq i$.
\end{lemma}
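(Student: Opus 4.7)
The plan is to use~\cref{lem:lb-state-implies-simple-state,lem:lb-state-imply-v} to read off, from the triples $s_i=(\chi_1,\chi_2,\chi_3)$ with $\nxt(s_i)=(\chi_4,\chi_5,\chi_6)$ for $\pgadget$ and $s_{i'}=(\chi'_1,\chi'_2,\chi'_3)$ for $\pgadget'$, exactly which of the six biclique-endpoints $v^4,v^5,v^6(\pgadget)$ and $v^1,v^2,v^3(\pgadget')$ lie in $\sol$ and, for those that do not, the color forced in any proper $2$-coloring of $G\setminus\sol$ extending $\broot\mapsto\black$. The complete bipartite graph between these six endpoints then yields a coloring constraint and a connectivity constraint that jointly restrict the admissible pairs $(s_i,s_{i'})$.

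The coloring constraint is immediate: since every exit of $\pgadget$ is adjacent to every entry of $\pgadget'$, any two colored vertices on opposite sides of the biclique must receive different colors; hence $\{\chi_4,\chi_5,\chi_6\}\cap\{\black,\white\}$ and $\{\chi'_1,\chi'_2,\chi'_3\}\cap\{\black,\white\}$ are disjoint subsets of $\{\black,\white\}$.

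The connectivity constraint uses~\cref{lem:lb-sol-no-tri} together with the tightness supplied by~\cref{lem:lb-b-tight}, which together imply that no triangle or edge-subdivision vertex of a simple path gadget belongs to $\sol$. Consequently, a vertex $v^h(\pgadget)$ with $\chi_h=\disc$ has no $\sol$-neighbor inside $Y_h(\pgadget)$ (its only candidates $u^h,z^h,x^h_{\disc}$ are excluded, and all remaining neighbors are triangle or subdivision vertices), so its only potential $\sol$-neighbors are its three biclique-partners in $\pgadget'$; the symmetric statement holds for $v^{h'}(\pgadget')$ with $\chi'_{h'}=\disc$. Because $G[\sol]$ is connected, this forces three conditions: (a)~if $\disc\in\{\chi_4,\chi_5,\chi_6\}$ then $\{\conn,\disc\}\cap\{\chi'_1,\chi'_2,\chi'_3\}\neq\emptyset$; (b)~the symmetric statement swapping the roles of $\pgadget$ and $\pgadget'$; and (c)~if any of $\chi_4,\chi_5,\chi_6,\chi'_1,\chi'_2,\chi'_3$ equals $\disc$ then at least one of them equals $\conn$, because otherwise the $\sol$-component formed across the biclique would consist solely of $\disc$-vertices with no further $\sol$-neighbors, hence be disconnected from $r$.

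The proof concludes by a direct case analysis on the twelve values of $i$. For each $i$, the coloring constraint restricts the admissible entry profiles of $s_{i'}$ to those avoiding the colors present in $\nxt(s_i)$, and the three connectivity clauses eliminate the remaining bad candidates; enumeration shows that the surviving indices $i'$ always satisfy $i'\leq i$ (for instance, $i=1$ admits only $i'=1$; $i=5$ admits $\{1,\dots,5\}$; $i=12$ admits every $i'$). The main obstacle is organising the casework cleanly; I expect to streamline it by grouping the rows according to the multiset of values taken by $\nxt(s_i)$, so that only a handful of qualitatively distinct profiles need to be verified explicitly.
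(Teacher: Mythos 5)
The proposal is correct and takes essentially the same approach as the paper: you derive a coloring constraint from the biclique between exit and entry vertices, derive a connectivity constraint from the fact that $\disc$-vertices have no $\sol$-neighbors inside their own gadget (your (a)--(c) are logically equivalent to the paper's conditions~2 and~3), and then reduce to a finite case analysis over the twelve states, which both you and the paper leave as a routine check.
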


\begin{proof}
    Assume this is not the case. Then there exists two consecutive path gadgets $\pgadget$ and $\pgadget'$ with states $s_i$ and $s_{i'}$ respectively, such that $i' > i$.
    Let $\nxt(s_i) = (\chi_1, \chi_2, \chi_3)$, let $s_{i'} = (\chi'_1, \chi'_2, \chi'_3)$.
    Let $v_1 = v^4(\pgadget)$, $v_2 = v^5(\pgadget)$, $v_3 = v^6(\pgadget)$ be the exit vertices of $\pgadget$, and 
    Let $v'_1 = v^1(\pgadget')$, $v'_2 = v^2(\pgadget')$, $v_3 = v'^3(\pgadget')$ be the entry vertices of $\pgadget'$.
    We will be interested in the sets $U = \{v_1, v_2, v_3\}$, and $U' = \{v'_1, v'_2, v'_3\}$. 
    We define the sets $L = \{\chi_1, \chi_2, \chi_3\}$ and $R = \{\chi'_1, \chi'_2, \chi'_3\}$. 
    
    From the construction of $G$, it holds that $(U, U')$ induces a complete bipartite subgraph of $G$ with bipartition $(U,U')$.
    Hence, the following claims follows from \cref{lem:lb-state-implies-simple-state} and \cref{lem:lb-state-imply-v}.
    \begin{itemize}
        \item It must hold that $(L \cap R) \cap \{\black, \white\} = \emptyset$.
        \item If $\disc \in  L$, then it holds that $\conn \in R$ or that both $\disc \in R$ and $\conn \in L$.
        \item If $\disc \in  R$, then it holds that $\conn \in L$ or that both $\disc \in L$ and $\conn \in R$.
    \end{itemize}
    The first claim follows from the assumption that $G \setminus \sol$ admits a proper 2-coloring $\gamma$ with colors $\black$ and $\white$. Having the same color $\black$ or $\white$ in both sets $L$ and $R$, implies by \cref{lem:lb-state-imply-v} that $\gamma$ assigns the same colors to a vertex in $U$ and a vertex in $U'$, which contradicts the assumption that $\gamma$ is a proper 2-coloring, since $(U, U')$ induces a complete bipartite graph in $G$.
    
    For the second claim, let $\chi_i = \disc$ for some $i\in[3]$. \cref{lem:lb-state-imply-v} implies that $v_i \in \sol$, but not adjacent to any other vertex in its path gadget in $G[\sol]$. Since $G[\sol]$ is connected, it must hold that there is a vertex $v'_j \in \sol$ for some $j\in[3]$, which implies that $\{\conn, \disc\} \cap R \neq \emptyset$. However, if $\conn \notin R$, then it must hold that $\conn \in L$, since the set $\{v_i \colon \chi_i = \disc\} \cup \{v'_i \colon \chi'_i = \disc \}$ is not empty, and would otherwise build a connected component of $G[\sol]$. The last claim is symmetric to the second, and follows analogously.

    By iterating over the list of states, it is not hard to see that whenever $i' \geq i$ holds, one of these three claims is contradicted.
\end{proof}

\begin{lemma}\label{lem:lb-sat-if-sol}
    If $G$ admits a connected odd cycle transversal of size $\budget$, then $I$ is satisfiable.
\end{lemma}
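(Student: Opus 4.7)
The plan is to extract a satisfying Boolean assignment of $I$ directly from the rigid structure that the preceding lemmas impose on $\sol$. By Lemmas~\ref{lem:lb-b-tight}--\ref{lem:lb-state-implies-simple-state} every path gadget $\pgadget$ carries a well-defined state $\statef(\pgadget) = s_i \in \pstates$, and by Lemma~\ref{lem:lb-states-transition-direction} these state indices are non-increasing along every path sequence $P_i$; since $i$ ranges over $[12]$, each $P_i$ can therefore exhibit at most $11$ strict decreases.

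The first real step will be a pigeonhole argument on sections. Call a section index $j^*$ \emph{stable} if for every path sequence $P_i$ all $m$ path gadgets of $\seg{i}{j^*}$ share the same state. Each drop of $P_i$ that occurs \emph{inside} some segment destabilises exactly that segment, so $P_i$ destabilises at most $11$ section indices; across all $n'$ path sequences at most $11 n'$ section indices are destabilised, leaving at least one stable $j^*$ among the $11 n' + 1$ available. Fixing such a $j^*$, for each variable group $V_\ell$ the states of the path gadgets in $\bundle{\ell}{j}$ yield the same state assignment $\pi^*_\ell \in \pstates^{[t]}$ for every column $j$ of $\sect{j^*}$.

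Next I will read off from the decoding gadgets which deletion-pair vertices are forced. Fix $\ell$ and a column $j$ of $\sect{j^*}$, and let $\pi \in \pstates^{[t]}$ with $\pi \ne \pi^*_\ell$. Picking an index $i$ with $\pi(i) \ne \pi^*_\ell(i)$, the construction supplies a triangle between $u_\pi$ and the vertex $z_{\pi^*_\ell(i)}$ of the $i$th path gadget of the bundle, which is not in $\sol$ by the definition of $\statef$. Corollary~\ref{lem:lb-tri-imply-sol} therefore forces $u_\pi \in \sol$, and tightness of $\budget$ (Lemma~\ref{lem:lb-b-tight}) then gives $w_\pi \notin \sol$. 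Hence $w_\pi \in \sol$ is possible only for $\pi = \pi^*_\ell$.

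In the last step I will use these constraints to extract the assignment. The $m$ columns of $\sect{j^*}$ realise all $m$ clauses of $I$, one per column; let $\cgad{j_h}$ be the clause gadget attached to the column corresponding to $C_h$. By tightness $\cgad{j_h}$ contributes exactly one vertex to $\sol$. This vertex cannot be the ``parity patch'' $c_0$, since $c_0$ has only its two cycle-neighbours (both excluded from $\sol$) and would then be disconnected from $r$ in $G[\sol]$, so it is some $c_i$ with $i \in [d']$. Connectivity of $G[\sol]$ demands a solution neighbour of $c_i$ outside the clause gadget, and by construction the only candidates are the vertices $w_\pi$ of $\dgad{\ell}{j_h}$ (with $V_\ell = V^i_h$) where $\pi = \tau(\alpha)$ for some partial assignment $\alpha$ of $V_\ell$ satisfying $C^i_h$. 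Combined with the previous paragraph this forces $\pi^*_\ell = \tau(\alpha_\ell)$ for such a satisfying partial assignment $\alpha_\ell := \tau^{-1}(\pi^*_\ell)$, well-defined by injectivity of $\tau$. Since $\pi^*_\ell$ does not depend on which column of $\sect{j^*}$ was examined, the partial assignments $\alpha_\ell$ glue into a global assignment $\alpha := \bigcup_\ell \alpha_\ell$ satisfying every clause of $I$. The most delicate points will be the pigeonhole count and the ``$c_0$ would disconnect'' observation; everything else is a mechanical consequence of the structural lemmas already in place.
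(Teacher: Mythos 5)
Your proof is correct and follows essentially the same route as the paper: a pigeonhole over the $11n'+1$ sections to find one where every segment is state-constant, then reading off a state assignment $\pi^*_\ell$ per variable group, forcing $w_\pi\in\sol$ only for $\pi=\pi^*_\ell$ via triangles, and using connectivity of $G[\sol]$ through the clause gadgets of that section to verify every clause. You are slightly more careful than the paper at two minor points (explicitly ruling out the parity-patch vertex $c_0$ as the solution vertex of a clause gadget, and spelling out the triangle-forcing that pins down $\pi$), but these are details the paper leaves implicit rather than a different argument.
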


\begin{proof}
    Let $\sol$ be a solution of size $\budget$. Let us fix some path sequence, and consider the path gadgets $\pgad{i}{1} \dots \pgad{i}{c}$ on this path sequence. Let $s_{i_1} \dots s_{i_c}$ be the states of these path gadgets respectively. Then by \cref{lem:lb-states-transition-direction} it holds that $i_j \leq i_{j+1}$ for all $j\in[c]$. Since we have exactly 12 states, there can exist at most 11 different indices $j$ with $i_j \neq i_{j+1}$. Since we have $n'$ path sequences, we can have at most $11\cdot n'$ columns, where some path gadget on each of these columns admitting a different state than the following path gadget. Hence, at most $11\cdot n'$ sections exist, with some path gadget in each of these sections having a different state than the following path gadget. Since there exist $11\cdot n' + 1$ sections in $G$, there exists at least one section in $G$, where each path gadget in this section admits the same state as the following path gadget.
    
    Let us fix one such section $\sect{r}$ for $r \in [11\cdot n'+1]$. For each segment $\seg{i}{r}$ in this section ($i\in [n']$), let $s_i$ be the state of all path gadgets on this segment.
    Let us group these states into groups, each of size $t$, and define the mappings $\pi_1, \dots \pi_s \in [12]^{[t]}$ where $\pi_{\ell}$ corresponds to the $\ell$th group of states. For $i\in[t]$, we define $\pi_{\ell}(i)$ to be the $i$th state in the $\ell$th group, i.e.\
    \[
        \pi_{\ell}(i) = s_{(\ell-1)t + i}.
    \]
    Finally, we define the Boolean assignments $\alpha_1,\dots \alpha_s$, where $\alpha_{\ell}$ is defined over $V_{\ell}$. We define $\alpha_{\ell} = \tau^{-1}(\pi_{\ell})$, if such an assignment exists, or set $\alpha_{\ell}$ to a fixed arbitrary assignment otherwise.

    We claim that $\alpha_1\cup \dots\cup \alpha_s$ is a satisfying assignment of $I$. In order to see this, let $C_j$ be some clause for $j\in [m]$. Let us consider the $j$th column of the $r$th section. Since $\sol$ is a solution, there must exist a vertex $c$ of the clause gadget attached to his column in the solution. Since $G[\sol]$ is connected, $c$ must have a neighbor $w_{\pi}$ on the $\ell$th decoding gadget in the solution as well. Since this implies that $u_{\pi}$ is not in the solution, it must hold that $\pi = \pi_{\ell}$, and from the construction of $G$, it must hold that $\alpha_{\ell}$ satisfies $C_j$.
\end{proof}

\begin{proof}[Proof of \cref{theo:lower-bound}]
    Assume that the \Coctp\ problem can be solved in time $\ostar((12-\varepsilon)^k)$ given a $k$-expression $\mu$ of $G$. We show that there exists a positive constant $\delta$, such that the $d$-SAT problem can be solved in time $\ostar((2-\delta)^n)$ for all values of $d$, which contradicts SETH. Given an instance $I$ of the $d$-SAT problem, first we build the graph $G$, described above, and then we run the given algorithm on the instance $(G, \budget)$.
    It holds from \cref{lem:lb-sol-if-sat} and \cref{lem:lb-sat-if-sol} that $G$ admits a connected odd cycle transversal of size $\budget$ if and only if $I$ is satisfiable.
    
    Now we show that this algorithm runs in time $\ostar((2-\delta)^n)$ for some positive value $\delta$. The graph $G$ together with a $k$-expression of $G$, for $k = n' + k_0$ for some constant $k_0$, can be constructed in polynomial time by \cref{lem:lb-cw-bound}. Hence, the given algorithm runs in time 
    \[
        \ostar((12-\varepsilon)^k) = \ostar((12-\varepsilon)^{n'}).    
    \]
    It holds that 
    \begin{align*}
        n' &= s\cdot t\\
           &= \left\lceil\frac{n}{t_0}\right\rceil \cdot t\\
           &\leq \left(\frac{n}{t_0} + 1\right) \cdot t\\
           &= \frac{n}{t_0} \left\lceil t_0 \log_{12}{2}\right\rceil + t\\
           &\leq \frac{n}{t_0} \left( t_0 \log_{12}{2} + 1\right) + t\\
           &= n\left(\log_{12}{2} + 1/t_0\right) + t.\\
    \end{align*}
    Hence, the algorithm runs in time 
    \[
    \ostar\left( (12-\varepsilon)^{n\left(\log_{12}{2} + 1/t_0\right) + t}\right)
    = \ostar\left( (12-\varepsilon)^{n\left(\log_{12}{2} + 1/t_0\right)}\right).
    \]
    We claim that one can choose $t_0$ in such a way that
    \[(12-\varepsilon)^{\log_{12} 2 + 1/t_0} < 2,\]
    which implies that there exists a positive $\delta$ with
    \[(12-\varepsilon)^{\log_{12} 2 + 1/t_0} \leq 2 - \delta.\]
    This implies that the algorithm runs in time
    \[
        \ostar\left((2-\delta)^n\right),
    \]
    which contradicts SETH.
    Hence, our goal is to find a positive constant integer $t_0$ such that 
            \[(12-\varepsilon)^{\log_{12} 2 + 1/t_0} < 2,\]
            or in other words
        \[\log_{12} (12-\varepsilon) \cdot (\log_{12} 2 + 1/t_0) < \log_{12}2.\]
    Let $\alpha = \log_{12}(12-\varepsilon)$. We choose $t_0$ to be any constant integer larger than $\frac{\alpha}{\log_{12} 2 \cdot (1-\alpha)}$.
    Then it holds that
    \begin{align*}
        \alpha \cdot (\log_{12} 2 + 1/t_0)
        &< \alpha  \left(\log_{12} 2 + \frac{\log_{12} 2 \cdot (1-\alpha)}{\alpha}\right)\\
        &= \alpha  \left(\frac{\alpha \log_{12} 2}{\alpha} + \frac{\log_{12} 2 \cdot (1-\alpha)}{\alpha}\right)\\
        &= \frac{\alpha \cdot \log_{12}2}{\alpha} = \log_{12}2.
    \end{align*}

\end{proof}

\section{Conclusion}\label{sec:conclusion}
In this work, we have provided a single-sided error Monte Carlo algorithm for the \Coctp problem with running time $\ostar(12^{\cw})$, presenting a new application of the "representative isolation" technique introduced by Bojikian and Kratsch \cite{DBLP:conf/stacs/BojikianCHK23}. We proved that the running time is tight assuming SETH. This closes the second open problem posed by Hegerfeld and Kratsch \cite{DBLP:conf/esa/HegerfeldK23}. However, one more interesting benchmark problem is still open, namely the \Fvsp problem. As mentioned in their paper, solving this problem using the \cnc technique is challenging, since a direct application would require counting edges induced by partial solutions. Another approach would be to follow the approach of Bojikian and Kratsch \cite{DBLP:journals/corr/abs-2307-14264/BojikianK23}. The difficulty of such application lays in finding a low $\bin$-rank representation of partial solutions.

Since both related techniques, namely the \cnc technique \cite{DBLP:journals/talg/CyganNPPRW22} and the rank-based approach \cite{DBLP:journals/iandc/BodlaenderCKN15}, were used to solve connectivity problems more efficiently under different structural parameters, one would ask whether the approach of Bojikian and Kratsch \cite{DBLP:journals/corr/abs-2307-14264/BojikianK23} can also be adjusted to different settings, resulting in tight bounds for connectivity problems under different parameters, where there also exists a gap between the rank of a corresponding consistency matrix, and the size of a largest triangular submatrix thereof.

\paragraph*{Acknowledgement.} 
The authors are grateful to Vera Chekan for her detailed comments on presentation, and to Falko Hegerfeld for several discussions on the lower bound presented here.

\bibliography{ref}

\end{document}